\documentclass[12pt]{article}
\usepackage{geometry}
\usepackage{amsthm, amsmath, amssymb, bbm, bm, mathrsfs}
\usepackage[round]{natbib}
\usepackage[colorlinks, 
			linkcolor=blue,
			anchorcolor=blue,
			urlcolor=blue,
			citecolor=blue]{hyperref}
\usepackage{graphicx}
\usepackage{mathrsfs}
\usepackage{threeparttable}
\usepackage[title]{appendix}
\usepackage{url}
\usepackage{diagbox}
\usepackage{booktabs}
\usepackage{caption}
\usepackage{subfigure}
\usepackage{float}
\usepackage{listings}
\usepackage{multirow}
\usepackage{rotating}
\usepackage{longtable}
\usepackage{enumitem}
\usepackage{cases}
\usepackage{algorithm}
\usepackage{algorithmic}
\usepackage{ulem}

\usepackage{xr}

\newtheorem{theorem}{Theorem}[section]
\newtheorem{corollary}{Corollary}[section]

\newtheorem{proposition}{Proposition}[section]

\newtheorem{lemma}{Lemma}[section]

\newtheorem{assumption}{Assumption}

\newtheoremstyle{mycase}{5pt}{5pt}{\upshape}{}{\bfseries}{.}{ }{} \theoremstyle{mycase}

\newtheoremstyle{myexample}{5pt}{5pt}{\upshape}{}{\bfseries}{.}{ }{} \theoremstyle{myexample}

\renewcommand{\theequation}{\thesection.\arabic{equation}}
\numberwithin{equation}{section}

\renewcommand{\hat}{\widehat}
\renewcommand{\tilde}{\widetilde}

\newcommand{\Var}{\mathrm{Var}}
\newcommand{\bX}{\mathbf{X}}
\newcommand{\bx}{\mathbf{x}}
\newcommand{\bbeta}{\bm{\beta}}
\newcommand{\btheta}{\bm{\theta}}
\newcommand{\bgamma}{\bm{\gamma}}
\newcommand{\bkappa}{\bm{\kappa}}
\newcommand{\bdelta}{\bm{\delta}}
\newcommand{\bzeta}{\bm{\zeta}}
\newcommand{\bxi}{\bm{\xi}}
\newcommand{\mbR}{\mathbb{R}}

\newcommand{\mbE}{\mathbb{E}}
\newcommand{\mbB}{\mathbb{B}}
\newcommand{\mbC}{\mathbb{C}}
\newcommand{\mcD}{\mathcal{D}}
\newcommand{\mcL}{\mathcal{L}}
\newcommand{\mbA}{\mathbf{A}}
\newcommand{\bu}{\bm{u}}
\newcommand{\bv}{\bm{v}}

\newcommand{\bSigma}{\mathbf{\Sigma}}

\newcommand{\mbX}{\mathbf{X}}
\newcommand{\bindicator}{\mathbb{I}}

\begin{document}
\allowdisplaybreaks[3] 
	
\title{\bf Renewable high-dimensional expected shortfall regression}
\author{Haochen Rao$^1$, Tingzi Weng$^1$, Yifan Jiang$^2$ and Xu Guo$^1$\thanks{Corresponding author: Xu Guo. Email address: xustat12@bnu.edu.cn.}\\~\\
	{\small \it $^{1}$ School of Statistics, Beijing Normal University, Beijing, China} \\
    {\small \it $^{2}$ Department of Statistics, Pennsylvania State University, Pennsylvania, United States
     }
}

\date{}
\maketitle

\vspace{-0.5in}
	
\begin{abstract}

Expected Shortfall (ES) has become a core coherent risk measure in finance and statistics, and high-dimensional ES regression is crucial for characterizing heterogeneous tail risk with massive covariates.  Existing offline methods for high-dimensional ES regression rely on access to full data, which fails under streaming data scenarios with sequential batch arrival and limited storage. To address this issue, this paper proposes a renewable estimation and inference framework for high-dimensional ES regression tailored to streaming data. By optimizing a surrogate loss function determined only by current data and historical information, the proposed procedure updates the estimator of ES regression coefficients without storing full raw data. Based on the online estimator, we design an online debiased estimator and further construct valid Wald-type confidence intervals using consistent variance estimation. Theoretically, we establish non-asymptotic error bounds for the online high-dimensional ES estimator and verify the asymptotic normality of the online debiased estimator. Extensive simulations show that the proposed method achieves estimation accuracy and inference performance comparable to the offline benchmark. Moreover, an application on the car insurance claim dataset demonstrates strong practical value in insurance risk management.

\medskip
		
\noindent {\it Keywords:} Streaming data; Expected shortfall; High-dimensional regression; Debiased inference 
	
\end{abstract}

\section{Introduction}

In modern financial risk management, accurate tail risk measurement has become a core academic and practical issue. The limitations of traditional Value-at-Risk (VaR) have been exposed in several recent financial crises. Subsequently, the \textbf{Expected Shortfall (ES)} has risen as a coherent risk measure \citep{acerbi2002oncohES} and has become the mainstream market risk supervision standard in the Basel Accord    (\url{https://www.bis.org/bcbs/publ/d457.htm}). The ES measure has been widely used in financial regulation \citep{Embrechts2014Risks,Zaevski2023IRFA}, portfolio optimization \citep{Krokhmal2001JoRisks,DeMiguel2009Opt}, and insurance pricing \citep{Frees2013Book,Brownlees2017SRISK}. Instead of quantifying the loss at a specified level, ES measures the conditional expectation of losses exceeding the quantile threshold. Compared with VaR, ES provides a more comprehensive assessment of tail risk, leading to extensive research on estimation and inference methods in statistical and financial literature.

The basic framework of coherent risk measures laid the theoretical foundation for ES (also referred to as Conditional Value-at-Risk, CVaR) \citep{Artzner1999MF}. \cite{Rockafellar2000OptCVaR,Rockafellar2002JBF} further proposed the optimization and regression framework for CVaR and provided its convexity and tractability. However, estimating ES in the regression setting is challenging due to the lack of elicitability \citep{Gneiting2011JASA}, which means that no loss exists as the objective function such that ES is the minimizer of it. To tackle this issue, \cite{Dimitriadis2019Joint} developed a joint regression framework for the conditional quantile and the ES, based on the statement that the quantile and the ES are jointly elicitable \citep{Fissler2016AOS}. According to this methodology,  \cite{he2023robustES} further developed a two-step ES regression framework and investigated its asymptotic properties under a moderate-dimensional setting where $p=O(n^{\kappa})$ for some $\kappa\in(0,1)$. With the explosion of big data, high-dimensional ES regression has attracted increasing attention. The most groundbreaking work was \cite{zhang2023hdesreg}, which proposed a two-step $\ell_1$-regularized estimation framework for high-dimensional ES regression and established its non-asymptotic error bounds and decorrelated score inference procedure. \cite{gao2025lintestES} further extended this framework to linear hypothesis testing for high-dimensional ES regression with heavy-tailed errors. Despite the advances, these works all rely on full access to historical data, making them inapplicable to streaming data scenarios where data arrives sequentially in batches. 

In the field of high-dimensional statistics for streaming data, \cite{Bottou2008Nips,Duchi2009} first introduced regularization into high-dimensional online learning. For statistical inference with streaming data, \cite{han2021delasso} developed the online debiased lasso estimator, and \cite{luo2021glm} extended this to generalized linear models. \cite{Han2024onlinesgd} further introduced an online debiased stochastic gradient descent method for linear models. For quantile regression, which is closely linked to ES modeling, \cite{Xie2023onlinesmoothquantile} proposed a renewable estimation and inference method for smoothed quantile regression. Moreover, online estimation and inference procedures have been developed for various models, see \cite{wu2021survive,quan2024onepass,jiang2024renewable,rao2025onlinesvm}. For ES regression, the most recent work \cite{Mi2024twosteponlineES} proposed an estimation framework adaptable to online updating. However, they considered only low-dimensional covariates. Gaps still remain at the intersection between high-dimensional ES regression and streaming data analysis. On one hand, existing renewable methods cannot be directly extended to ES regression due to its unique two-step procedure, where the estimation error of the quantile regression in the first step may propagate and accumulate across batches. On the other hand, the orthogonal score of ES regression requires a specially designed estimator to correct extra bias from online updates, requiring more complicated theories than standard linear models. 

To address these challenges, this paper proposes a renewable estimation and inference framework for high-dimensional expected shortfall regression with streaming data. Our main contributions are listed as follows: 

(1). We develop a renewable online estimator for high-dimensional ES regression coefficients. Applying Taylor's expansion, we construct a surrogate loss function relying solely on current batch data and historical summary statistics and free of full historical data. We further establish its non-asymptotic error bounds in terms of $\ell_{\bSigma}$-norm and $\ell_1$-norm, showing that its convergence rate matches the offline estimator with full sample. 

(2). We construct a valid online inference procedure for high-dimensional ES regression. We propose a bias-corrected online debiased estimator for ES coefficients and rigorously prove its asymptotic normality. Furthermore, Wald-type confidence intervals are constructed provided the consistent variance estimators.

(3). We solve the error propagation and accumulation problem in the two-step online estimation. Based on the precise error control conditions for the online quantile regression estimator, we prove that the estimation error will not accumulate with batch updating as long as the initial batch size meets the growth conditions. 

(4). We verify the finite sample performance of the proposed method through extensive simulations and a real data application on car insurance claims. The results show that our method achieves almost the same accuracy as the offline benchmark under storage constraints, with strong practical value in real-time risk management.

The rest of this paper is organized as follows. Section 2 introduces the proposed methodology, including a recap of the offline high-dimensional ES regression framework, the renewable online estimation procedure, and the online debiased inference method. Section 3 presents the main theoretical results for both online estimation and inference procedures. Section 4 conducts numerical studies, including Monte Carlo simulations and a real data application. Section 5 concludes the paper. Detailed proofs of the main theorems, technical lemmas, and additional numerical results are provided in the Appendix.

{\bf Notations.} For a vector $\bm{a}=(a_1, \dots, a_p)^\top \in\mbR^p$, denote $\Vert \bm{a} \Vert_0= \sum_{j=1}^p I(a_j\neq 0)$, $\Vert \bm{a}\Vert_r=(\sum_{j=1}^p|a_j|^{r})^{1/r}$ for $r\geq 1$, and $\Vert \bm{a} \Vert_{\infty}=\max_{j=1,\dots ,p}|a_j|$.  For a real matrix $\mbA=\{a_{ij}\}_{i,j}^p \in \mbR^{p\times p}$, let $\tau_{\max}(\mbA)$ and $\tau_{\min}(\mbA)$ be the maximum and the minimum eigenvalues of $\mbA$, respectively. Define the elementwise max-norm $\Vert \mbA\Vert_{\max}=\max_{i,j}|a_{ij}|$, the elementwise $\ell_1$-norm $\Vert \mbA\Vert_{1}=\sum_{i,j}|a_{ij}|$, the spectral norm $\Vert \mbA\Vert_{2}=\sqrt{\tau_{\max}(\mbA^\top\mbA)}$ and the matrix $L_1$-norm $\Vert \mbA\Vert_{L_1}=\max_j\sum_{i}|a_{ij}|$. Moreover, for a vector $\mathbf{u}\in\mbR^d$ and a symmetric matrix $\mathbf{A}\in\mbR^{d\times d}$, $\Vert\mathbf{u}\Vert_{\mathbf{A}}=\Vert\mathbf{A}^{1/2}\mathbf{u}\Vert_2$.  
For two sequences of non-negative numbers $a_n$ and $b_n$, we denote $a_n=o(b_n)$ if $\lim_{n\to\infty} a_n/b_n=0$, and $a_n=O(b_n)$ or $a_n\lesssim b_n$ if $|a_n|\leq Cb_n$ for some positive constant $C\geq 0$ independent of $n$, $a_n \gtrsim b_n$ is similarly defined, and $a_n\asymp b_n$ if $a_n=O(b_n)$ as well as $b_n=O(a_n)$.

\section{Methodology}
\subsection{Existing offline methods}
We first introduce the basic conceptions. Let $Z$ be a real-valued random variable and $F_Z:\mbR\to[0,1]$ be its cumulative distribution function. At a pre-specified quantile level $\tau\in (0,1)$, define the quantile function $Q_{\tau}(Z) = \inf\{z\in\mbR:F_Z(z)\geq \tau\}$. Furthermore, the lower $\tau$-th ES function is defined as
\begin{align*}
    ES_{\tau}(Z) = \frac{1}{\tau}\int_{0}^{\tau}Q_u(Z) {\rm d}u.
\end{align*}
By definition, the lower $\tau$-th ES of $Z$ summarizes the tail average of $Z$ below the $\tau$-th quantile. The upper $\tau$-th ES is symmetrically defined as the average of the tail over $[\tau,1)$. 

Suppose we observe a random sample $\{ Y_i, \bX_i\}_{i=1}^n$ from $(Y, \bX)\in\mbR\times\mbR^p$. Consider the joint regression of the conditional quantile and expected shortfall of $Y_i$ given $\bX_i$:
\begin{align}
    \label{jointqtesreg}
    Q_{\tau}(Y_i\mid\bX_i) = \bX_i^\top \bbeta^*(\tau), \qquad ES_{\tau}(Y_i\mid\bX_i) = \bX_i^\top \theta^*(\tau), \qquad i=1,\dots,n,
\end{align}
where $\bbeta^*(\tau), \btheta^*(\tau)$ are the true parameters of the quantile regression and the expected shortfall regression, respectively.   Under the high-dimensional regime where $p$ diverges faster than $n$, \cite{zhang2023hdesreg} considered a two-step $\ell_1$-regularized estimation framework to estimate $\btheta^*(\tau)$. Firstly, they estimate $\bbeta^*(\tau)$ through penalized quantile regression:
\begin{align}
    \label{offpenquantest}
    \hat{\bbeta}_{\lambda_q}(\tau) = \arg\min_{\bbeta\in\mbR^p} \frac{1}{n}\sum_{i=1}^n \rho_{\tau}(Y_i-\bX_i^\top\bbeta) + \lambda_q \Vert \bbeta \Vert_1.
\end{align}
Here $\rho_{\tau}(u) = u\{\tau - \mathbf{1}{(u<0)}\}$ is the check loss function. Then, using the estimator obtained from \eqref{offpenquantest}, they estimate $\btheta^*$ by minimizing the penalized orthogonal score as
\begin{align}
    \label{offpenESest}
    \hat{\btheta}_{\lambda_e}(\tau) = \arg\min_{\btheta\in\mbR^p} \frac{1}{2n} \sum_{i=1}^n \Big[Z_i\big\{\hat{\bbeta}_{\lambda_q}(\tau)\big\} -\tau\bX_i^\top\btheta \Big]^2 + \tau\lambda_e \Vert\btheta\Vert_1,
\end{align}
where $Z_i(\bbeta)=(Y_i-\bX_i^\top\bbeta)\bindicator(Y_i\leq \bX_i^\top\bbeta) + \tau\bX_i^\top\bbeta$. For convenience, we omit the hyper-parameter $\tau$ and briefly write $\bbeta^*,\hat{\bbeta}$ and $\btheta^*,\hat{\btheta}$ throughout our analysis. The rationale behind  procedures \eqref{offpenquantest} and \eqref{offpenESest} is the orthogonality of the score function $\psi(\bbeta,\btheta;\bX) := \tau\bX\big[\mbE\{ Z(\bbeta) \mid \bX\} - \tau\bX^\top\btheta\big]$. Under the joint regression model \eqref{jointqtesreg}, we have $\psi(\bbeta^*,\btheta^*;\bX)=0$. Detailed illustration of the motivation can be found at Section 3.2 in \cite{zhang2023hdesreg}.

Furthermore, \cite{zhang2023hdesreg} proposed a valid elementwise inference procedure for $\btheta^*$ based on the decorrelated score \citep{ning2017decorscor}. Suppose we are interested in $\theta_j$ for some $j\in\{1,\dots,p\}$. Let $X_{i,j}$ be the $j$-th element of $\bX_i$ and $\bX_{i,-j}$ denote the subvector of $\bX_i$ without $X_{i,j}$, the projection estimator is given by 
\begin{align}
    \label{offpendecorest}
    \hat{\bgamma}_j = \arg\min_{\bgamma\in\mbR^{p-1}} \frac{1}{2n}\sum_{i=1}^n \big(X_{i,j}-\bX_{i,-j}^\top\bgamma\big)^2 + \lambda_m \Vert\bgamma\Vert_1.  
\end{align}
Define the decorrelated score $S_n(\theta_j,\btheta_{-j},\bbeta,\bgamma)=n^{-1}\sum_{i=1}^n \{Z_i(\bbeta)-\tau X_{i,j}\theta_j -\tau\bX_{i,-j}^{\top}\btheta_{-j}\}(X_{i,j}-\bX_{i,-j}^\top\bgamma)$. The debiased estimator is then given as
\begin{align}
    \label{offdebiasest}
    \tilde{\theta}_j = \hat{\theta}_j-\frac{S_n(\hat{\theta}_j, \hat{\btheta}_{-j}, \hat{\bbeta}, \hat{\bgamma})}{\nabla_b S_n(b, \hat{\btheta}_{-j}, \hat{\bbeta}, \hat{\bgamma})\mid_{b=\hat{\theta}_j}}.
\end{align}
Given the consistent variance estimator, confidence intervals for $\theta_j$ can be constructed based on \eqref{offdebiasest}.

\subsection{Online Estimation}
Shift our focus to the online environment. Suppose we observe a total of $b$ batches of \textit{i.i.d.} data arriving sequentially, denoted as $\mcD_1, \dots, \mcD_b$. Define the batch size $n_s=\vert\mcD_s\vert$ and denote $\mcD_s=\{Y_{si}, \bX_{si}\}_{i=1}^{n_s}$ for $s=1,\dots,b$. Based on the data stream, we aim to construct renewable estimation and inference procedures for the ES regression coefficient $\btheta^*$ in \eqref{jointqtesreg}. Given storage limitations, each batch can be processed only once and will be omitted from the system. Thus, our procedure merely uses the current batch with condensed information from historical data, and is free of the full raw data. 

We now introduce the online estimation method. At the beginning, we observe $\mcD_1$ and apply the aforementioned offline procedures to obtain the estimators in \eqref{offpenquantest},\eqref{offpenESest}, \eqref{offpendecorest} and \eqref{offdebiasest}, denoted severally as $\hat{\bbeta}^{(1)},\ \hat{\btheta}^{(1)},\ \hat{\bgamma}^{(1)}_j$ and $\tilde{\theta}^{(1)}_j$. When the second batch $\mcD_2$ arrives after $\mcD_1$, we aim to update these estimators one by one.
Firstly, we get $\hat{\bbeta}^{(2)}$ using the online estimation for penalized smooth quantile regression proposed by \cite{Xie2023onlinesmoothquantile}. A major concern of their method is the difference from the original check loss in \eqref{offpenquantest} caused by the kernel smoothing. In fact, their estimator is capable of bounding the error with respect to $\bbeta^*$: the $\ell_2$-norm error is bounded by the order of $\sqrt{s_q\log(p)/N_2}$. According to Proposition \ref{propbetaesterr} and Corollary \ref{crlryesterr}, $\hat{\bbeta}^{(2)}$ satisfies the theoretical condition for controlling the error of the ES regression coefficient estimation. Therefore, it is reasonable to use the renewable estimator of \cite{Xie2023onlinesmoothquantile}.   

Next, we aim to get $\hat{\btheta}^{(2)}$. Denote the batch loss $L(\bbeta,\theta;\mcD_s) = (2n_s)^{-1}\sum_{i=1}^{n_s}\big\{ Z_{si}(\bbeta)-\tau\bX_{si}^\top\btheta \big\}^2$ and let $\lambda_e^{(s)}$ be the penalization parameter set at time $s$. Since the previous raw data $\mcD_1$ is deleted, reconstructing the loss $L(\bbeta,\theta;\mcD_1)$ becomes the central problem. Inspired by the smoothness of loss function $L$, we apply Taylor's expansion at $\hat{\btheta}^{(1)}$ as
\begin{align*}
    L(\hat{\bbeta}^{(1)},\btheta;\mcD_1) =& \frac{1}{2n_1}\sum_{i=1}^{n_1}\big\{ Z_{1i}(\hat{\bbeta}^{(1)})-\tau\bX_{1i}^\top\btheta \big\}^2 \\
    =& \frac{1}{2n_1}\sum_{i=1}^{n_1}\big\{ Z_{1i}(\hat{\bbeta}^{(1)})-\tau\bX_{1i}^\top\hat{\btheta}^{(1)} \big\}^2 -  \frac{\tau}{n_1}\sum_{i=1}^{n_1} \big\{ Z_{1i}(\hat{\bbeta}^{(1)})-\tau\bX_{1i}^\top\hat{\btheta}^{(1)} \big\}\bX_{1i}^\top \big(\btheta-\hat{\btheta}^{(1)}\big) \\
    &+ \frac{\tau^2}{2n_1} \big(\btheta-\hat{\btheta}^{(1)}\big)^\top \sum_{i=1}^{n_1} \bX_{1i}\bX_{1i}^\top \big(\btheta-\hat{\btheta}^{(1)}\big) + O\big(\Vert\btheta-\hat{\btheta}^{(1)}\Vert_2^2 \big) \\
    := & A_0 + A_1 + \frac{\tau^2}{2n_1} \big(\btheta-\hat{\btheta}^{(1)}\big)^\top \sum_{i=1}^{n_1} \bX_{1i}\bX_{1i}^\top \big(\btheta-\hat{\btheta}^{(1)}\big) + R_1. 
\end{align*}
Note that $A_0$ is a constant and could be ignored for optimization. Moreover, $\hat{\btheta}^{(1)}$ is the minimizer computed on $\mcD_1$. According to the KKT condition, $A_1=\lambda_e^{(1)}\kappa_1 (\btheta-\hat{\btheta}^{(1)})$, where $\kappa_1$ is a subgradient of $\Vert\cdot\Vert$ at $\hat{\btheta}^{(1)}$. By H{\"o}lder's inequality, $|A_1| \leq \lambda_e^{(1)} \Vert\kappa_1\Vert_{\infty} \Vert\btheta-\hat{\btheta}^{(1)}\Vert_1 = O\big(\lambda_e^{(1)}\Vert\btheta-\hat{\btheta}^{(1)}\Vert_1\big)$. Given $\lambda_e^{(1)}\asymp \sqrt{\log(p)/n_1}$, $A_1$ is then trivial relative to the quadratic term. Finally, the remainder $R_1$ is asymptotically negligible, thus we singly keep the quadratic term and employ the following surrogate loss function:
\begin{align}
    \label{D2surrloss}
    \mcL_{2}(\btheta) = \frac{1}{N_2} \Big[\frac{n_1}{2}\tau^2\big(\btheta-\hat{\btheta}^{(1)}\big)^\top\hat{\bSigma}_1\big(\btheta-\hat{\btheta}^{(1)}\big) + n_2 L(\hat{\bbeta}^{(2)},\btheta;\mcD_2) \Big]
\end{align}
where $\hat{\bSigma}_1=n_1^{-1}\sum_{i=1}^{n_1}\bX_{1i}\bX_{1i}^\top$. The renewable estimator for $\btheta^*$ is then given as 
\begin{align}
    \label{b2esregest}
    \hat{\btheta}^{(2)} = \arg\min_{\btheta\in\mbR^p} \mcL_2(\btheta) + \lambda_e^{(2)} \Vert\btheta\Vert_1.
\end{align}
The estimation in \eqref{b2esregest} only requires the current data $\mcD_2$ and the previous sample covariance $\hat{\bSigma}_1$. Analogously, at the time $\mcD_b$ is collected, the surrogate loss function is defined as 
\begin{align}
    \label{Dbsurrloss}
     \mcL_{b}(\btheta) = \frac{1}{N_b} \Big[\frac{N_{b-1}}{2} \tau^2\big(\btheta-\hat{\btheta}^{(b-1)}\big)^\top\tilde{\bSigma}_{b-1}\big(\btheta-\hat{\btheta}^{(b-1)}\big) + n_b L(\hat{\bbeta}^{(b)},\btheta;\mcD_b) \Big],
\end{align}
where $\tilde{\bSigma}_{b} = N_{b-1}^{-1}\sum_{s=1}^{b}n_s\hat{\bSigma}_{s}$ and $\hat{\bSigma}_{s}=n_s^{-1}\sum_{i=1}^{n_s}\bX_{si}\bX_{si}^\top$. 
The renewable estimator is then calculated as 
\begin{align}
    \label{bbesregest}
     \hat{\btheta}^{(b)} = \arg\min_{\btheta\in\mbR^p} \mcL_b(\btheta) + \lambda_e^{(b)} \Vert\btheta\Vert_1.
\end{align}

\subsection{Online inference}
We then introduce the online inference procedure for $\btheta^*$.
Denote the covariance matrices $\bSigma_{-j}=\mbE(\bX_{si,-j}\bX_{si,-j}^{\top})$, $\bSigma_{-j}^{(s)}=n_s^{-1}\sum_{i=1}^{n_s}\bX_{si,-j}\bX_{si,-j}^\top$ and $\tilde{\bSigma}_{-j}^{(b)} = N_{b}^{-1}\sum_{s=1}^{b}n_s\bSigma_{-j}^{(s)}$.
Firstly, consider the following model projecting the covariate of interest $X_{j}$ into the nuisance covariates $\bX_{-j}$: 
\begin{align}
    \label{modelprojreg}
    X_{si,j}=\bX_{si,-j}^{\top}\bgamma^*+\omega_{si}, \quad \mbE(\omega_{si}\bX_{si,-j})=0, \quad  \ &i=1,\dots,n_s, \notag \\ &s=1,\dots,b,
\end{align}
where $\bgamma^*=\arg\min_{\bgamma\in\mbR^{p-1}}\mbE(X_{j}-\bX_{-j}^{\top}\bgamma)^2=\bSigma_{-j}^{-1}\mbE(X_{j}\bX_{-j})$ under the condition that $\bSigma_{-j}$ is non-singular. For a time $s\in\{1,\dots,b\}$, the online projection lasso estimator is given by
\begin{align}
    \label{onlinegammaest}
    \hat{\bgamma}^{(s)} = \arg\min_{\bgamma\in\mbR^{(p-1)}} \frac{1}{2N_s}\sum_{t=1}^{s}\sum_{i=1}^{n_t}(X_{ti,j}-\bX_{ti,-j}^\top\bgamma)^2 + \lambda_m^{(s)}\Vert\bgamma\Vert_1.
\end{align}
To see why $\hat{\bgamma}^{(s)}$ is feasible, note that the objective function in \eqref{onlinegammaest} depends on data through the summary statistics
\begin{align*}
    T^{(s)}=\frac{1}{N_s}\sum_{t=1}^{s}\sum_{i=1}^{n_s}X_{ti,j}^2, \quad \mathbf{R}^{(s)}=\frac{1}{N_s}\sum_{t=1}^{s}\sum_{i=1}^{n_s}\bX_{ti,-j}\bX_{ti,-j}^{\top}, \quad \bm{U}^{(s)}= \frac{1}{N_s}\sum_{t=1}^{s}\sum_{i=1}^{n_s}\bX_{ti,-j}X_{ti,j}.
\end{align*}
The above statistics are extracted from $\tilde{\bSigma}_{s}$ directly as  sub-matrices, that is $T^{(s)}=(\tilde{\bSigma}_{s})_{j,j}$, $\mathbf{R}^{(s)}=(\tilde{\bSigma}_{s})_{-j,-j}$ and $\bm{U}^{(s)}=(\tilde{\bSigma}_{s})_{-j,j}$. As long as $\tilde{\bSigma}_{s}$ is in storage, these statistics are all available. Therefore, the estimator \eqref{onlinegammaest} is accessible in the online setting.

Moreover, define the score function on the $s$-th batch as
\begin{align*}
  S_{n_s}^{(s)}(\theta_j, \btheta_{-j}, \bbeta, \bgamma) = n_s^{-1}\sum_{i=1}^{n_s} \{Z_{si}(\bbeta)-\tau X_{si,j}\theta_j -\tau\bX_{si,-j}^{\top}\btheta_{-j}\}(X_{si,j}-\bX_{si,-j}^\top\bgamma)  
\end{align*}
and let $\hat{S}_{n_s}^{(s)}(\theta_j) = S_{n_s}^{(s)}\big({\theta}_j, \hat{\btheta}_{-j}^{(s)}, \hat{\bbeta}^{(s)}, \hat{\bgamma}^{(s)} \big)$. Further denote the derivative of $\hat{S}_{n_s}^{(s)}(\theta_j)$ as 
\begin{align}
    \label{derivscore}
    \hat{H}_{n_s}^{(s)}(\theta_j) = \partial_{\theta_j}\hat{S}_{n_s}^{(s)}(\theta_j) = -\frac{\tau}{n_s}\sum_{i=1}^{n_s}X_{si,j} \big(X_{si,j}-\bX_{si,-j}^\top\hat{\bgamma}^{(s)}\big). 
\end{align}
Since $\theta_j$ is excluded in the right hand of \eqref{derivscore}, we leave it out and simply write $\hat{H}_{n_s}^{(s)}$. Moreover, let $\tilde{H}_{N_b}=N_b^{-1}\sum_{s=1}^b n_s\hat{H}_{n_s}^{(s)}$ and
define the cumulative score function as $\tilde{S}_{N_b} = N_b^{-1} \sum_{s=1}^b n_s\hat{S}_{n_s}^{(s)}(\hat{\theta}_j^{(s)})$.
We then construct the following online debiased estimator: 
\begin{align}
    \label{onlinedebest}
    \tilde{\theta}_{j}^{(b)} =  \hat{\theta}_{j}^{(b)} - \tilde{S}_{N_b} / \tilde{H}_{N_b} + N_b^{-1}\sum_{s=1}^b n_s\hat{H}_{n_s}^{(s)}\big( \hat{\theta}_j^{(s)}-\hat{\theta}_j^{(b)} \big) / \tilde{H}_{N_b}.
\end{align}
Denote $W_{N_b}=N_b^{-1}\sum_{s=1}^b n_s \hat{H}_{n_s}^{(s)}\hat{\theta}_j^{(s)}$. To obtain the estimator in \eqref{onlinedebest}, only the current data $\mcD_b$ and the summary statistics $\big\{\hat{\theta}_j^{(b)}, \tilde{\bSigma}_b, \tilde{S}_{N_b}, \tilde{H}_{N_b}, W_{N_b}\big\}$ are involved. Thus, the proposed method is applicable for online updating. The required storage space is $O(p^2)$, which remains fixed as the new batch arrives. Compared to the offline debiased estimator \eqref{offdebiasest}, a second-order term is added on the right side of \eqref{onlinedebest}, which corrects the extra bias stemming from online renewal. 

Furthermore, we will show in Section \ref{sec:theory} that the proposed debiased estimator $\tilde{\theta}_{j}^{(b)}$ is asymptotically normal, thus valid confidence intervals could be constructed for $\theta_j^{*}$ coupled with the estimated variances. Specifically, it is demonstrated in Theorem \ref{theoremasympnorm} that $\sqrt{N_b}\tau(\tilde{\theta}_{j}^{(b)}-\theta_j^{*})$ converges in distribution to $N(0,\sigma_I^2/\sigma_{\omega}^4)$, where $\sigma_{\omega}^2=\mbE(\omega_{si}^2)$ and $\sigma_{I}^2 = \mbE\{\omega_{si}^2\Var(\varepsilon_{si,-}\mid\bX)\}$ with  $\varepsilon_{si}=Y_{si}-\bX_{si}^\top\bbeta^*$ and $\varepsilon_{si,-}=\min(\varepsilon_{si},0)$. We then establish consistent estimators of $\sigma_I^2$ and $\sigma_{\omega}^2$ and construct Wald-type confidence intervals. Let $\hat{S}_q^{(s)}$, $\hat{S}_e^{(s)}$ and $\hat{S}_m^{(s)}$ be the non-zero sets of estimators $\hat{\bbeta}^{(s)}$, $\hat{\btheta}^{(s)}$ and $\hat{\bgamma}^{(s)}$ with cardinalities $\hat{s}_q^{(s)}$, $\hat{s}_e^{(s)}$ and $\hat{s}_m^{(s)}$, respectively. Due to the $\ell_1$-norm penalty, the estimated sparsities $\hat{s}_q^{(s)}$, $\hat{s}_e^{(s)}$, $\hat{s}_m^{(s)}$ are fairly small, so we reasonably assume that $N_s > \hat{s}_q^{(s)}+\hat{s}_e^{(s)}+\hat{s}_m^{(s)}$. Then define the following estimators of $\sigma_I^2$ and $\sigma_{\omega}^2$:
\begin{align}
    \label{varest}
    \hat{\sigma}_{I,(b)}^2 = \frac{1}{N_b-\hat{s}_q^{(b)}-\hat{s}_e^{(b)}-\hat{s}_m^{(b)}} \sum_{s=1}^{b}\sum_{i=1}^{n_s}\hat{\omega}_{si}^2\big\{e_{si}(\hat{\bbeta}^{(s)})\big\}^2, \quad \hat{\sigma}_{\omega,(b)}^2 = \frac{1}{N_b-\hat{s}_m^{(b)}}\sum_{s=1}^{b}\sum_{i=1}^{n_s}\hat{\omega}_{si}^2,
\end{align}
where $\hat{\omega}_{si}=X_{si,j}-\bX_{si,-j}^\top\hat{\bgamma}^{(s)}$ and $e_{si}(\bbeta)=Z_{si}(\bbeta)-\tau\bX_{si}^\top\btheta^*$. We will show in Theorem \ref{theoremconsvarest} that $\hat{\sigma}_{I,(b)}^2$ and $\hat{\sigma}_{\omega,(b)}^2$ are both consistent under certain restrictions on $(n_1,N_b,p,s_m,s_r,s_0)$. The two estimators in \eqref{varest} are online renewable by recording summary statistics $\{\hat{\omega}_{si}, \ e_{si}(\hat{\bbeta}^{(s)})\}$, thus the confidence interval for $\theta_j^*$ at time point $b$ could be given as
\begin{align}
    \label{onlineconfintv}
    \bigg[\tilde{\theta}_{j}^{(b)}-\frac{\hat{\sigma}_{I,(b)}}{\sqrt{N_b}\tau\Phi(1-\alpha/2)\hat{\sigma}_{\omega,(b)}^2} ,\ \tilde{\theta}_{j}^{(b)}+\frac{\hat{\sigma}_{I,(b)}}{\sqrt{N_b}\tau\Phi(1-\alpha/2)\hat{\sigma}_{\omega,(b)}^2}\bigg].
\end{align}

\section{Theory}
In this section, we present a detailed theoretical analysis of the methods discussed in the previous section. Non-asymptotic error bounds are provided for the online estimator in \eqref{bbesregest} and asymptotic normality is established for the online debiased estimator \eqref{onlinedebest}.
\label{sec:theory}
\subsection{Estimation error bounds}
Let $\bSigma=\mbE(\bX\bX^\top)$ be the population covariance matrix and $\sigma_{\bX}=\max_{j\in\{1,\dots,p\}}\Sigma_{j,j}$. Let $\varepsilon = Y-\bX^\top\bbeta^*$ be the residual of the linear quantile regression model in \eqref{jointqtesreg} and $\varepsilon_{-} = \min (\varepsilon,0)$ be its negative part. Define the sparsity level $s_q=\Vert\bbeta^*\Vert_0$, $s_e=\Vert\btheta^*\Vert_0$, $s_m=\Vert\bgamma^*\Vert_0$ and $s_0=\max\{s_q,s_e\}$. Moreover, let $\bm{\Omega}=\bSigma^{-1}$ be the precision matrix and define the sparse set $S_r=\{k\neq r:\bm{\Omega}_{k,r}\neq 0\}$ with cardinality $s_r= \#S_r$. Define ball sets $\mbB_{\bSigma}(r_0)=\{\bu\in\mbR^p:\Vert \bu\Vert_{\bSigma} \leq r_0\}$ and $\mbB_{1}(r_1)=\{\bu\in\mbR^p:\Vert \bu\Vert_{1} \leq r_1\}$. Let $\sup_{\Vert \bu\Vert_2=1} \mbE\{ (\bX^\top \bu)^3\} / [\mbE\{ (\bX^\top \bu)^2\}]^{3/2} = M_3$. For any index $S\subseteq \{1,\dots,p\}$, let $\bdelta_{S}$ be the subvector of $\bdelta$ with elements in $S$. Moreover, let $\mbC(S)=\{\bdelta\in\mbR^p:\Vert\bdelta_{S^c}\Vert_1\leq3\Vert\bdelta_{S}\Vert_1\}$ be an $\ell_1$-cone. The following technical assumptions are made. 

\begin{assumption}
    \label{assumpepsilon}
     The conditional cumulative distribution function $F_{\varepsilon\mid \bX}(\cdot)$ of $\varepsilon$ given $\bX$ is continuous and differentiable, satisfying $|F_{\varepsilon\mid\bX}(t)-F_{\varepsilon\mid\bX}(0)| \leq F_u|t|, \ \forall t\in\mbR$ for some positive constant $F_u$. Moreover, there exist some constants $q_{\varepsilon} \geq \sigma_{\varepsilon}>0$ that $\mbE(\varepsilon_{-}^2\mid \bX) \leq \sigma_{\varepsilon}^2$ and $\mbE(|\varepsilon_{-}|^k\mid\bX) \leq k!\sigma_{\varepsilon}^2 q_{\varepsilon}^{k-2}/2$ for any integer $k\geq 3$.     
\end{assumption}

\begin{assumption}
    \label{assumpxboundRME}
    Assume that there exists some constant $B_{\bX},\ M_4\geq 1$ such that $\Vert\bX\Vert_{\infty} \leq B_{\bx}$ and $\sup_{\Vert \bu\Vert_2=1} \mbE\{ (\bX^\top \bu)^4\} / [\mbE\{ (\bX^\top \bu)^2\}]^{2} \leq M_4$. Moreover, define the following restricted minimum eigenvalue (RME)
    \begin{align*}
        \rho^2 = \inf_{S\subset\{1,\dots,p\}:|S|\leq s}\ \inf_{\bdelta\in\mbC(S)}\frac{\bdelta^\top\bSigma\bdelta}{\Vert\bdelta_{S}\Vert_2^2}>0.
    \end{align*}     
\end{assumption}

Assumption \ref{assumpepsilon} indicates a Lipschitz-continuous conditional distribution function and imposes a Bernstein-type condition for $\varepsilon_{-}$. Assumption \ref{assumpxboundRME} is mild on covariates $\bX$, for example, it is satisfied when $\bX$ is sub-Gaussian. Moreover, assumption \ref{assumpxboundRME} implies that the covariance matrix $\bSigma$ has a positive bounded restricted minimal eigenvalue, which is common in high-dimensional data \citep{martinHDS2019}. Both assumptions are necessary for theoretical analysis of ES regression and are similarly made in \cite{zhang2023hdesreg}.    

\begin{theorem}[Estimation error bounds]
    \label{thmesterrbound}
    Under Assumption \ref{assumpepsilon}-\ref{assumpxboundRME}, for any $s\in\{1,\dots,b\}$, suppose $\hat{\bbeta}^{(s)}\in\bbeta^*+\mbB_{\bSigma}(r_0)\cap\mbB_{1}(r_1)$ for some $r_0,r_1>0$. Given the penalized parameters $\lambda_e^{(s)}\geq\max[4\sigma_{\bX}(2q_{\epsilon}+5B_{\bX}r_1)\{\log(p)/N_s\}^{1/2} ,F_uM_3\rho s_0^{-1/2}r_0^2]$, $\lambda_e^{(s)}\sqrt{N_s}\asymp  \lambda_e^{(s+1)}\sqrt{N_{s+1}}$ and the first sample size
    $n_1 > C\max[M_4\sigma_{\bX}^2\rho^{-2}(s_0+2)\log(2p),\ 2q_{\epsilon}^2B_{\bX}^2\sigma_{\epsilon}^{-2}\sigma_{\bX}^{-2}\log(2p)]$ for some large constant $C>1$, the online $\ell_1$-penalized ES regression estimator in \eqref{bbesregest} satisfies
    \begin{align*}
       \tau\Vert\hat{\btheta}^{(s)} - \btheta^* \Vert_{\bSigma} \leq \frac{4}{\rho} s_e^{1/2} \lambda_e^{(s)}, \quad \tau\Vert\hat{\btheta}^{(s)} - \btheta^* \Vert_1 \leq \frac{52}{\rho^2} s_e \lambda_e^{(s)}
    \end{align*}
    with probability at least $1-5/p$. 
\end{theorem}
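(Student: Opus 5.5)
We argue by induction on $s$. The base case $s=1$ is the offline estimator \eqref{offpenESest} on $\mcD_1$. Its analysis follows the deterministic argument of \cite{zhang2023hdesreg}, combined with the two probabilistic events described below. The inductive step uses a key observation: the surrogate objective $\mcL_s$ is an \emph{exact} quadratic in $\btheta$ with Hessian $\tau^2\tilde{\bSigma}_s$, where $\tilde{\bSigma}_s$ is the pooled sample covariance of all $N_s$ observations. Indeed, $N_{s-1}\tilde{\bSigma}_{s-1}+n_s\hat{\bSigma}_s=N_s\tilde{\bSigma}_s$. Hence, writing $\bdelta=\btheta-\btheta^*$, the basic inequality takes the form
\begin{align*}
\frac{\tau^2}{2}\bdelta^\top\tilde{\bSigma}_s\bdelta\leq -\nabla\mcL_s(\btheta^*)^\top\bdelta+\lambda_e^{(s)}\big(\Vert\btheta^*\Vert_1-\Vert\hat{\btheta}^{(s)}\Vert_1\big),
\end{align*}
with $\bdelta=\hat{\btheta}^{(s)}-\btheta^*$. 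The whole proof then reduces to two tasks: (i) a restricted eigenvalue lower bound for $\tilde{\bSigma}_s$ on the cone $\mbC(S_e)$, and (ii) the bound $\Vert\nabla\mcL_s(\btheta^*)\Vert_\infty\leq \tau\lambda_e^{(s)}/2$. Given (i) and (ii), the standard cone argument gives $\bdelta\in\mbC(S_e)$. It then yields $\tau\Vert\bdelta\Vert_{\bSigma}\leq 4\rho^{-1}s_e^{1/2}\lambda_e^{(s)}$, and the $\ell_1$ bound follows from $\Vert\bdelta\Vert_1\leq 4\Vert\bdelta_{S_e}\Vert_1\leq 4s_e^{1/2}\rho^{-1}\Vert\bdelta\Vert_{\bSigma}$, with constants inflated to $52/\rho^2$ to absorb the empirical-to-population RE conversion.

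For (i), since $\tilde{\bSigma}_s$ is the sample covariance of $N_s\geq n_1$ i.i.d.\ bounded vectors, I will invoke a restricted-eigenvalue concentration result using the fourth-moment bound $M_4$. One option is a one-sided lower bound of Oliveira type; another is a Bernstein bound on $\Vert\tilde{\bSigma}_s-\bSigma\Vert_{\max}$ combined with the cone inequality $\Vert\bdelta\Vert_1^2\leq 16 s_e\Vert\bdelta_{S_e}\Vert_2^2$. Either route shows that $\bdelta^\top\tilde{\bSigma}_s\bdelta\geq \frac12\bdelta^\top\bSigma\bdelta$ on the cone with probability $1-1/p$, provided $n_1\gtrsim M_4\sigma_{\bX}^2\rho^{-2}(s_0+2)\log(2p)$. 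A key point is that this event is monotone in $N_s$, so it holds uniformly over $s$.

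For (ii), I will decompose the gradient as
\begin{align*}
\nabla\mcL_s(\btheta^*)=\frac{1}{N_s}\Big[N_{s-1}\tau^2\tilde{\bSigma}_{s-1}(\btheta^*-\hat{\btheta}^{(s-1)})-\tau\sum_{i=1}^{n_s}\bX_{si}\{Z_{si}(\hat{\bbeta}^{(s)})-\tau\bX_{si}^\top\btheta^*\}\Big].
\end{align*}
The first piece is where the historical error enters, and it is the main obstacle. My plan is to unroll the recursion using the KKT conditions at times $1,\dots,s-1$. Each KKT condition expresses $N_{t-1}\tau^2\tilde{\bSigma}_{t-1}(\hat{\btheta}^{(t)}-\hat{\btheta}^{(t-1)})$ through the batch-$t$ score and a subgradient term of size $N_t\lambda_e^{(t)}$. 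Telescoping then rewrites $N_{s-1}\tau^2\tilde{\bSigma}_{s-1}(\hat{\btheta}^{(s-1)}-\btheta^*)$ as the sum of two parts. The first part is the full-data score $\tau\sum_{t<s}\sum_i\bX_{ti}\{Z_{ti}(\hat{\bbeta}^{(t)})-\tau\bX_{ti}^\top\btheta^*\}$. The second part is $N_{s-1}$ times a subgradient of sup-norm at most $\lambda_e^{(s-1)}$. Because $\lambda_e^{(s)}\sqrt{N_s}\asymp\lambda_e^{(s+1)}\sqrt{N_{s+1}}$, the factor $N_{s-1}\lambda_e^{(s-1)}/N_s$ is $O(\lambda_e^{(s)})$. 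This prevents accumulation, and I expect the care needed in tuning constants here to be the hardest part.

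The remaining pooled score splits into two terms. The first is an oracle part at $\bbeta^*$, namely $\tau N_s^{-1}\sum\bX\{Z(\bbeta^*)-\tau\bX^\top\btheta^*\}$. It is mean zero, since $\psi(\bbeta^*,\btheta^*)=0$. I will bound it via Bernstein under Assumption \ref{assumpepsilon} and a union bound, obtaining $\lesssim\sigma_{\bX}q_\varepsilon\sqrt{\log p/N_s}$. The second is the plug-in error from $Z(\hat{\bbeta}^{(t)})-Z(\bbeta^*)$. I will handle it with Neyman orthogonality: its conditional mean is second order, bounded by $F_u\mbE(\bX^\top\bu)^2|\bX^\top\bu|$, which gives the $F_uM_3\rho s_0^{-1/2}r_0^2$ term. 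The centered fluctuation over $\hat{\bbeta}\in\bbeta^*+\mbB_{\bSigma}(r_0)\cap\mbB_1(r_1)$ is bounded uniformly by a symmetrization/contraction argument using $|Z(\bbeta)-Z(\bbeta')|\leq|\bX^\top(\bbeta-\bbeta')|\leq B_{\bX}r_1$. This yields the $5B_{\bX}r_1\sigma_{\bX}\sqrt{\log p/N_s}$ term, and the second condition on $n_1$ ensures the Bernstein tail is in its sub-Gaussian regime. Finally, a union bound over the concentration events gives probability $1-5/p$.
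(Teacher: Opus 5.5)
Your step (ii) has a genuine gap: unrolling the KKT conditions does not leave a single subgradient. Because $L$ is exactly quadratic in $\btheta$, the expansion of $N_{s-1}\mcL_{s-1}$ at $\hat{\btheta}^{(s-1)}$ has no remainder. Apart from a constant, the only thing the surrogate discards is the first-order term, which by the KKT condition equals $-N_{s-1}\lambda_e^{(s-1)}\bkappa_{s-1}^\top(\btheta-\hat{\btheta}^{(s-1)})$. Iterating with $\mcL_1=L(\hat{\bbeta}^{(1)},\cdot\,;\mcD_1)$ gives, up to an additive constant,
\begin{align*}
N_s\mcL_s(\btheta)=\frac{1}{2}\sum_{t=1}^{s}\sum_{i=1}^{n_t}\big\{Z_{ti}(\hat{\bbeta}^{(t)})-\tau\bX_{ti}^\top\btheta\big\}^2+\sum_{t=1}^{s-1}N_t\lambda_e^{(t)}\bkappa_t^\top\btheta .
\end{align*}
So $\nabla\mcL_s(\btheta^*)$ equals the pooled score plus $N_s^{-1}\sum_{t=1}^{s-1}N_t\lambda_e^{(t)}\bkappa_t$. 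Every earlier subgradient survives, not only $\bkappa_{s-1}$, and this extra term is not noise that concentration can remove. On coordinates $k\in S_e$ whose signal exceeds the estimation error, $\kappa_{t,k}=\mathrm{sign}(\theta_k^*)$ for every $t$, so the summands add coherently. With $\lambda_e^{(t)}\sqrt{N_t}$ of constant order (the regime of Corollary~\ref{crlryesterr}), their total is $\asymp\lambda_e^{(s)}\sum_{t<s}(N_t/N_s)^{1/2}$, which is of order $s\,\lambda_e^{(s)}$ for equal batch sizes. Even the last summand alone is $\asymp(N_{s-1}/N_s)^{1/2}\lambda_e^{(s)}$, comparable to the whole penalty level rather than a small fraction of it. Hence $\Vert\nabla\mcL_s(\btheta^*)\Vert_\infty\le\tau\lambda_e^{(s)}/2$ fails in general, $\bdelta\in\mbC(S_e)$ does not follow, and the condition $\lambda_e^{(s)}\sqrt{N_s}\asymp\lambda_e^{(s+1)}\sqrt{N_{s+1}}$ does not prevent accumulation. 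On those coordinates the linear term acts as an extra $\ell_1$ penalty of level $N_s^{-1}\sum_{t<s}N_t\lambda_e^{(t)}$, so the online estimator carries a matching shrinkage bias. Already with $\bSigma$ equal to the identity, this is incompatible with an $s$-free constant such as $4/\rho$ once the number of equal-size batches is large.

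The paper never unrolls. It keeps the history as the cross term $\frac{N_{s-1}}{N_s}\tau^2\bdelta^{(s-1)\top}\tilde{\bSigma}_{s-1}\bdelta^{(s)}$ and splits $\tilde{\bSigma}_{s-1}=\bSigma+(\tilde{\bSigma}_{s-1}-\bSigma)$. The population part is bounded by the induction hypothesis on $\tau\Vert\bdelta^{(s-1)}\Vert_{\bSigma}$ together with the $\lambda_e\sqrt{N}$ condition (see \eqref{D21bound}). The fluctuation part is bounded by $\Vert\tilde{\bSigma}_{s-1}-\bSigma\Vert_{\max}$ times the $\ell_1$ errors. History therefore enters linearly in $\Vert\bdelta^{(s)}\Vert_{\bSigma}$ rather than as an $\ell_\infty$ perturbation of the gradient. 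This leads to the enlarged cone $\Vert\bdelta^{(s)}\Vert_1\le 13\rho^{-1}s_e^{1/2}\Vert\bdelta^{(s)}\Vert_{\bSigma}$ and a restricted eigenvalue bound on that cone from Lemma A.4 of \cite{zhang2023hdesreg}. That cone, not an RE conversion, is where $52=4\cdot 13$ comes from. Be aware, however, that the cross term contributes $8\tau\rho^{-1}s_e^{1/2}\lambda_e^{(s)}\Vert\bdelta^{(s)}\Vert_{\bSigma}$ to \eqref{mclbupperbound}, and this is omitted in the paper's final display. Keeping it, a bound $K_{s-1}\rho^{-1}s_e^{1/2}\lambda_e^{(s-1)}$ at step $s-1$ only yields $K_s=4+2(N_{s-1}/N_s)^{1/2}K_{s-1}$ at step $s$ (for $\lambda_e^{(t)}\sqrt{N_t}$ constant). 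This is the same accumulation your telescoping exposes. A correct version of either argument has to let the constant depend on $s$, or bound the number of batches.
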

Theorem \ref{thmesterrbound} establishes the non-asymptotic error bounds for the online expected shortfall regression estimator $\hat{\btheta}^{(s)}$, provided that the sample size of the first batch satisfies a growth condition. The convergence rates of $\hat{\btheta}^{(s)}$ under $\ell_{\bSigma}$- and $\ell_1$- norms are associated with $s_e,\lambda_e^{(s)}$ and constant $\rho$. Furthermore, Theorem \ref{thmesterrbound} shows the impact of quantile regression coefficient estimation on the ES regression coefficient estimation in the online setting. It implies that error bounds of the online ES regression estimator $\hat{\btheta}^{(s)}$ are conditioned on the accuracy of the quantile regression estimator plugged in, that is, $\Vert\hat{\bbeta}^{(s)}-\bbeta^*\Vert_{\bSigma}\leq r_0$ and $\Vert\hat{\bbeta}^{(s)}-\bbeta^*\Vert_1\leq r_1$. To specify this prerequisite, we next provides a precise expression for the radius $r_0$ and $r_1$ with respect to $(N_s,p,s_q)$. 

\begin{assumption}
    \label{assumpepsconddense}
    The conditional density function $f_{\varepsilon\mid \bX}(\cdot)$ of $\varepsilon$ given $\bX$  exists and is continuous. Moreover, there exist constants $f_l,\ f_u \ ,f_0>0$ such that $f_l \leq f_{\varepsilon\mid \bX}(0) \leq f_u$ and $|f_{\varepsilon\mid \bX}(t)-f_{\varepsilon\mid \bX}(0)| \leq f_0 |t|$. In addition, suppose $\bSigma$ is positive definite with bounded eigenvalues $\Lambda_1\geq\dots\geq\Lambda_p>0$.
\end{assumption}
\begin{assumption}
    \label{assumpqtestkernel}
    The kernel function $K(\cdot)$ is twice differentiable with bounded first and second derivatives and satisfies: (i) $K(u)\geq 0$ and $K(-u)=K(u)$ for all $u\in\mbR$; (ii) $\int_{-\infty}^{\infty}K(u){\rm d}u=1$; (iii) $\int_{-\infty}^{\infty} u^2K(u){\rm d}u<\infty$; (iv). For $k\geq 1$, denote $\kappa_k=\int_{-\infty}^{\infty}|u|^kK(u){\rm d}u$, $\kappa_l=\min_{|u|\leq 1}K(u)>0$ and $\kappa_u=\sup_{u\in\mbR}K(u) < \infty$.
\end{assumption}

\begin{proposition}
    \label{propbetaesterr}
    Under Assumption \ref{assumpepsconddense} and \ref{assumpqtestkernel}, for any $s=1,\dots,b$, choose the tuning parameter as $\lambda_q^{(s)}=O\big(\sigma_{\bX}\sqrt{\tau(1-\tau)\log(p)/N_s}\big)$ and the bandwidth $h_s\asymp\{s_q\log(p)/N_s\}^{1/4}$. Provided that $n_1\geq\{s_q\log(p)\}^9$, the online smoothed quantile regression estimator $\hat{\bbeta}^{(s)}$ satisfies the error bounds
    \begin{align*}
        \Vert\hat{\bbeta}^{(s)}-\bbeta^*\Vert_1 \leq C_{1,s}s_q\{\log(p)/N_s\}^{1/2}, \quad \Vert\hat{\bbeta}^{(s)}-\bbeta^*\Vert_{\bSigma} \leq C_{0,s}\{s_q\log(p)/N_s\}^{1/2}
    \end{align*}
    with probability at least $1-4/p$, for some positive constants $C_{1,s}, C_{0,s}$ relying on $\sigma_{\bX}$, $B_{\bX}$, $\Lambda_1$, $\Lambda_p$, $f_l$, $f_0$, $f_u$, $\kappa_l$, $\kappa_2$.
\end{proposition}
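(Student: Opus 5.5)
The plan is to import the renewable smoothed quantile regression analysis of \cite{Xie2023onlinesmoothquantile} and verify that its conditions hold under Assumptions \ref{assumpepsconddense}--\ref{assumpqtestkernel}. The proof proceeds by induction on $s$.

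\textbf{Base case $s=1$.} Here $\hat{\bbeta}^{(1)}$ is the offline $\ell_1$-penalized smoothed quantile regression estimator on $\mcD_1$. The first ingredient is a restricted strong convexity property of the convolution-smoothed loss, holding uniformly on a local $\ell_1$-cone. It follows from $f_{\varepsilon\mid\bX}(0)\geq f_l$, the Lipschitz condition with constant $f_0$, $\kappa_l>0$, and the boundedness $\Vert\bX\Vert_\infty\leq B_{\bX}$, via a Rademacher/contraction argument. The second ingredient is a sup-norm bound on the score at $\bbeta^*$. The smoothing bias is of order $f_0\kappa_2 h_1^2$, and the centered part is at most $\sigma_{\bX}\sqrt{\tau(1-\tau)\log(p)/n_1}$ up to constants, by Bernstein's inequality and a union bound over $p$ coordinates. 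The choice $h_1\asymp\{s_q\log(p)/n_1\}^{1/4}$ makes the bias $\sqrt{s_q\log(p)/n_1}$, which is absorbed into the rate. The standard lasso argument then yields the two displayed bounds with probability $1-2/p$.

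\textbf{Inductive step $s-1\to s$.} The renewable loss at time $s$ is a quadratic Taylor surrogate of the historical smoothed loss, weighted by $N_{s-1}$ and the cumulative smoothed Hessian. I would write the first-order condition of the renewable estimator and compare it with the (infeasible) full-data smoothed lasso on $N_s$ observations. The comparison introduces two extra terms.

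\begin{enumerate}
\item[(a)] \emph{Hessian approximation error.} The past Hessians were evaluated at $\hat{\bbeta}^{(t)}$ rather than at a common point. Using the bounded second derivative of $K$, each such difference is bounded by $h_t^{-2}\Vert\hat{\bbeta}^{(t)}-\bbeta^*\Vert$ times the cone norm of the direction.
\item[(b)] \emph{Taylor remainder.} This is quadratic in $\hat{\bbeta}^{(s)}-\hat{\bbeta}^{(s-1)}$, again scaled by $h^{-2}$.
\end{enumerate}

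Both perturbations must be shown to be at most $\lambda_q^{(s)}/2$ in the dual norm, so that the lasso deterministic bound still applies. Under the induction hypothesis, the errors are of order $(s_q\log p/N_{t})^{1/2}$ and $h_t^{-2}\asymp (N_t/(s_q\log p))^{1/2}$. The aggregated perturbation is then of order
\[
N_s^{-1}\sum_t n_t\, s_q^{a}(\log p)^{c}/\sqrt{N_t},
\]
with some fixed powers $a,c$. The requirement $n_1\geq\{s_q\log(p)\}^9$ is exactly what makes this negligible relative to $\sqrt{\log(p)/N_s}$ uniformly over $s$. I would also check that the constants $C_{1,s},C_{0,s}$ do not blow up: they only multiply by factors $1+o(1)$, thanks to the condition $\lambda$-ratio $\sqrt{N_s}\lambda_q^{(s)}$ being constant.

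\textbf{Main obstacle.} The hardest part is the control in (a)--(b) of the accumulated Hessian mismatch across batches. Its growth must be shown not to accumulate, which requires a summation bound such as $\sum_t n_t N_t^{-1/2}\lesssim \sqrt{N_s}$, together with the polynomial initial-size condition. The final probability bound $1-4/p$ would combine the two concentration events, the score bound and the uniform RSC, which are taken on the pooled data rather than batchwise so that no union over $s$ degrades the probability. This is essentially Theorem 2 of \cite{Xie2023onlinesmoothquantile}, so one could cite it after checking that our assumptions imply theirs.
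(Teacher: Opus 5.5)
Your closing sentence is the paper's entire proof. The proposition is stated as a direct consequence of Theorem~2 of \cite{Xie2023onlinesmoothquantile} with their $\gamma=3/4$, so citing it after matching Assumptions~\ref{assumpepsconddense}--\ref{assumpqtestkernel} to theirs is exactly the paper's route.

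The self-contained induction you sketch before that does not close, however. It breaks at the step you call the main obstacle.

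\emph{Your bound (a) gives no decay.} With $h_t^{-2}\asymp\{N_t/(s_q\log p)\}^{1/2}$ and the induction hypothesis, the relative Hessian mismatch is $h_t^{-2}\Vert\hat{\bbeta}^{(t)}-\bbeta^*\Vert_{\bSigma}=O(1)$. It is even of order $s_q^{1/2}$ if you bound $|\bX_{ti}^\top(\hat{\bbeta}^{(t)}-\bbeta^*)|$ by $B_{\bX}\Vert\hat{\bbeta}^{(t)}-\bbeta^*\Vert_1$. Either way it is not $o(1)$.

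\emph{The initial-batch condition cannot act on the form you display.} Since $\sqrt{N_s}\leq\sum_{t\leq s}n_tN_t^{-1/2}\leq 2\sqrt{N_s}$, you get $N_s^{-1}\sum_{t\leq s}n_t s_q^a(\log p)^c N_t^{-1/2}\asymp s_q^a(\log p)^c N_s^{-1/2}$. Its ratio to $\sqrt{\log(p)/N_s}$ is $s_q^a(\log p)^{c-1/2}$, which does not depend on $n_1$ or $s$. So $n_1\geq\{s_q\log(p)\}^9$ cannot be what makes this negligible.

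An initial-batch condition only matters if the per-batch perturbation decays strictly faster, say like $N_t^{-1/2-\epsilon}$, so that $N_t^{-\epsilon}\leq n_1^{-\epsilon}$ can be pulled out. Getting such a rate needs a Hessian bound that does not pay $h_t^{-2}$. One way is to compare each batch Hessian at $\bbeta=\bbeta^*+\bdelta$ with its conditional mean $\mbE\{h^{-1}K((\varepsilon-\bX^\top\bdelta)/h)\mid\bX\}=\int K(u)f_{\varepsilon\mid\bX}(\bX^\top\bdelta+hu)\,{\rm d}u$. By Assumption~\ref{assumpepsconddense} this lies within $f_0(|\bX^\top\bdelta|+\kappa_1 h)$ of $f_{\varepsilon\mid\bX}(0)$, with no negative power of $h$. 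You would then control the empirical fluctuation separately.

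Two smaller points.
\begin{itemize}
\item \emph{Base case.} The smoothing bias cannot in general be placed in the $\ell_\infty$ score condition. Since $h_1^2\asymp\sqrt{s_q\log(p)/n_1}$ while $\lambda_q^{(1)}\lesssim\sqrt{\log(p)/n_1}$, the bias exceeds the tuning parameter by a factor $\sqrt{s_q}$. It has to be carried as a $\Vert\cdot\Vert_{\bSigma}$-dual term, as the paper does with $\mbE\{e_{si}(\hat{\bbeta}^{(s)})\bX_{si}\}$ in the proof of Theorem~\ref{thmesterrbound}.
\item \emph{Probability bound.} Your claim that pooling avoids a union over time is unjustified. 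The induction uses, at every $t<s$, Hessians computed from $\mcD_t$ alone at the data-dependent point $\hat{\bbeta}^{(t)}$, so batch-level events necessarily enter.
\end{itemize}

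All of these issues are sidestepped by relying on the cited theorem, which is all the paper does.
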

Proposition \ref{propbetaesterr} follows directly from Theorem 2 in \cite{Xie2023onlinesmoothquantile} by taking $\gamma=3/4$. It states that $\hat{\bbeta}^{(s)}$ satisfies the conditions in Theorem \ref{thmesterrbound} with $r_0\asymp\{s_q\log(p)/N_s\}^{1/2}$ and $r_1\asymp s_q\{\log(p)/N_s\}^{1/2}$. Combining Theorem \ref{thmesterrbound} and Proposition \ref{propbetaesterr}, we then derive explicit error bounds for $\hat{\btheta}^{(s)}$ relying on $(N_s, p, s_0)$. 

\begin{corollary}
    \label{crlryesterr}
    Under Assumption \ref{assumpepsilon}-\ref{assumpqtestkernel}, for any $s\in\{1,\dots,b\}$, let the penalized parameters $\lambda_q^{(s)},\lambda_e^{(s)}\asymp(\sqrt{\log(p)/N_s})$, the bandwidth $h_s\asymp\{s_q\log(p)/N_s\}^{1/4}$ and the first sample size $n_1\gtrsim \{s_0\log(p)\}^9$. We then have $\Vert\hat{\btheta}^{(s)}-\btheta^*\Vert_{\bSigma}\lesssim \{s_e\log(p)/N_s\}^{1/2}$ and $\Vert\hat{\btheta}^{(s)}-\btheta^*\Vert_1\lesssim s_e\{\log(p)/N_s\}^{1/2}$ with probability at least $1-9/p$.
\end{corollary}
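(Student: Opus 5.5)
The plan is to plug Proposition \ref{propbetaesterr} into Theorem \ref{thmesterrbound} and then simplify the resulting constants. Fix $s\in\{1,\dots,b\}$. Let $\mathcal{E}_q$ be the event on which Proposition \ref{propbetaesterr} holds for $\hat{\bbeta}^{(s)}$. This event has probability at least $1-4/p$, and on it $\hat{\bbeta}^{(s)}\in\bbeta^*+\mbB_{\bSigma}(r_0)\cap\mbB_1(r_1)$ with
\begin{align*}
r_0 = C_{0,s}\{s_q\log(p)/N_s\}^{1/2}, \qquad r_1 = C_{1,s}s_q\{\log(p)/N_s\}^{1/2}.
\end{align*}
The hypotheses of Proposition \ref{propbetaesterr} are in force. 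The bandwidth and $\lambda_q^{(s)}$ are as stated, and $n_1\gtrsim\{s_0\log(p)\}^9\geq\{s_q\log(p)\}^9$ because $s_q\le s_0$.

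Next I check that the choice $\lambda_e^{(s)}\asymp\sqrt{\log(p)/N_s}$ satisfies the lower bound required in Theorem \ref{thmesterrbound}, which has two parts.

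For the first part, $4\sigma_{\bX}(2q_\varepsilon+5B_{\bX}r_1)\sqrt{\log(p)/N_s}$ is of order $\sqrt{\log(p)/N_s}$ as soon as $r_1=O(1)$. That holds if $s_q\sqrt{\log(p)/N_s}\lesssim 1$. Since $N_s\ge n_1\gtrsim\{s_0\log p\}^9$, we get $s_q^2\log(p)/N_s\le s_0^2\log(p)/\{s_0\log p\}^9\to0$, so the condition is met.

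For the second part, $F_uM_3\rho s_0^{-1/2}r_0^2\asymp s_0^{-1/2}s_q\log(p)/N_s\le s_0^{1/2}\log(p)/N_s$. This is bounded by $\sqrt{\log(p)/N_s}$ precisely when $s_0\log(p)/N_s\lesssim1$, which again follows from the growth condition on $n_1$.

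In both parts the hidden constant in $\lambda_e^{(s)}\asymp\sqrt{\log(p)/N_s}$ must be taken large enough relative to $C_{0,s},C_{1,s},\sigma_{\bX},q_\varepsilon,B_{\bX},F_u,M_3,\rho$. The ratio condition $\lambda_e^{(s)}\sqrt{N_s}\asymp\lambda_e^{(s+1)}\sqrt{N_{s+1}}$ holds trivially because both sides are $\asymp\sqrt{\log p}$. The sample-size condition of Theorem \ref{thmesterrbound} requires $n_1\gtrsim s_0\log(2p)+\log(2p)$, and this is implied by $n_1\gtrsim\{s_0\log p\}^9$ for $p\ge 3$.

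Let $\mathcal{E}_e$ be the event on which Theorem \ref{thmesterrbound} holds; its probability is at least $1-5/p$. On $\mathcal{E}_q\cap\mathcal{E}_e$, whose probability is at least $1-9/p$ by a union bound, we obtain
\begin{align*}
\Vert\hat\btheta^{(s)}-\btheta^*\Vert_{\bSigma}\le \frac{4}{\tau\rho}s_e^{1/2}\lambda_e^{(s)}\lesssim\{s_e\log(p)/N_s\}^{1/2},\qquad \Vert\hat\btheta^{(s)}-\btheta^*\Vert_1\le\frac{52}{\tau\rho^2}s_e\lambda_e^{(s)}\lesssim s_e\{\log(p)/N_s\}^{1/2}.
\end{align*}
Here $\tau$ and $\rho$ are treated as constants.

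The main obstacle is a subtlety in how the two results fit together. Theorem \ref{thmesterrbound} for time $s$ implicitly uses the accuracy of the quantile estimators at every earlier time, because the surrogate loss carries $\hat\btheta^{(s-1)}$, which was built from $\hat\bbeta^{(s-1)}$. So the event $\mathcal{E}_q$ should really be the intersection over $t\le s$ of the Proposition \ref{propbetaesterr} events. If the $1-4/p$ guarantee in Proposition \ref{propbetaesterr} (from Theorem 2 of \cite{Xie2023onlinesmoothquantile}) is uniform over $t\le s$, as its derivation via cumulative summary statistics suggests, then the union bound above is exactly right. If it is not uniform, I would argue that the estimator of \cite{Xie2023onlinesmoothquantile} provides a single high-probability event controlling all batches simultaneously, with constants $C_{0,t},C_{1,t}$ bounded uniformly in $t$. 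I would also need the radii $r_0^{(t)},r_1^{(t)}$ to scale with $N_t$ consistently with the ratio condition on $\lambda_e^{(t)}$.
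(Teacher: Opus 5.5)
Your proposal is correct and follows essentially the same route as the paper's proof. Both substitute the radii from Proposition \ref{propbetaesterr} into Theorem \ref{thmesterrbound}, use the growth condition on $n_1$ to show that both parts of the lower bound on $\lambda_e^{(s)}$ are of order $\sqrt{\log(p)/N_s}$, and read off the bounds; your union bound $4/p+5/p$ gives the stated $1-9/p$, and your remark that the inductive proof of Theorem \ref{thmesterrbound} actually relies on the quantile-estimator events at all earlier batches is a fair point the paper does not address, though it is not needed to apply the theorem as stated.
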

As $s$ grows from $1$ to $b$, the $\ell_1$- and $\ell_{\bSigma}$-norm error bounds of $\hat{\btheta}^{(s)}$ become tighter and eventually achieve the orders of $O(\{s_e\log(p)/N_b\}^{1/2})$ and $O(s_e\{\log(p)/N_b\}^{1/2})$, respectively. The final orders align with those of the offline ES regression estimator with full sample, as demonstrated by Theorem 1 in \cite{zhang2023hdesreg}. 

\subsection{Asymptotic normality}
We next investigate the asymptotic normality of the online debiased ES regression estimator $\tilde{\theta}_j$ and demonstrate the consistency of variance estimation. These results further require the error bounds for the projection coefficient estimator $\hat{\bgamma}^{(s)}$, as presented in Proposition \ref{propgammaesterr}.   

\begin{assumption}
    \label{assumpomega}
    Assume the projection residual $\vert\omega_i^{(s)}\vert \leq B_{\omega}$ for some constant $B_{\omega}>0$ independent of $(n_s,p,s_r)$. Moreover, assume the error term $\varepsilon$ is sub-exponential. In addition, the largest diagonal element of $\bSigma$ satisfies $\max_{j}\bSigma_{j,j}=O(1)$.
\end{assumption}

\begin{proposition}
    \label{propgammaesterr}
    Under Assumption \ref{assumpepsilon}, \ref{assumpxboundRME} and \ref{assumpomega}, suppose that $\lambda_m^{(s)}\asymp\sqrt{\log(p)/N_s}$ and $n_1 \gtrsim s_r\log(p)$, then the estimator $\hat{\bgamma}^{(s)}$ in \eqref{onlinegammaest} satisfies the error bounds
    \begin{align*}
        \Vert\hat{\bgamma}^{(s)}-\bgamma^*\Vert_{\bSigma_{-j}} \leq 3\phi_s^{-1}s_r^{1/2}\lambda_m^{(s)}, \quad \Vert\hat{\bgamma}^{(s)}-\bgamma^*\Vert_{1} \leq 12\phi_s^{-2}s_r\lambda_m^{(s)}
    \end{align*}
    with probability at least $1-p^{-3}$. Here $\phi_s$ is a constant such that for all $\bdelta\in\mbR^{p-1}$ satisfying $\Vert\bdelta_{S_r^c}\Vert_1 \leq 3\Vert\bdelta_{S_r}\Vert_1$, it holds that $\Vert\bdelta_{S_r}\Vert_1\leq\phi_s^{-1}(s_r\bdelta^\top\tilde{\bSigma}_{-j}^{(s)}\bdelta)^{1/2}$. 
\end{proposition}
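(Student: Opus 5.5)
The estimator $\hat{\bgamma}^{(s)}$ in \eqref{onlinegammaest} is exactly the offline lasso on the pooled sample of size $N_s$. Its loss depends on the data only through $(T^{(s)},\mathbf{R}^{(s)},\bm{U}^{(s)})$, and these aggregate all batches without approximation. So there is no online error to propagate, and we can run the standard deterministic lasso argument conditional on two high-probability events. Write $\bdelta=\hat{\bgamma}^{(s)}-\bgamma^*$ and $\omega_{ti}=X_{ti,j}-\bX_{ti,-j}^\top\bgamma^*$. The basic inequality from optimality of $\hat{\bgamma}^{(s)}$ gives
\begin{align*}
\frac12\bdelta^\top\tilde{\bSigma}^{(s)}_{-j}\bdelta \le \Big\Vert \frac{1}{N_s}\sum_{t=1}^s\sum_{i=1}^{n_t}\omega_{ti}\bX_{ti,-j}\Big\Vert_\infty\Vert\bdelta\Vert_1 + \lambda_m^{(s)}\big(\Vert\bgamma^*\Vert_1-\Vert\bgamma^*+\bdelta\Vert_1\big).
\end{align*}
Here we identify the support of $\bgamma^*$ with $S_r$. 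Since $\bgamma^*$ is proportional to $-\bm{\Omega}_{-j,j}/\bm{\Omega}_{j,j}$, its support is exactly $S_r$ and $s_m=s_r$.

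The first step is to bound the score term. Each coordinate $\omega_{ti}X_{ti,k}$ is mean zero by \eqref{modelprojreg} and bounded by $B_\omega B_{\bX}$ under Assumptions \ref{assumpxboundRME} and \ref{assumpomega}. Hoeffding's inequality plus a union bound over $p-1$ coordinates gives, with probability at least $1-p^{-3}/2$,
\begin{align*}
\Big\Vert N_s^{-1}\sum_{t,i}\omega_{ti}\bX_{ti,-j}\Big\Vert_\infty\le C B_\omega B_{\bX}\sqrt{\log(p)/N_s}\le \lambda_m^{(s)}/2,
\end{align*}
once the constant in $\lambda_m^{(s)}\asymp\sqrt{\log(p)/N_s}$ is chosen large enough. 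Using a threshold of order $\sqrt{8\log p/N_s}$ gives the $p^{-3}$ tail. On this event, decomposing $\Vert\bgamma^*\Vert_1-\Vert\bgamma^*+\bdelta\Vert_1\le\Vert\bdelta_{S_r}\Vert_1-\Vert\bdelta_{S_r^c}\Vert_1$ yields
\begin{align*}
0\le \tfrac12\bdelta^\top\tilde{\bSigma}^{(s)}_{-j}\bdelta\le \tfrac32\lambda_m^{(s)}\Vert\bdelta_{S_r}\Vert_1-\tfrac12\lambda_m^{(s)}\Vert\bdelta_{S_r^c}\Vert_1 .
\end{align*}
Hence $\bdelta$ lies in the cone $\Vert\bdelta_{S_r^c}\Vert_1\le3\Vert\bdelta_{S_r}\Vert_1$.

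The second step uses the compatibility constant $\phi_s$ from the statement. Apply $\Vert\bdelta_{S_r}\Vert_1\le\phi_s^{-1}(s_r\bdelta^\top\tilde{\bSigma}^{(s)}_{-j}\bdelta)^{1/2}$ to the right-hand side and cancel one power of $(\bdelta^\top\tilde{\bSigma}^{(s)}_{-j}\bdelta)^{1/2}$. This gives $(\bdelta^\top\tilde{\bSigma}^{(s)}_{-j}\bdelta)^{1/2}\le 3\phi_s^{-1}s_r^{1/2}\lambda_m^{(s)}$. The cone property then gives $\Vert\bdelta\Vert_1\le4\Vert\bdelta_{S_r}\Vert_1\le 4\phi_s^{-1}s_r^{1/2}\cdot3\phi_s^{-1}s_r^{1/2}\lambda_m^{(s)}=12\phi_s^{-2}s_r\lambda_m^{(s)}$, which is the $\ell_1$ bound.

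The main obstacle is converting the empirical-norm bound into the stated population norm $\Vert\cdot\Vert_{\bSigma_{-j}}$, and making sure $\phi_s$ is a genuine positive constant. This is where $n_1\gtrsim s_r\log(p)$ enters, and since $N_s\ge n_1$ the same condition covers every $s$. I would first bound $\Vert\tilde{\bSigma}^{(s)}_{-j}-\bSigma_{-j}\Vert_{\max}\lesssim B_{\bX}^2\sqrt{\log p/N_s}$ by Hoeffding's inequality over $p^2$ entries, with probability at least $1-p^{-3}/2$. For $\bdelta$ in the cone, $\Vert\bdelta\Vert_1^2\le16 s_r\Vert\bdelta_{S_r}\Vert_2^2$, so
\begin{align*}
|\bdelta^\top(\tilde{\bSigma}^{(s)}_{-j}-\bSigma_{-j})\bdelta|\lesssim s_r\sqrt{\log p/N_s}\,\Vert\bdelta_{S_r}\Vert_2^2\le \tfrac12\rho^2\Vert\bdelta_{S_r}\Vert_2^2 .
\end{align*}
The last inequality needs $N_s\gtrsim s_r^2\log p$ for this crude bound. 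If only $s_r\log p$ is available, I would instead invoke a restricted-eigenvalue concentration result for bounded designs (Rudelson--Zhou type) under the moment condition $M_4$ of Assumption \ref{assumpxboundRME}. Either way, the population RME from Assumption \ref{assumpxboundRME}, inherited by $\bSigma_{-j}$ as a principal submatrix, gives empirical and population quadratic forms that are equivalent within constant factors on the cone. Hence $\phi_s\gtrsim\rho$, and $\Vert\bdelta\Vert_{\bSigma_{-j}}\le 3\phi_s^{-1}s_r^{1/2}\lambda_m^{(s)}$ after absorbing the constant factor into $\phi_s$. A union bound over the two events gives probability at least $1-p^{-3}$.
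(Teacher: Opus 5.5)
Your proposal is correct and is essentially the paper's argument. The paper just cites Lemma 1 of \cite{han2021delasso}, which is the standard lasso bound on the pooled sample (exact here because \eqref{onlinegammaest} depends on the data only through cumulative Gram statistics), and your basic-inequality, cone and compatibility steps reproduce its constants $3\phi_s^{-1}s_r^{1/2}\lambda_m^{(s)}$ and $12\phi_s^{-2}s_r\lambda_m^{(s)}$.

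Note, however, that the constant $3$ belongs to the empirical norm $(\bdelta^\top\tilde{\bSigma}_{-j}^{(s)}\bdelta)^{1/2}$; the statement's use of $\Vert\cdot\Vert_{\bSigma_{-j}}$ with the same constant is loose. Under only $n_1\gtrsim s_r\log(p)$, the cleanest transfer is the one-sided bound $2\bdelta^\top\tilde{\bSigma}_{-j}^{(s)}\bdelta\geq\Vert\bdelta\Vert_{\bSigma_{-j}}^2$ on the cone (Lemma A.4 of \cite{zhang2023hdesreg}, as in the proof of Theorem \ref{thmesterrbound}), holding on an extra event. It gives $3\sqrt{2}\,\phi_s^{-1}s_r^{1/2}\lambda_m^{(s)}$ without needing $N_s\gtrsim s_r^2\log(p)$ or Rudelson--Zhou, and without redefining $\phi_s$.
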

Proposition \ref{propgammaesterr} follows directly from Lemma 1 in \cite{han2021delasso}. It guarantees the estimation accuracy of the online projection regression estimator $\hat{\bgamma}^{(s)}$. Up to now, we have exhibited the error bounds for the online estimators $\hat{\bbeta}^{(s)},\ \hat{\btheta}^{(s)},\ \hat{\bgamma}^{(s)}$ in Proposition \ref{propbetaesterr}, Corollary \ref{crlryesterr} and Proposition \ref{propgammaesterr}, respectively. Next, we establish the asymptotic normality for the online debiased estimators \eqref{onlinedebest} in the following Theorem \ref{theoremasympnorm}.  

\begin{theorem}[Asymptotic distribution]
    \label{theoremasympnorm}
     Assume the conditions in Theorem \ref{thmesterrbound}, Proposition \ref{propbetaesterr} and Proposition \ref{propgammaesterr} hold. Suppose the regularized parameters $\lambda_e^{(b)},\lambda_q^{(b)}, \lambda_m^{(b)}\asymp\sqrt{\log(p)/N_b}$, the batch size $n_s\geq(q_{\epsilon}/\sigma_{\epsilon})^2(B_{\bX}/\sigma_{\bX})^2\log(p)$ for $s=1,\dots,b$ and $\max(s_0^2,s_r^2) s_m \log^2(p) b \{1+\log(N_b/n_1)\} = o(N_b)$. The online debiased estimator $\tilde{\theta}_j^{(b)}$ defined in \eqref{onlinedebest} then satisfies $\tau \sqrt{N_b}(\tilde{\theta}_j^{(b)}-\theta_j^*) \stackrel{d}{\rightarrow} N(0,\sigma_{I}^2/\sigma_{\omega}^4)$ as $(N_b,p) \to \infty$.
\end{theorem}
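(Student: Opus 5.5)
The plan is to rewrite the online debiased estimator in terms of a cumulative score at the true parameter plus remainders. First simplify \eqref{onlinedebest}. Since $\hat{S}_{n_s}^{(s)}$ is linear in $\theta_j$ with slope $\hat{H}_{n_s}^{(s)}$, we have
\begin{align*}
\hat{S}_{n_s}^{(s)}(\hat{\theta}_j^{(s)}) = \hat{S}_{n_s}^{(s)}(\theta_j^*) + \hat{H}_{n_s}^{(s)}(\hat{\theta}_j^{(s)}-\theta_j^*).
\end{align*}
Substituting this into \eqref{onlinedebest}, the correction term cancels the dependence on $\hat{\theta}_j^{(s)}$ exactly. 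This gives
\begin{align*}
\tilde{H}_{N_b}(\tilde{\theta}_j^{(b)}-\theta_j^*) = -N_b^{-1}\sum_{s=1}^b n_s \hat{S}_{n_s}^{(s)}(\theta_j^*).
\end{align*}
The proof therefore reduces to two tasks: (a) show $\tilde{H}_{N_b}\to -\tau\sigma_\omega^2$ in probability, and (b) show that $\sqrt{N_b}\,N_b^{-1}\sum_s n_s\hat{S}_{n_s}^{(s)}(\theta_j^*)$ equals $N_b^{-1/2}\sum_{s,i}\omega_{si}\{Z_{si}(\bbeta^*)-\tau\bX_{si}^\top\btheta^*\}+o_p(1)$. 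The leading term is a sum of i.i.d. mean-zero terms, since $\mbE\{Z(\bbeta^*)\mid\bX\}=\tau\bX^\top\btheta^*$. Its conditional variance given $\bX$ is $\Var(\varepsilon_-\mid\bX)$, because $Z(\bbeta^*)=\varepsilon_-+\tau\bX^\top\bbeta^*$. Assumption \ref{assumpepsilon}, the boundedness of $\omega$ in Assumption \ref{assumpomega}, and the Lindeberg CLT then give the limit $N(0,\sigma_I^2)$. Slutsky's lemma yields the $N(0,\sigma_I^2/\sigma_\omega^4)$ limit for $\tau\sqrt{N_b}(\tilde{\theta}_j^{(b)}-\theta_j^*)$.

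For (a), write $\hat\omega_{si}=\omega_{si}-\bX_{si,-j}^\top(\hat\bgamma^{(s)}-\bgamma^*)$, so that
\begin{align*}
\hat{H}_{n_s}^{(s)}=-\tau n_s^{-1}\sum_i X_{si,j}\omega_{si}+\tau\,(\text{cross term}).
\end{align*}
The first piece averages over batches to $\mbE(X_j\omega)=\sigma_\omega^2$ by the LLN. The cross term is bounded via Hölder by $\Vert n_s^{-1}\sum_i X_{si,j}\bX_{si,-j}\Vert_\infty\Vert\hat\bgamma^{(s)}-\bgamma^*\Vert_1\lesssim s_r\sqrt{\log p/N_s}$, using Proposition \ref{propgammaesterr}. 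The weighted average $N_b^{-1}\sum_s n_s s_r\sqrt{\log p/N_s}$ is at most $2s_r\sqrt{\log p/N_b}\to0$, since $n_s/\sqrt{N_s}\le 2(\sqrt{N_s}-\sqrt{N_{s-1}})$.

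For (b), decompose $\hat{S}_{n_s}^{(s)}(\theta_j^*)-S^{(s)}_{n_s}(\theta_j^*,\btheta^*_{-j},\bbeta^*,\bgamma^*)$ into three remainders.
\begin{itemize}
\item[(i)] The $\btheta_{-j}$ error: $-\tau n_s^{-1}\sum_i\hat\omega_{si}\bX_{si,-j}^\top(\hat\btheta^{(s)}_{-j}-\btheta^*_{-j})$. Split $\hat\omega_{si}$ into $\omega_{si}$ plus the $\bgamma$ error. The $\omega$ part uses $\Vert n_s^{-1}\sum_i\omega_{si}\bX_{si,-j}\Vert_\infty\lesssim\sqrt{\log p/n_s}$, which holds because $\mbE(\omega\bX_{-j})=0$ and $\omega$ is bounded. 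This is combined with the $\ell_1$ bound of Corollary \ref{crlryesterr}. The product of errors is handled through $\ell_{\bSigma}$ norms and Cauchy–Schwarz, with sample-to-population covariance comparison under the $n_s$ lower bound.
\item[(ii)] The $\bgamma$ error paired with the true residual $e_{si}(\bbeta^*)$. Its conditional mean is zero, so it is $O_p(\sqrt{\log p/n_s})\,\Vert\hat\bgamma^{(s)}-\bgamma^*\Vert_1$.
\item[(iii)] The $\bbeta$ error: $n_s^{-1}\sum_i\hat\omega_{si}\{Z_{si}(\hat\bbeta^{(s)})-Z_{si}(\bbeta^*)\}$. Use orthogonality: the conditional mean $\mbE\{Z(\bbeta)\mid\bX\}$ has zero derivative in $\bbeta$ at $\bbeta^*$, so the mean part is quadratic, of order $F_u\Vert\hat\bbeta^{(s)}-\bbeta^*\Vert_{\bSigma}^2\lesssim s_q\log p/N_s$. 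The centered empirical-process part is controlled by a Bernstein/Lipschitz-type bound over $\mbB_{\bSigma}(r_0)\cap\mbB_1(r_1)$, as in \cite{zhang2023hdesreg}.
\end{itemize}

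Multiplying by $\sqrt{N_b}\,n_s/N_b$ and summing, every remainder appears as a quantity of the form $N_b^{-1/2}\sum_s n_s\, a_s$ with $a_s\lesssim s_0 s_r\log p\,(n_sN_s)^{-1/2}$ or $s_0\log p/N_s$.

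The main obstacle is this batch summation. The estimators $\hat\btheta^{(s)},\hat\bbeta^{(s)},\hat\bgamma^{(s)}$ depend on $\mcD_s$ itself, so the early batches with large errors accumulate. The key elementary bound is
\begin{align*}
\sum_s n_s/N_s\le 1+\log(N_b/n_1).
\end{align*}
Together with Cauchy–Schwarz, $\sum_s\sqrt{n_s/N_s}\le\{b\sum_s n_s/N_s\}^{1/2}$, this bounds every remainder by $\{\max(s_0^2,s_r^2)s_m\log^2(p)\,b\{1+\log(N_b/n_1)\}/N_b\}^{1/2}$. That quantity is $o(1)$ by the stated rate condition. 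The $s_m$ factor enters through $\Vert\bgamma^*\Vert_1\lesssim\sqrt{s_m}$ when $\hat\omega$ and $\bX$ products are bounded. I expect the term in (iii), which needs the empirical-process bound to hold uniformly across batches via a union bound over $s$, to require the most care. The failure probabilities $O(1/p)$ must be summable over $b$, or else be handled by conditioning on the good event.
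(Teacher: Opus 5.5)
Your proposal is correct and follows essentially the same route as the paper. Both arguments use the exact identity $\tilde{H}_{N_b}(\tilde{\theta}_j^{(b)}-\theta_j^*)=-N_b^{-1}\sum_{s}n_s\hat{S}_{n_s}^{(s)}(\theta_j^*)$, a CLT plus Slutsky for the oracle score once $\tilde{H}_{N_b}\to-\tau\sigma_\omega^2$, a term-by-term bound on $\hat{S}_{n_s}^{(s)}(\theta_j^*)-S_{n_s}^{*}$ via the lemmas of \cite{zhang2023hdesreg}, and the batch-summation lemma. Your cruder bound $B_{\bX}^2\Vert\hat{\bgamma}^{(s)}-\bgamma^*\Vert_1$ for the cross term in $\tilde{H}_{N_b}$ is simpler than the paper's and avoids the $s_m$ factor, which in the paper enters only there.

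There is one bookkeeping slip. The per-batch rate $s_0s_r\log(p)(n_sN_s)^{-1/2}$ you state would force $s_0^2s_r^2$ into the rate condition. The product terms in fact carry an extra factor $\sqrt{\log(p)/N_s}$, or give $\sqrt{s_0s_r}$ via $\Vert\cdot\Vert_{\bSigma}$ bounds, so the correct per-batch bound is $\max(s_0,s_r)\log(p)(n_sN_s)^{-1/2}$, matching the paper.
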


At the end of this section, we demonstrate the consistency of the variance estimators in \eqref{varest}, which supports our empirical confidence interval \eqref{onlineconfintv}.
\begin{theorem}[Consistency of variance estimation]
    \label{theoremconsvarest}
    Under conditions in Theorem \ref{thmesterrbound}, Proposition \ref{propbetaesterr} and Proposition \ref{propgammaesterr}, suppose that $\Vert\bgamma^*\Vert_1$ is bounded. With regularized parameters $\lambda_e^{(b)},\lambda_q^{(b)}, \lambda_m^{(b)}\asymp\sqrt{\log(p)/N_b}$ and the condition $\max(s_0^2,s_r)\log(p)=o(n_1)$, the online variance estimators $\hat{\sigma}_{I,(b)}^2$ and $\hat{\sigma}_{\omega,(b)}^2$ defined in \eqref{varest} are consistent, that is, $\hat{\sigma}_{I,(b)}^2 \stackrel{p}{\rightarrow} \sigma_{I}^2$ and $\hat{\sigma}_{\omega,(b)}^2 \stackrel{p}{\rightarrow} \sigma_{\omega}^2$ as $N_b\to\infty$. In addition, under the growth condition $s_m^2p^2=o(N_b)$, we have $\sqrt{N_b}(\hat{\sigma}_{\omega,(b)}^2-\sigma_{\omega}^2) \stackrel{d}{\rightarrow} N(0, \mbE\omega_{si}^4-\sigma_{\omega}^4)$ as $N_b\to\infty$.
\end{theorem}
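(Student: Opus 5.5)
We treat the two estimators separately, writing each as an oracle average plus remainders. For $\hat{\sigma}_{\omega,(b)}^2$ we expand $\hat{\omega}_{si}=\omega_{si}-\bX_{si,-j}^\top(\hat{\bgamma}^{(s)}-\bgamma^*)$, which gives
\begin{align*}
\sum_{s=1}^b\sum_{i=1}^{n_s}\hat{\omega}_{si}^2=\sum_{s,i}\omega_{si}^2-2\sum_{s,i}\omega_{si}\bX_{si,-j}^\top(\hat{\bgamma}^{(s)}-\bgamma^*)+\sum_{s,i}\{\bX_{si,-j}^\top(\hat{\bgamma}^{(s)}-\bgamma^*)\}^2 .
\end{align*}
The first term over $N_b$ converges to $\sigma_\omega^2$ by the law of large numbers, since $|\omega_{si}|\le B_\omega$ by Assumption \ref{assumpomega}. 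The cross term is bounded by H\"older's inequality: it is at most $\sum_s n_s\Vert n_s^{-1}\sum_i\omega_{si}\bX_{si,-j}\Vert_\infty\Vert\hat{\bgamma}^{(s)}-\bgamma^*\Vert_1$. A Hoeffding/Bernstein bound, using $\mbE(\omega\bX_{-j})=0$ and bounded entries, gives the sup-norm factor $\lesssim\sqrt{\log(p)/n_s}$. Proposition \ref{propgammaesterr} gives the $\ell_1$ error $\lesssim s_r\sqrt{\log(p)/N_s}$. The quadratic term we handle through $\Vert\hat{\bgamma}^{(s)}-\bgamma^*\Vert^2_{\bSigma_{-j}}$ plus the deviation $\Vert\bSigma^{(s)}_{-j}-\bSigma_{-j}\Vert_{\max}\Vert\hat{\bgamma}^{(s)}-\bgamma^*\Vert_1^2$. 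After summing over $s$ with weights $n_s/N_b$, every remainder is $O_p(s_r\log(p)/\sqrt{n_1N_b})$ or similar. This is $o(1)$ under $s_r\log p=o(n_1)$. The normalization $N_b-\hat{s}_m^{(b)}$ versus $N_b$ is harmless because $\hat{s}_m^{(b)}\lesssim s_r$ with high probability, by the usual sparsity bound for lasso, and $s_r/N_b\to 0$.

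For $\hat{\sigma}_{I,(b)}^2$ we first identify the oracle quantity. Under \eqref{jointqtesreg}, $e_{si}(\bbeta^*)=\varepsilon_{si,-}-\mbE(\varepsilon_{si,-}\mid\bX_{si})$, because $Z(\bbeta^*)-\tau\bX^\top\btheta^*=\varepsilon_-+\tau\bX^\top\bbeta^*-\tau\bX^\top\btheta^*$ and $\tau\bX^\top(\btheta^*-\bbeta^*)=\mbE(\varepsilon_-\mid\bX)$. Hence $\mbE\{\omega^2e(\bbeta^*)^2\}=\sigma_I^2$. The oracle average $N_b^{-1}\sum\omega_{si}^2e_{si}(\bbeta^*)^2$ converges by the LLN, since $e$ is sub-exponential under Assumption \ref{assumpomega} and $\omega$ is bounded. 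We then write
\begin{align*}
\hat{\omega}_{si}^2e_{si}(\hat{\bbeta}^{(s)})^2-\omega_{si}^2e_{si}(\bbeta^*)^2=(\hat{\omega}_{si}^2-\omega_{si}^2)e_{si}(\hat{\bbeta}^{(s)})^2+\omega_{si}^2\{e_{si}(\hat{\bbeta}^{(s)})^2-e_{si}(\bbeta^*)^2\}.
\end{align*}
The map $\bbeta\mapsto Z(\bbeta)$ is Lipschitz: $|Z(\bbeta)-Z(\bbeta^*)|\le(1+\tau)|\bX^\top(\bbeta-\bbeta^*)|\le 2B_{\bX}\Vert\bbeta-\bbeta^*\Vert_1$. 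So $|e(\hat{\bbeta}^{(s)})-e(\bbeta^*)|\lesssim s_q\sqrt{\log(p)/N_s}$ uniformly in $i$ by Proposition \ref{propbetaesterr}. Similarly, boundedness of $\Vert\bgamma^*\Vert_1$ and $\Vert\bX\Vert_\infty\le B_{\bX}$ make $|\hat{\omega}_{si}-\omega_{si}|\le B_{\bX}\Vert\hat{\bgamma}^{(s)}-\bgamma^*\Vert_1$ and keep $\hat{\omega}$ bounded. We combine these with $\max_{s,i}|e_{si}(\bbeta^*)|=O_p(\log N_b)$ (sub-exponential), or better with averages of $|e|$ and $e^2$, which are $O_p(1)$. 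The remainders are then of order $s_0\sqrt{\log(p)/n_1}+s_r\sqrt{\log(p)/n_1}$ and their squares, which vanish under $\max(s_0^2,s_r)\log p=o(n_1)$. The same degrees-of-freedom argument handles the normalization, using $\hat{s}_q,\hat{s}_e,\hat{s}_m=O_p(s_0+s_r)$.

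For the CLT for $\hat{\sigma}_{\omega,(b)}^2$ we write $\sqrt{N_b}(\hat{\sigma}_{\omega,(b)}^2-\sigma_\omega^2)=N_b^{-1/2}\sum(\omega_{si}^2-\sigma_\omega^2)+\sqrt{N_b}\,\mathrm{Rem}$. The leading term is asymptotically $N(0,\mbE\omega^4-\sigma_\omega^4)$ by the Lindeberg CLT for i.i.d.\ bounded variables. We must show that $\sqrt{N_b}$ times the cross and quadratic remainders, plus $\hat{s}_m\sigma_\omega^2/\sqrt{N_b}$, is $o_p(1)$. This is the main obstacle: crude $\ell_1$ bounds give $\sqrt{N_b}\sum_s(n_s/N_b)s_r\log p/\sqrt{n_sN_s}$, which need not vanish. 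For the quadratic term we therefore use cruder but dimension-explicit bounds, $\Vert\hat{\bgamma}^{(s)}-\bgamma^*\Vert_2^2\le p\Vert\cdot\Vert_\infty^2$ together with $\hat{s}_m\le p$, tailored to the stated condition $s_m^2p^2=o(N_b)$. For the cross term $\sum_{s,i}\omega_{si}\bX_{si,-j}^\top(\hat{\bgamma}^{(s)}-\bgamma^*)$ we aim to exploit that $\omega_{si}\bX_{si,-j}$ has mean zero and is (nearly) independent of the estimator built from earlier batches. Conditional on the past, the batch-$s$ contribution then has variance $\lesssim n_s\Vert\hat{\bgamma}^{(s)}-\bgamma^*\Vert_{\bSigma}^2$, which gives a martingale-type bound that is $o(\sqrt{N_b})$. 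The dependence of $\hat{\bgamma}^{(s)}$ on batch $s$ itself would be absorbed via the sparsity $s_m$ and the KKT conditions, and we expect the assumed $s_m^2p^2=o(N_b)$ to close this step.
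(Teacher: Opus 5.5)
Your consistency arguments are sound and essentially follow the paper. For $\hat{\sigma}^2_{I,(b)}$ you use the same split into $(\hat{\omega}_{si}^2-\omega_{si}^2)e_{si}(\hat{\bbeta}^{(s)})^2$, $\omega_{si}^2\{e_{si}(\hat{\bbeta}^{(s)})^2-e_{si}(\bbeta^*)^2\}$ and an oracle LLN term. You also make explicit $e(\bbeta^*)=\varepsilon_- -\mbE(\varepsilon_-\mid\bX)$ and the Lipschitz bound on $Z$, which the paper delegates to \cite{zhang2023hdesreg}. For $\hat{\sigma}^2_{\omega,(b)}$, your cross/quadratic split is a sharper version of the paper's pointwise bound $|\hat{\omega}_{si}^2-\omega_{si}^2|\lesssim \Vert\hat{\bgamma}^{(s)}-\bgamma^*\Vert_1^2+\Vert\bgamma^*\Vert_1\Vert\hat{\bgamma}^{(s)}-\bgamma^*\Vert_1$. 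One looseness you share with the paper: a pointwise error of order $s_r\{\log(p)/n_1\}^{1/2}$ is $o(1)$ only if $s_r^2\log p=o(n_1)$, not under $s_r\log p=o(n_1)$.

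The genuine gap is the $\sqrt{N_b}$-CLT, specifically the cross term $N_b^{-1/2}\sum_{s,i}\omega_{si}\bX_{si,-j}^\top(\hat{\bgamma}^{(s)}-\bgamma^*)$. The H\"older bound gives $s_r\log(p)[b\{1+\log(N_b/n_1)\}/N_b]^{1/2}$. This is $o(1)$ only under an extra condition such as $s_r^2\log^2(p)\,b\{1+\log(N_b/n_1)\}=o(N_b)$, which is the kind assumed in Theorem~\ref{theoremasympnorm} but not here.

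Your martingale replacement needs $\hat{\bgamma}^{(s)}$ to depend only on batches $1,\dots,s-1$. By \eqref{onlinegammaest}, however, it is fitted on $\tilde{\bSigma}_s$, which contains batch $s$. Writing $\hat{\bgamma}^{(s)}-\bgamma^*=(\hat{\bgamma}^{(s-1)}-\bgamma^*)+(\hat{\bgamma}^{(s)}-\hat{\bgamma}^{(s-1)})$ does make the first piece a martingale, with conditional variance $\lesssim N_b^{-1}\sum_s n_s\Vert\hat{\bgamma}^{(s-1)}-\bgamma^*\Vert_{\bSigma_{-j}}^2\lesssim s_r\log(p)/n_1=o(1)$. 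The second piece then needs a lasso-path stability estimate, e.g.\ $\Vert\hat{\bgamma}^{(s)}-\hat{\bgamma}^{(s-1)}\Vert_1\lesssim s_r\{n_s\log(p)\}^{1/2}/N_s$.

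Proposition~\ref{propgammaesterr} does not give this. It controls each $\hat{\bgamma}^{(s)}-\bgamma^*$ in $\ell_1$ only at scale $s_r\lambda_m^{(s)}$, and the triangle inequality on the increment just reproduces the crude bound. Establishing the finer rate is a separate lemma that your sketch only gestures at. It would need, for example, sparsity bounds on every $\hat{\bgamma}^{(s)}$ and sparse-eigenvalue control of $\tilde{\bSigma}_{-j}^{(s)}$, used together with the KKT conditions. You also give no mechanism by which $s_m^2p^2=o(N_b)$, a condition on the pooled sample size, would control these per-batch terms. Finally, the $\ell_\infty$ rate you invoke for the quadratic term is not available from Proposition~\ref{propgammaesterr}, though that term is not the real obstacle.

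For the CLT, the paper never faces this decomposition. It identifies $\hat{\sigma}^2_{\omega,(b)}$ with the lasso residual-variance estimator of \cite{fan2012varest} and invokes their Theorem 3, so the work reduces to checking their assumptions. The nontrivial one is their sparse-eigenvalue Assumption 7 for the pooled design $N_b^{-1}\tilde{\mbX}_{b,M}^\top\tilde{\mbX}_{b,M}$, uniformly over $|M|\le s_m\log N_b$ and $|M|\le s_m+\min(N_b,p)$. The paper obtains this from an operator-norm concentration bound and a union bound over supports, and this is exactly where $s_m^2p^2=o(N_b)$ enters.

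Be aware, though, that this reduction treats $\hat{\sigma}^2_{\omega,(b)}$ as the residual variance of a single pooled fit. In fact \eqref{varest} uses the batch-specific $\hat{\bgamma}^{(s)}$, which is precisely the issue you flagged. So to make your direct route rigorous, you must either prove the stability lemma or add a growth condition of the Theorem~\ref{theoremasympnorm} type, under which your H\"older bound on the cross term already suffices.
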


\section{Numerical studies}
In this section, we implemented simulations to examine the empirical performances of the proposed methods. In addition, our methods are conducted on a real data example. Both the renewable estimation and inference methods are evaluated.

\subsection{Tuning parameter selection and algorithms}
To begin with, we discuss the selection of the tuning parameter and the optimization algorithms. Our renewable procedure includes three regressions on each batch: the quantile regression, the ES regression \eqref{bbesregest}, and the projected regression \eqref{onlinegammaest}. Existing literature has provided feasible algorithms for the renewable quantile regression \citep{Xie2023onlinesmoothquantile} and the projected regression \citep{han2021delasso}, thus we mainly focus on the ES regression \eqref{bbesregest}.

Guided by the theoretical order in Theorem \ref{thmesterrbound}, we set $\lambda_e^{(b)}=c_e\sqrt{\log(p)/N_b}$ and tune the regularization parameter $\lambda_e$ adaptively through cross-validation, coupled with the rolling-original-recalibration (ROR) scheme. Specifically, suppose we already have $\{\hat{\btheta}_{b-1}(c):c\in \mathcal{C}_e\}$, where $\hat{\btheta}_{b-1}(c)$ represents estimators on $\mcD_{b-1}$ produced under $\lambda_e^{(b-1)}=c\sqrt{\log(p)/N_{b-1}}$. For the $b$-th batch, we select $c_e$ from a candidate set $\mathcal{C}_e$ as: 
\begin{align}
    \label{lambdaROR}
    c_e^{(b)} = \underset{c \in \mathcal{C}_e}{\arg\min} \ L\big\{\widehat{\btheta}_{b-1}(c) ; \mathcal{D}_b\big\}.   
\end{align}
The above ROR scheme has recently been adopted by \cite{Xie2023onlinesmoothquantile,rao2025onlinesvm} and shown to be stable and efficient for renewable regularized estimation. 


Moreover, we adopt the proximal gradient descent (PGD) algorithm to solve penalized estimation \eqref{bbesregest}. Define 
the gradient of $\mcL_b$ as
\[
\nabla\mathcal{L}_b(\btheta):=\frac{1}{N_b} \Big[\tau^2 (N_{b-1} \tilde{\bSigma}_{b}+n_b\hat{\bSigma}_b)\big(\btheta-\hat{\btheta}^{(b-1)}\big)-\tau\sum_{i=1}^{n_b}\big\{Z_{bi}(\hat{\bbeta}^{(b)})- \tau\bX_{bi}^\top\hat{\btheta}^{(b-1)}\big\}\ \bX_{bi} \Big]
\] 
Given initial value $\hat{\btheta}^{(b,0)}$, the $t$-th time of PGD algorithm computes:
\begin{align}
	\label{APGprox}
	\hat{\btheta}^{(b,t)}= \textbf{prox}_{\eta_{t}\lambda_e^{(b)}\Vert\cdot\Vert_1}
	\left\{\hat{\btheta}^{(b,t-1)}-\eta_{t} \nabla\mcL_b(\hat{\btheta}^{(b,t-1)}) \right\}, \quad t\geq 1, 
\end{align}
where $\eta_{t}$ is the step size and $\textbf{prox}_{\eta_t\lambda_b\Vert\cdot\Vert_1}(u)=\textbf{sgn}(u)(|u|-\eta_t\lambda_b)$ is the soft-thresholding operator.

\subsection{Simulation: renewable estimation}
We generate a data stream $\{\mcD_1,\dots,\mcD_b\}$ from the following heteroscedastic linear model
\begin{align}
    \label{modelsimu}
    Y_{si} = \bX_{si}^{\top}\bxi^* + (\bX_{si}^{\top}\bzeta^*)\epsilon_{si}, \quad i=1,\dots,n_s;\ s=1,\dots,b.
\end{align}
Here $\{\epsilon_{si},i=1,\dots,n_s;\ t=1,\dots,s\}$ are \textit{i.i.d.} random errors genereated from standard normal distribution and $\bX_{si}$ is the covariates under the following designs: 

Covariate 1. $X_{si,j}=|z_j|$ for $j=1,\dots,p$ with $z\sim N_p(0,I_p)$;

Covariate 2. $X_{si,j}=|z_j|$ for $j=1,\dots,p$ with $z\sim N_p(0,\Sigma)$ and $\Sigma_{k,l}=0.8^{|k-l|}$;

Covariate 3. $X_{si,j}$ are i.i.d. distributed from uniform distribution $U(0,1.5)$. 

Here $\bxi^*=(\xi_1^*,\dots,\xi_{s^*}^*,0,\dots,0)\in\mbR^p$ and $\bzeta^*=(\zeta_1^*,\dots,\zeta_{s^*/2}^*,0,\dots,0)\in\mbR^p$ are $p$-dimensional coefficients with sparsity level $s^*$. They are specified as $\xi_j^*=2$ for $j=1,\dots,s^*/2$, $\xi_j^*=1$ for $j=s^*/2+1,\dots,s_0$ and $\zeta_j^*=1/3$ for $j=1,\dots,s^*/2$. Under model \eqref{modelsimu}, the true parameters of quantile regression and expected shortfall satisfy 
\begin{align*}
    \bbeta^* = \bxi^* + \bzeta^*Q_{\tau}(\epsilon), \quad \btheta^* = \bxi^* + \bzeta^*ES_{\tau}(\epsilon).
\end{align*}
Throughout our studies, we set $s^*=10, p=1000$ and select quantile level $\tau$ from $\{0.05,0.1,0.2\}$. We consider a setting with equal batch sizes, that is $n_s=N_b/b$ for $s=1,\dots,b$, and we set $b=1/\tau$ and $n_s=300,500$.  

We compare the proposed method with the offline benchmark using full data, which is the high-dimensional ES regression estimator proposed by \cite{zhang2023hdesreg}, shortened as ``off-l1-ES". For both methods, we report the following metrics: 
(i). Error-S: $\Vert\hat{\btheta}_{S_e}-\btheta^*_{S_e}\Vert_2$, the $\ell_2$-norm error on the signal set $S_e$;
(ii). Error-N: $\Vert\hat{\btheta}_{S_e^c}\Vert_2$, the $\ell_2$-norm error on the noise set $S_e^c$;
(iii). False positive rate: $\# \{ \hat{S}_e \cap S_e^c\} / \# S_e^c$, the proportion of elements in the true noise set $S_e^c$ that are estimated as nonzero.  

The results are summarized in Table \ref{table1}. We observe that the Errors and FPRs of our proposed renewable $\ell_1$-penalized ES regression estimator are slightly higher than those of the offline methods. This suggests that our methods, even under storage limits, achieve almost the same accuracy as the offline benchmark with full samples. Moreover, we display the $\ell_2$-errors of the proposed renewable estimator on every batch under Covariate 1 in Figure \ref{fig1}. It is shown that the errors decline in a curving manner as the batches arrive. Eventually, it reaches a level that is fairly close to the offline benchmark. A similar pattern is detected in the errors under Covariates 2 and 3. Their results are put in Figures \ref{fig2} and \ref{fig3} in  Appendix \ref{app:numres} due to the space constraint.     

\begin{table}[htbp]
\renewcommand\arraystretch{1.1}
\centering \tabcolsep 8pt 
\LTcapwidth 6in
\caption{Empirical estimation results of online-l1-ES and offline-l1-ES. }
\label{table1}

\begin{threeparttable}
\begin{tabular}{cccccccc}
\toprule
\multicolumn{1}{l}{} & \multicolumn{1}{l}{} & \multicolumn{3}{c}{$(n_t=300,b=1/\tau)$} & \multicolumn{3}{c}{$(n_t=500,b=1/\tau)$} \\ \midrule
\multicolumn{8}{c}{Covariate 1}                                                          \\ \hline
$\tau$  & method    & Error-S & Error-N & FPR   & Error-S & Error-N & FPR   \\ \hline
\multirow{2}{*}{0.05} & on-l1-ES  & 0.566   & 0.248   & 0.046 & 0.443   & 0.225   & 0.042 \\
 & off-l1-ES & 0.487   & 0.221   & 0.044 & 0.387   & 0.173   & 0.043 \\ \hline
\multirow{2}{*}{0.1}  & on-l1-ES  & 0.613   & 0.289   & 0.045 & 0.496   & 0.258   & 0.046 \\
 & off-l1-ES & 0.546   & 0.24    & 0.043 & 0.426   & 0.198   & 0.045 \\ \hline
\multirow{2}{*}{0.2}  & on-l1-ES  & 0.652   & 0.356   & 0.042 & 0.553   & 0.283   & 0.041 \\
 & off-l1-ES & 0.606   & 0.295   & 0.047 & 0.483   & 0.221   & 0.043 \\ \midrule
\multicolumn{8}{c}{Covariate 2}                                                          \\ \hline
$\tau$  & method    & Error-S & Error-N & FPR   & Error-S & Error-N & FPR   \\ \hline
\multirow{2}{*}{0.05} & on-l1-ES  & 0.429   & 0.158   & 0.02  & 0.334   & 0.112   & 0.02  \\
& off-l1-ES & 0.395   & 0.137   & 0.018 & 0.314   & 0.107   & 0.019 \\ \hline
\multirow{2}{*}{0.1}  & on-l1-ES  & 0.456   & 0.162   & 0.021 & 0.348   & 0.12    & 0.02  \\
& off-l1-ES & 0.434   & 0.155   & 0.019 & 0.341   & 0.12    & 0.017 \\ \hline
\multirow{2}{*}{0.2}  & on-l1-ES  & 0.509   & 0.167   & 0.02  & 0.422   & 0.113   & 0.02  \\
& off-l1-ES & 0.495   & 0.172   & 0.018 & 0.393   & 0.073   & 0.018 \\ \midrule
\multicolumn{8}{c}{Covariate 3}                                                          \\ \hline
$\tau$  & method    & Error-S & Error-N & FPR   & Error-S & Error-N & FPR   \\ \hline
\multirow{2}{*}{0.05} & on-l1-ES  & 0.642   & 0.307   & 0.04  & 0.513   & 0.228   & 0.04  \\
 & off-l1-ES & 0.62    & 0.284   & 0.044 & 0.494   & 0.225   & 0.044 \\ \hline
\multirow{2}{*}{0.1}  & on-l1-ES  & 0.703   & 0.368   & 0.041 & 0.556    & 0.269   & 0.04  \\
& off-l1-ES & 0.686   & 0.325   & 0.046 & 0.539   & 0.248   & 0.043 \\ \hline
\multirow{2}{*}{0.2}  & on-l1-ES  & 0.765   & 0.235   & 0.042 & 0.597   & 0.244   & 0.041 \\
 & off-l1-ES & 0.77    & 0.196   & 0.048 & 0.61    & 0.188   & 0.046 \\ \bottomrule
\end{tabular}


\end{threeparttable}
\end{table}

\begin{figure}
    \centering   \includegraphics[width=0.9\linewidth]{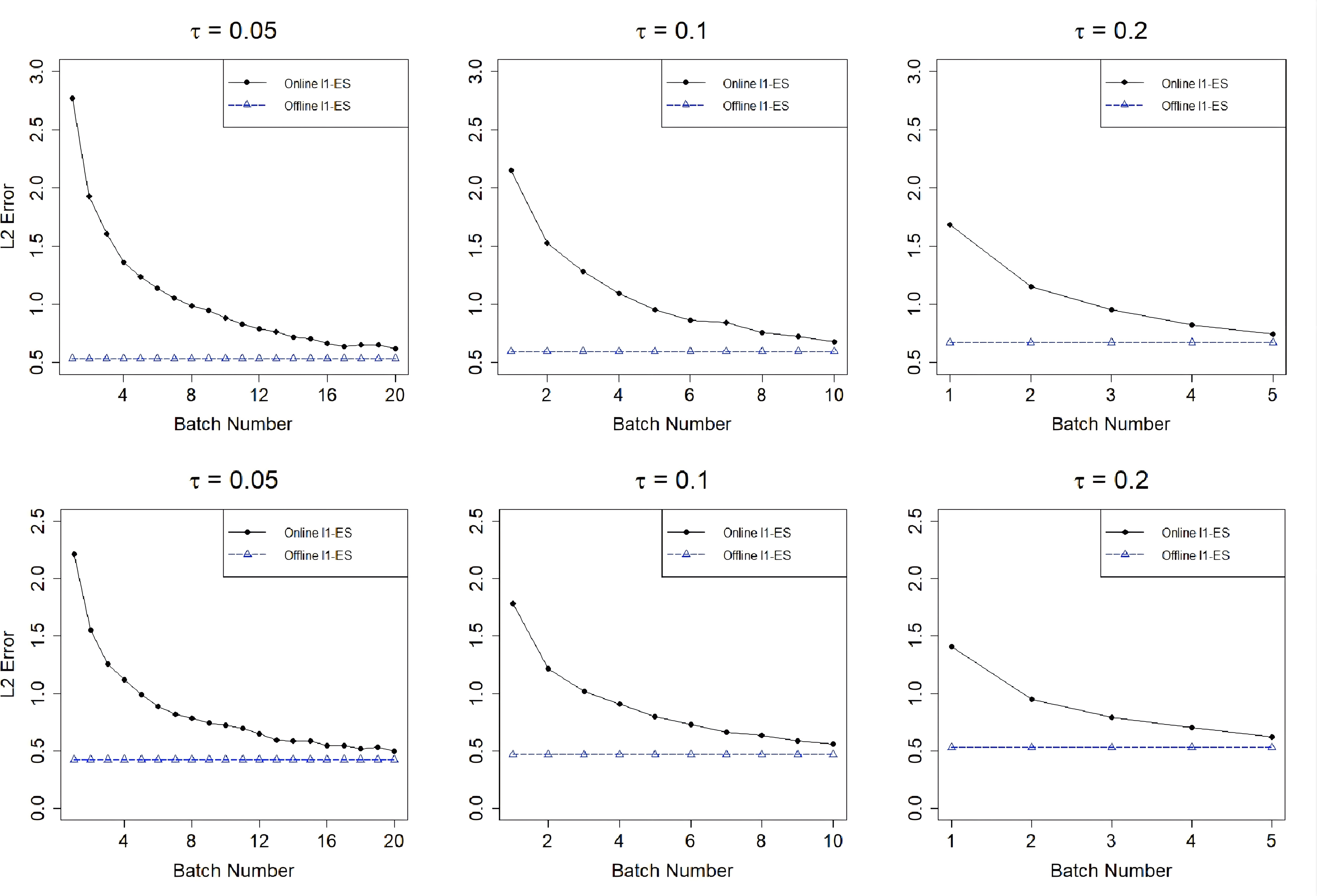}
    \caption{Estimation $\ell_2$- errors on each batch under Covariate 1}
    \label{fig1}
\end{figure}


\subsection{Simulation: renewable inference}
Subsequent to the renewable estimator in Section 4.2, we compute the online debiased estimator $\tilde{\theta}_j^{(b)}$ in \eqref{onlinedebest} and evaluate the performance of statistical inference by its associated confidence interval in \eqref{onlineconfintv}. On each batch $\mcD_s$, we obtained the projection regression estimator $\hat{\bgamma}^{(s)}$ in \eqref{modelprojreg} using the R package {\textit{glmnet}}, with the tuning parameter $\lambda_m^{(s)}$ selected by the ROR scheme. Provided the estimators $\hat{\bbeta}^{(s)},\hat{\btheta}^{(s)},\hat{\bgamma}^{(s)}$, we calculate the debiased estimator $\tilde{\theta}_j^{(s)}$ in \eqref{onlinedebest} and the variance estimators $\hat{\sigma}_{I,(s)}^2, \hat{\sigma}_{\omega,(s)}^2$ in \eqref{varest}. The corresponding confidence interval is then computed by \eqref{onlineconfintv}. 

The data is generated under the same design as in Section 4.2. Without loss of generality, let $\theta_1^*$ be the parameter of interest for statistical inference. In addition to the proposed renewable debiased estimator $\tilde{\theta}_1^{(s)}$, we also report the offline debiased estimator in \cite{zhang2023hdesreg} for comparison. The experiments are conducted under Covariate 1 and Covariate 2 with $s_0=10,p=1000$ and $n_s=500, b=1/\tau$. The lengths and coverage probabilities of the confidence intervals under two methods are averaged over 500 replications. Table \ref{table2} summarizes the averaged lengths and probabilities across multiple designs. The performance of our online method is more conservative than the offline method, as implied by lower coverage probabilities and shorter lengths of intervals. This issue might be due to the different choices of tuning parameters under online settings, led by more complex selection strategies. Despite the slight differences, the accuracy of the proposed inference procedure is relatively close to the offline benchmark.      

\begin{table}[htbp]
\renewcommand\arraystretch{1.1}
\centering \tabcolsep 10pt 
\LTcapwidth 6in
\caption{Empirical inference results of online-l1-ES and offline-l1-ES. }
\label{table2}

\begin{threeparttable}
\begin{tabular}{ccccc}
\toprule
\multicolumn{1}{l}{} & \multicolumn{4}{c}{Coverage Probability/Length}                   \\ \midrule
\multicolumn{1}{l}{} & \multicolumn{2}{c}{Covariate 1} & \multicolumn{2}{c}{Covariate 2} \\ \hline
$\tau$                  & on-l1-ES       & off-l1-ES      & on-l1-ES       & off-l1-ES      \\ \hline
0.05                 & 0.90/0.17      & 0.95/0.20      & 0.91/0.22      & 0.94/0.26      \\
0.1                  & 0.92/0.19      & 0.94/0.21      & 0.94/0.25      & 0.97/0.28      \\
0.2                  & 0.91/0.22      & 0.95/0.23      & 0.93/0.27      & 0.95/0.30      \\ 
\bottomrule
\end{tabular}


\end{threeparttable}
\end{table}

\subsection{Application: car insurance claim dataset} 
We further apply our method to investigate the effects of various factors on the expected shortfall of claim amounts of car insurance. Insurance claims typically exhibit heavy-tailed and heterogeneous characteristics, reflecting the diversity of risks. In this way, the expected shortfall becomes a core indicator in the car insurance business due to its ability to measure high-risk claims in the tail. Moreover, insurance companies may encounter the newly added application data on a monthly or quarterly basis, which calls for a renewable procedure. 

The data is available on Kaggle at \url{https://www.kaggle.com/datasets/xiaomengsun/car-insurance-claim-data}.
The dataset contains various factors that influence the scale and frequency of claims. These factors not only make the analysis process more complicated but also provide a wealth of valuable insights. 
Let the response $Y$ be the claim amount and the covariates $\bX\in\mbR^p$ with $p=39$. These covariates include the original continuous variables and the one-hot vector of categorical variables. 
We consider the following regression model:
\begin{align}
    \label{esreg}
    Q_{\tau}(Y_i\mid\bX_i) = \bX_i^\top \bbeta(\tau), \qquad ES_{\tau}(Y_i\mid\bX_i) = \bX_i^\top \theta(\tau), \qquad i=1,\dots,n=7657.
\end{align}
We aim to investigate the impact of a specific factor on extremely large claims. Thus, we select $\tau=0.9,0.95,0.99$. To facilitate batch-wise modeling, the sample was randomly shuffled and partitioned into 20 batches, with the first 19 batches each containing 382 observations and the 20th batch containing 399 observations.

\begin{figure}[htbp]
    \centering
    \includegraphics[width=0.9\linewidth]{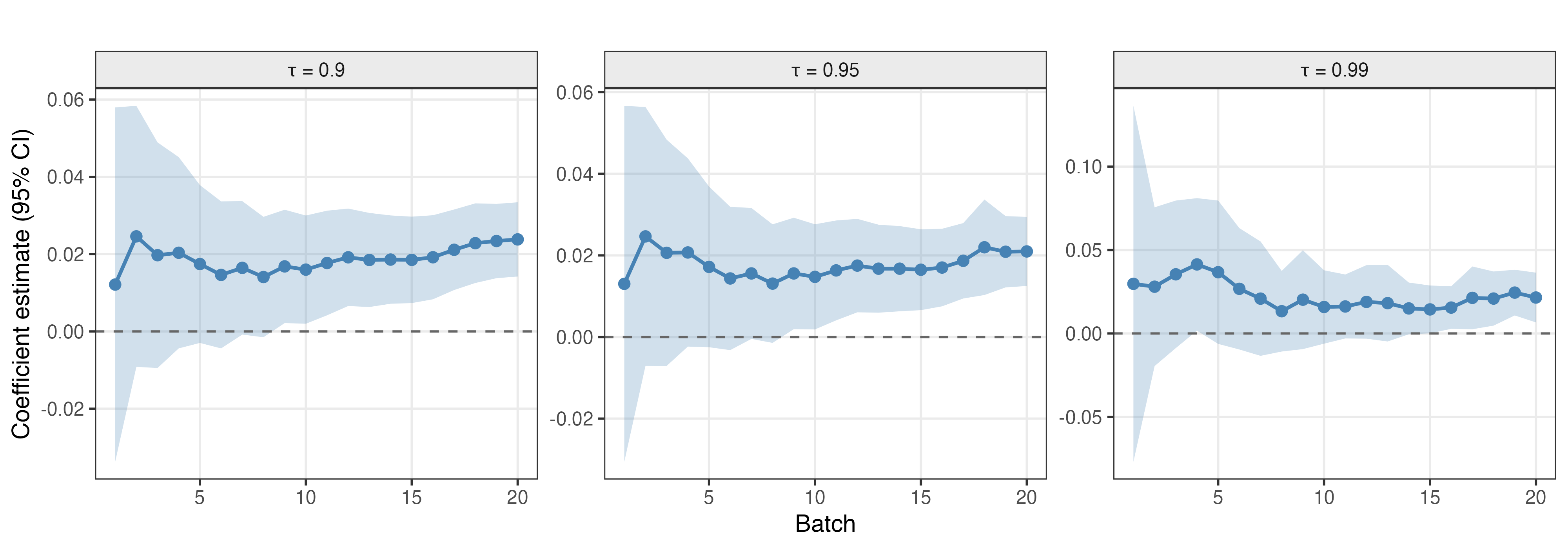}
    \caption{The 95\% confidence intervals for the estimated coefficients of 
    \texttt{MVR\_PTS} across batches, derived from upper-tail expected shortfall regressions 
    with $\tau \in \{0.9, 0.95, 0.99\}$.}
    \label{fig4}
\end{figure}

We then fit the online $\ell_1$-penalized ES regression to examine claim severity among high-risk drivers under extreme claim scenarios, corresponding to the upper tail of the conditional distribution. The primary variable of interest was \texttt{MVR\_PTS}. A confidence interval that excludes zero indicates a statistically significant effect of \texttt{MVR\_PTS} on claim amounts at the corresponding quantile level. As additional data batches were incorporated, the coefficient estimates and their confidence intervals stabilized. The results in Figure~\ref{fig4} show that drivers with higher violation points exhibit significantly different claim amounts in extreme scenarios, further underscoring the role of driving violations in insurance risk pricing.

\section{Conclusion}
In this paper, we propose a renewable estimation and inference framework for the high-dimensional ES regression model, which adapts to sequential updating and storage constraints in the online setting. We construct a surrogate loss function via quadratic approximation to obtain the online estimator. We further adopt an online debiased estimator by using a bias-correction term tailored to the ES orthogonal score structure. Theoretically, we establish the non-asymptotic error bounds for the online estimator and asymptotic normality for the online debiased estimator. Simulation results confirm the method's good finite-sample performance in several metrics, which are close to the offline benchmark with full data. Application of the car insurance claim dataset verifies its ability to identify key tail risk factors and support real-time model updates.

This work enriches the theoretical system of high-dimensional ES regression and provides an efficient tool for real-time tail risk management in big data scenarios, with broad applications in financial regulation, portfolio optimization, insurance pricing, and credit risk measurement. Despite the innovations of our work, there remain questions for future research. Directions include extending our framework to nonlinear or multivariate ES regression, as well as expanding to other tail risk measures to form a unified renewable inference system.

Appendix
\begin{appendices}
\label{app}
\allowdisplaybreaks[3]

\renewcommand{\theequation}{\thesection.\arabic{equation}}

\setcounter{table}{0}
\renewcommand{\thetable}{\thesection\arabic{table}}
\setcounter{figure}{0}
\renewcommand{\thefigure}{\thesection\arabic{figure}}
\setcounter{algorithm}{0}
\renewcommand{\thealgorithm}{\thesection\arabic{algorithm}}

\section{Proof of main theorems}

\begin{proof}[Proof of Theorem \ref{thmesterrbound}]
    Let $\mbC(l)=\{\bdelta\in\mbR^p: \Vert\bdelta\Vert_1 \leq l\Vert\bdelta\Vert_{\bSigma} \}$ be a cone set. Denote the error vector $\bdelta^{(s)}=\hat{\btheta}^{(s)}-\btheta^*$. On one hand, we have
\begin{align}
    \label{mcl2leftside}
    &\langle \nabla\mcL_2(\hat{\btheta}^{(2)})-\nabla\mcL_2(\btheta^*), \hat{\btheta}^{(2)}-\btheta^* \rangle \notag \\
    = & \frac{n_1}{N_2}\tau^2(\hat{\btheta}^{(2)}-\hat{\btheta}^{(1)})^\top \hat{\bSigma}_1 (\hat{\btheta}^{(2)}-\btheta^*) - \frac{n_1}{N_2}\tau^2(\btheta^*-\hat{\btheta}^{(1)})^\top \hat{\bSigma}_1 (\hat{\btheta}^{(2)}-\btheta^*) \notag \\
    & + \frac{n_2}{N_2}\big\{\nabla L(\hat{\bbeta}^{(2)}, \hat{\btheta}^{(2)};\mcD_2) - \nabla L(\hat{\bbeta}^{(2)}, \btheta^*;\mcD_2)\big\}^\top (\hat{\btheta}^{(2)}-\btheta^*) \notag \\
    = & \frac{n_1}{N_2}\tau^2(\hat{\btheta}^{(2)}-\btheta^*)^\top \hat{\bSigma}_1 (\hat{\btheta}^{(2)}-\btheta^*) + \frac{n_2}{N_2}\tau^2(\hat{\btheta}^{(2)}-\btheta^*)^\top \hat{\bSigma}_2 (\hat{\btheta}^{(2)}-\btheta^*) \notag \\
    = & \tau^2 \{\bdelta^{(2)}\}^\top \tilde{\bSigma}_2 \bdelta^{(2)}.
\end{align}

On the other hand, by the KKT condition, there exists a subgradient $\bkappa_2\in \partial\Vert\hat{\btheta}^{(2)}\Vert_1$ satisfying $\nabla\mcL_2(\hat{\btheta}^{(2)}) + \tau\lambda_e^{(2)}\bkappa_2=0$. Thus we have
\begin{align}
    \label{mcl2rightside}
    &\langle \nabla\mcL_2(\hat{\btheta}^{(2)})-\nabla\mcL_2(\btheta^*), \hat{\btheta}^{(2)}-\btheta^* \rangle \notag \\
    =& -\tau\lambda_e^{(2)}\bkappa_2^\top (\hat{\btheta}^{(2)}-\btheta^*) - \frac{n_2}{N_2} \tau^2 (\btheta^*-\hat{\btheta}^{(1)})^\top \hat{\bSigma}_1 (\hat{\btheta}^{(2)}-\btheta^*) \notag \\
    & -\frac{n_2}{N_2}\nabla L(\hat{\bbeta}^{(2)}, \btheta^*; \mcD_2)^\top (\hat{\btheta}^{(2)}-\btheta^*) \notag \\
    :=& A_1 + A_2 + A_3
\end{align}
We then analyze $A_1,\ A_2,\ A_3$ in \eqref{mcl2rightside}. 
By the definition of subgradient, we have
\begin{align}
    \label{a1bound}
    A_1 \leq \tau \lambda_e^{(2)} \big(\Vert\btheta^*\Vert_1 - \Vert\hat{\btheta}^{(2)}\Vert_1\big) \leq \tau \lambda_e^{(2)}  \big(\Vert\bdelta^{(2)}_{S}\Vert_1 - \Vert\bdelta^{(2)}_{S^c}\Vert_1\big).  
\end{align}
For $A_2$, apply the decomposition
\begin{align}
    \label{a2decomp}
    A_2 =  \frac{n_1}{N_2} \tau^2 \bdelta^{(1)\top}\bSigma\bdelta^{(2)} +  \frac{n_1}{N_2} \tau^2 \bdelta^{(1)\top}(\hat{\bSigma}_1-\bSigma)\bdelta^{(2)}  
    := & A_{21} + A_{22}.
\end{align}
Further note that $\tau\Vert\bdelta^{(1)}\Vert_{\bSigma} \leq 4\rho^{-1}s_e^{1/2}\lambda_e^{(1)}$. Given $\lambda_e^{(2)}\sqrt{N_2} \asymp \lambda_e^{(1)}\sqrt{n_1}$, it then follows from \eqref{a2decomp} that
\begin{align}
    \label{a21bound}
    A_{21} \leq 4\tau\rho^{-1}s_e^{1/2}\lambda_e^{(2)}  \Vert\bdelta^{(2)}\Vert_{\bSigma}.  
\end{align}
Further by Lemma \ref{lemmacovest}, we know that $\Vert\hat{\bSigma}_1-\bSigma\Vert_{\max} \leq M_4^{1/2}\sigma_{\bX}^2\sqrt{6\log(p)/n_1} + B_{\bX}^2\log(p)/n_1$ holds with probability no lower than $1-2/p$. Using Lemma \ref{lemmaquadraformbound}, as long as $n_1\geq (B_{\bX}\sigma_{\bX})^4M_4^{-1}\log(p)$, we get
\begin{align}
    \label{a22bound}
    A_{22} \leq &\frac{n_1}{N_2}\tau^2 \Vert\bdelta^{(1)}\Vert_1 \Vert\hat{\bSigma}_1-\bSigma\Vert_{\max} \Vert\bdelta^{(2)}\Vert_1 \notag \\
    \leq & \frac{n_1}{N_2} 52\tau \rho^{-1} M_4^{1/2}\sigma_{\bX}^2s_e\lambda_e^{(1)}\sqrt{\frac{6\log(p)}{n_1}} \Vert\bdelta^{(2)}\Vert_1 
    \asymp \tau \lambda_e^{(2)} s_e\sqrt{\frac{\log(p)}{N_2}} \Vert\bdelta^{(2)}\Vert_1. 
\end{align}
Since $s_e\sqrt{\log(p)/N_2}=o(1)$, given large enough $N_2$, we know that $A_{22}$ can be controlled by $\tau\lambda_e^{(2)} \Vert\bdelta^{(2)}\Vert_1/4$. Combining this observation with \eqref{a2decomp} and \eqref{a21bound} yields that
\begin{align}
    \label{a2bound}
    A_2 \leq \frac{4}{\rho} \tau s_e^{1/2} \lambda_e^{(2)}  \Vert\bdelta^{(2)}\Vert_{\bSigma} + \frac{1}{4} \tau \lambda_e^{(2)} \Vert\bdelta^{(2)}\Vert_1.
\end{align}
Recall that $e_{si}(\bbeta)=Z_{si}(\bbeta)-\tau\bX_{si}^\top\btheta^*$. For $A_3$, note that $\mbE\big\{e_{si}(\bbeta^*)\big\} = \mbE\big[\mbE\{e_{si}(\bbeta^*)\mid\bX_{si}\}\big]=0$. It then follows that 
\begin{align}
    \label{a3decomp}
    -\nabla L(\hat{\bbeta}^{(2)},\btheta^*;\mcD_2) 
    =& \frac{\tau}{n_2} \sum_{i=1}^{n_2} e_{2i}(\bbeta^*) \bX_{2i} + \frac{\tau}{n_2} \sum_{i=1}^{n_2} (1-\mbE)\{e_{2i}(\hat{\bbeta}^{(2)})-e_{2i}(\bbeta^*)\} \bX_{2i} \notag \\ 
    &+ \tau\mbE\{e_{2i}(\hat{\bbeta}^{(2)})\bX_{2i} \} 
\end{align}
Under Assumption \ref{assumpepsilon} and \ref{assumpxboundRME}, by Lemma S 1.1, S 1.2 and S 1.3 in \cite{zhang2023hdesreg}, we have
\begin{align*}
    & \sup_{\bbeta\in\bbeta^* + \mbB_1(r_1)} \bigg\Vert \frac{1}{n_s}\sum_{i=1}^{n_s}(1-\mbE)\{e_{si}(\bbeta)-e_{si}(\bbeta^*)\} \bX_{si} \bigg\Vert_{\infty} \leq 4 B_{\bX} r_1\bigg\{ \sigma_{\bX}\sqrt{\frac{6\log(p)}{n_s}} + B_{\bX}\frac{\log(p)}{n_s}\bigg\}, \\
    & \bigg\Vert\frac{1}{n_s}\sum_{i=1}^{n_s} e_{si}(\bbeta^*)\bX_{si}\bigg\Vert_{\infty} \leq 2\sigma_{\varepsilon}\sigma_{\bX}\sqrt{\frac{\log(p)}{n_s}} + 4 q_{\varepsilon}B_{\bX}\frac{\log(p)}{n_s}
\end{align*}
holds with probability at least $1-4p^{-1}$ and for any $r_0>0$
\begin{align*}
    \sup_{\bbeta\in\bbeta^* + \mbB_{\bSigma}(r_0)} \Vert\mbE\{e_{si}(\bbeta)\bSigma^{-1/2}\bX_{si}\}\Vert_2 \leq \frac{1}{2}F_u M_3 r_0^2.
\end{align*}

Let $\eta_s=(n_s/N_s)^{1/2}$. Given that $\lambda_e^{(2)}\geq 4\sigma_{\bX}(2q_{\epsilon}+5B_{\bX}r_1)\{\log(p)/N_2\}^{1/2}$ and $N_2 \geq 2(q_{\epsilon}/\sigma_{\epsilon})^2(B_{\bX}/\sigma_{\bX})^2\log(p)$, it follows from \eqref{a3decomp} that 
\begin{align}
    \label{a3bound}
    &-\nabla L(\hat{\bbeta}^{(2)},\btheta^*;\mcD_2)^\top (\hat{\btheta}^{(2)}-\btheta) \notag \\
    \leq & \tau \left\{ \bigg\Vert\frac{1}{n_2}\sum_{i=1}^{n_2} e_{2i}(\bbeta^*)\bX_{2i}\bigg\Vert_{\infty} + \sup_{\bbeta\in\bbeta^* + \mbB_1(r_1)} \bigg\Vert \frac{1}{n_2}\sum_{i=1}^{n_2}(1-\mbE)\{e_{2i}(\bbeta)-e_{2i}(\bbeta^*)\} \bX_{2i} \bigg\Vert_{\infty} \right\} \Vert\bdelta^{(2)}\Vert_1 \notag \\
    & + \tau \sup_{\bbeta\in\bbeta^* + \mbB_{\bSigma}(r_0)}\Vert\mbE\{e_{si}(\bbeta)\bSigma^{-1/2}\bX_{si}\}\Vert_2  \Vert\bdelta^{(2)}\Vert_{\bSigma} \notag \\
    \leq & \frac{1}{4}\tau \eta_2\lambda_e^{(2)} \Vert\bdelta^{(2)}\Vert_1 + \frac{1}{2}\tau F_u M_3 r_0^2 \Vert\bdelta^{(2)}\Vert_{\bSigma}
\end{align}
holds with probability at least $1-4p^{-1}$. Following the decomposition \eqref{mcl2rightside} and inequalities \eqref{a1bound}, \eqref{a2bound} and \eqref{a3bound}, we derive the upper bound for $\langle \nabla\mcL_2(\hat{\btheta}^{(2)})-\nabla\mcL_2(\btheta^*), \bdelta^{(2)} \rangle$. Combining the upper bound with \eqref{mcl2leftside} yields that 
\begin{align}
    \label{mcl2upperbound}
    2\tau^2\bdelta^{(2)\top} \tilde{\bSigma}_2 \bdelta^{(2)} \leq \tau\lambda_e^{(2)} \big( 3\Vert\bdelta^{(2)}_{S}\Vert_1-\Vert\bdelta^{(2)}_{S^c}\Vert_1 \big) + 8\tau \rho^{-1} s_e^{1/2} \lambda_e^{(2)}\Vert\bdelta^{(2)}\Vert_{\bSigma} + \tau F_uM_3 r_0^2 \Vert\bdelta^{(2)}\Vert_{\bSigma}, 
\end{align}
which further indicates the cone-type restriction $\Vert\bdelta^{(2)}_{S^c}\Vert_1 \leq 3\Vert\bdelta^{(2)}_{S}\Vert_1 + \big(8\rho^{-1}s_e^{1/2} + F_uM_3r_0^2/\lambda_e^{(2)}\big) \Vert\bdelta^{(2)}\Vert_{\bSigma}$. Provided that $\lambda_q^{(2)}\geq F_uM_3\rho s_0^{-1/2}r_0^2$, we have
\begin{align}
    \label{delta2conerestriction}
    \Vert\bdelta^{(2)}\Vert_1 \leq& 4\Vert\bdelta^{(2)}_{S}\Vert_1 + 9\rho^{-1}s_e^{1/2}\Vert\bdelta^{(2)}\Vert_{\bSigma} \notag \\ 
    \leq& 4s_e^{1/2}\Vert\bdelta^{(2)}\Vert_2 + 9\rho^{-1}s_e^{1/2}\Vert\bdelta^{(2)}\Vert_{\bSigma} \leq 13\rho^{-1} s_e^{1/2}\Vert\bdelta^{(2)}\Vert_{\bSigma},
\end{align}
where the second inequality holds by H{\"o}lder's inequality and the third inequality holds by Assumption \ref{assumpxboundRME}. The inequality \eqref{delta2conerestriction} implies that $\bdelta^{(2)}$ falls in the cone area $\mbC(l_1)$ with $l_1=13\rho^{-1}s_e^{1/2}$. By Lemma A.4 in \cite{zhang2023hdesreg},  we have
\begin{align}
    \label{mcl2lowerbound}
    2\bdelta^{(2)\top} \tilde{\bSigma}_2 \bdelta^{(2)} \geq \Vert\bdelta^{(2)}\Vert_{\bSigma}^2 
\end{align}
holds with probability at least $1-(2p)^{-1}$, provided that $N_2 \geq \max\{CM_4\rho^{-2}\sigma_{\bX}^2(s_e+1)\log(2p), B_{\bX}^2\sigma_{\bX}^{-2}\log(2p)\}$ for some large enough constant $C$. We then conclude from \eqref{mcl2upperbound}, \eqref{delta2conerestriction} and \eqref{mcl2lowerbound} that 
\begin{align*}
    \tau^2 \Vert\bdelta^{(2)}\Vert_{\bSigma}^2/2 \leq 3\tau\lambda_e^{(2)}\Vert\bdelta^{(2)}_{S}\Vert_1/2 + \frac{1}{2}\tau F_uM_3r_0^2\Vert\bdelta^{(2)}\Vert_{\bSigma} \leq \Big(\frac{3}{2\rho}s_e^{1/2}\lambda_e^{(2)} + \frac{1}{2\rho}s_e^{1/2}\lambda_e^{(2)} \Big) \tau \Vert\bdelta^{(2)}\Vert_{\bSigma}.
\end{align*}
By deleting a factor of $\tau \Vert\bdelta^{(2)}\Vert_{\bSigma}$ from both sides, we then derive
\begin{align}
    \label{delta2signorm}
    \tau\Vert\hat{\btheta}^{(2)}-\btheta^*\Vert_{\bSigma} \leq \frac{4}{\rho}s_e^{1/2}\lambda_e^{(2)}.
\end{align}
Combining the cone-type restriction \eqref{delta2conerestriction} with the bound \eqref{delta2signorm} implies 
\begin{align}
    \label{delta2l1norm}
     \tau\Vert\hat{\btheta}^{(2)}-\btheta^*\Vert_{1} \leq \frac{52}{\rho^2}s_e\lambda_e^{(2)}.
\end{align}

Analogously, suppose the error bounds \eqref{delta2signorm} and \eqref{delta2l1norm} hold for $s=1,2,\dots,b-1$ with probability approaching $1$. We then demonstrate the error bounds at time $b$, that is, 
\begin{align}
    \label{deltabsignorm}
    \tau\Vert\hat{\btheta}^{(b)}-\btheta^*\Vert_{\bSigma} \leq \frac{4}{\rho}s_e^{1/2}\lambda_e^{(b)}, \\
     \label{deltabl1norm}
    \tau\Vert\hat{\btheta}^{(b)}-\btheta^*\Vert_1 \leq \frac{52}{\rho^2}s_e\lambda_e^{(b)}. 
\end{align}
The analysis procedure here is similar to that under $s=2$, thus we briefly demonstrate the proof next, with some adjustments in notations from $s=2$ to $s=b$. 
Similarly, we have the decomposition
\begin{align}
    \label{mclbleftside}
    &\langle \nabla\mcL_b(\hat{\btheta}^{(b)})-\nabla\mcL_b(\btheta^*), \hat{\btheta}^{(b)}-\btheta^* \rangle \notag \\
    = & \frac{N_{b-1}}{N_b}\tau^2(\hat{\btheta}^{(b)}-\hat{\btheta}^{(b-1)})^\top \tilde{\bSigma}_{b-1} (\hat{\btheta}^{(b)}-\btheta^*) -  \frac{N_{b-1}}{N_b}\tau^2(\btheta^*-\hat{\btheta}^{(b-1)})^\top \tilde{\bSigma}_{b-1} (\hat{\btheta}^{(b)}-\btheta^*) \notag \\
    & + \frac{n_b}{N_b}\big\{\nabla L(\hat{\bbeta}^{(b)}, \hat{\btheta}^{(b)};\mcD_b) - \nabla L(\hat{\bbeta}^{(b)}, \btheta^*;\mcD_b)\big\}^\top (\hat{\btheta}^{(b)}-\btheta^*) \notag \\
    = & \frac{N_{b-1}}{N_b}\tau^2(\hat{\btheta}^{(b)}-\btheta^*)^\top \tilde{\bSigma}_{b-1} (\hat{\btheta}^{(b)}-\btheta^*) + \frac{n_b}{N_b}\tau^2(\hat{\btheta}^{(b)}-\btheta^*)^\top \hat{\bSigma}_b (\hat{\btheta}^{(b)}-\btheta^*) \notag \\
    = & \tau^2 \{\bdelta^{(b)}\}^\top \tilde{\bSigma}_b \bdelta^{(b)},
\end{align}
and 
\begin{align}
    \label{mclbrightside}
    &\langle \nabla\mcL_b(\hat{\btheta}^{(b)})-\nabla\mcL_b(\btheta^*), \hat{\btheta}^{(b)}-\btheta^* \rangle \notag \\
    =& -\tau\lambda_e^{(b)}\bkappa_b^\top (\hat{\btheta}^{(b)}-\btheta^*) - \frac{N_{b-1}}{N_b} \tau^2 (\btheta^*-\hat{\btheta}^{(b-1)})^\top \tilde{\bSigma}_{b-1} (\hat{\btheta}^{(b)}-\btheta^*) \notag \\
    & -\frac{n_b}{N_b}\nabla L(\hat{\bbeta}^{(b)}, \btheta^*; \mcD_b)^\top (\hat{\btheta}^{(b)}-\btheta^*) \notag \\
    :=& D_1 + D_2 + D_3,
\end{align}
where $\bkappa_b\in \partial \Vert\hat{\btheta}^{(b)}\Vert_1$ is some subgradient. For $D_1$, by triangle inequality we have
\begin{align}
    \label{D1bound}
    D_1 \leq \tau \lambda_e^{(b)} \big(\Vert\btheta^*\Vert_1 - \Vert\hat{\btheta}^{(b)}\Vert_1\big) \leq \tau \lambda_e^{(b)}  \big(\Vert\bdelta^{(b)}_{S}\Vert_1 - \Vert\bdelta^{(b)}_{S^c}\Vert_1\big).   
\end{align}
For $D_2$, by triangle inequality and H{\"o}lder's inequality, we get
\begin{align}
    \label{D2decomp}
    D_2 =& \frac{N_{b-1}}{N_b} \tau^2 \bdelta^{(b-1)\top} \tilde{\bSigma}_{b-1} \bdelta^{(b)} \notag \\
    =& \frac{N_{b-1}}{N_b} \tau^2 \bdelta^{(b-1)\top} \bSigma \bdelta^{(b)} + \frac{N_{b-1}}{N_b} \tau^2 \bdelta^{(b-1)\top} \big( \tilde{\bSigma}_{b-1}-\bSigma \big) \bdelta^{(b)} \notag \\
    \leq& \frac{N_{b-1}}{N_b} \tau^2 \Vert\bSigma^{1/2}\bdelta^{(b-1)}\Vert_2 \Vert\bSigma^{1/2}\bdelta^{(b)}\Vert_2 \notag \\
    &+ \frac{N_{b-1}}{N_b} \tau^2 \Vert\big(\tilde{\bSigma}_{b-1}-\bSigma\big)^{1/2}\bdelta^{(b-1)}\Vert_2 \Vert\big(\tilde{\bSigma}_{b-1}-\bSigma\big)^{1/2}\bdelta^{(b)}\Vert_2  \notag \\
    :=& D_{21} + D_{22}.
\end{align}
Note that $\tau\Vert\bdelta^{(b-1)}\Vert_{\bSigma} \leq 4\rho^{-1}s_e^{1/2}\lambda_e^{(b-1)}$, given $\lambda_e^{(b)}\sqrt{N_b} \asymp \lambda_e^{(b-1)}\sqrt{N_{b-1}}$, we have
\begin{align}
    \label{D21bound}
    D_{21} = \frac{N_{b-1}}{N_b} \tau^2 \Vert\bdelta^{(b-1)}\Vert_{\bSigma} \Vert\bdelta^{(b)}\Vert_{\bSigma} \leq 4\tau\rho^{-1}s_e^{1/2}\lambda_e^{(b)} \Vert\bdelta^{(b)}\Vert_{\bSigma}.    
\end{align}
Moreover, by Lemma \ref{lemmacovest}, we get $\Vert\tilde{\bSigma}_{b-1}-\bSigma\Vert_{\max} \leq M_4^{1/2}\sigma_{\bX}^2\sqrt{6\log(p)/N_{b-1}} + B_{\bX}^2\log(p)/N_{b-1}$ holds with probability no lower than $1-2/p$. Using Lemma \ref{lemmaquadraformbound}, as long as $N_{b-1}\geq (B_{\bX}\sigma_{\bX})^4M_4^{-1}\log(p)$,  it follows that
\begin{align}
    \label{D22bound}
    D_{22} \leq &\frac{N_{b-1}}{N_b}\tau^2 \Vert\bdelta^{(b-1)}\Vert_1 \Vert\tilde{\bSigma}_{b-1}-\bSigma\Vert_{\max} \Vert\bdelta^{(b)}\Vert_1 \notag \\
    \leq & \frac{N_{b-1}}{N_b} 52\tau \rho^{-1} M_4^{1/2}\sigma_{\bX}^2s_e\lambda_e^{(b-1)}\sqrt{\frac{6\log(p)}{N_{b-1}}} \Vert\bdelta^{(b)}\Vert_1 
    \asymp \tau \lambda_e^{(b)} s_e\sqrt{\frac{\log(p)}{N_b}} \Vert\bdelta^{(b)}\Vert_1. 
\end{align}
For a large enough $N_b$, $D_{22}$ is well-controlled by $\tau\lambda_e^{(b)}\Vert\bdelta^{(b)}\Vert_1/4$. 
\begin{align}
    \label{D2bound}
    D_2 \leq 4\tau\rho^{-1}s_e^{1/2}\lambda_e^{(b)} \Vert\bdelta^{(b)}\Vert_{\bSigma} + 4^{-1}\tau\lambda_e^{(b)}\Vert\bdelta^{(b)}\Vert_1.
\end{align}
For $D_3$, 
\begin{align}
    \label{D3decomp}
    -\nabla L(\hat{\bbeta}^{(b)},\btheta^*;\mcD_b) 
    =& \frac{\tau}{n_b} \sum_{i=1}^{n_b} e_{bi}(\bbeta^*) \bX_{bi} + \frac{\tau}{n_b} \sum_{i=1}^{n_b} (1-\mbE)\{e_{bi}(\hat{\bbeta}^{(b)})-e_{bi}(\bbeta^*)\} \bX_{bi} \notag \\ 
    &+ \tau\mbE\{e_{bi}(\hat{\bbeta}^{(b)})\bX_{bi} \}  
\end{align}
Similarly, using Lemma S 1.1, S 1.2 and S 1.3 in \cite{zhang2023hdesreg},
given $\hat{\bbeta}^{(b)} \in \bbeta^* + \mbB_{\bSigma}(r_0) \cap
 \mbB_1(r_1)$, $\lambda_e^{(b)}\geq 4\sigma_{\bX}(2q_{\epsilon}+5B_{\bX}r_1)\{\log(p)/N_b\}^{1/2}$ and $N_b \geq 2(q_{\epsilon}/\sigma_{\epsilon})^2(B_{\bX}/\sigma_{\bX})^2\log(p)$,
\begin{align}
    \label{D3bound}
    &-\nabla L(\hat{\bbeta}^{(b)},\btheta^*;\mcD_b)^\top (\hat{\btheta}^{(b)}-\btheta) \notag \\
    \leq & \tau \left\{ \bigg\Vert\frac{1}{n_b}\sum_{i=1}^{n_b} e_{bi}(\bbeta^*)\bX_{bi}\bigg\Vert_{\infty} + \sup_{\bbeta\in\bbeta^* + \mbB_1(r_1)} \bigg\Vert \frac{1}{n_b}\sum_{i=1}^{n_b}(1-\mbE)\{e_{bi}(\bbeta)-e_{bi}(\bbeta^*)\} \bX_{bi} \bigg\Vert_{\infty} \right\} \Vert\bdelta^{(b)}\Vert_1 \notag \\
    & + \tau \sup_{\bbeta\in\bbeta^* + \mbB_{\bSigma}(r_0)}\Vert\mbE\{e_{bi}(\bbeta)\bSigma^{-1/2}\bX_{bi}\}\Vert_2  \Vert\bdelta^{(b)}\Vert_{\bSigma} \notag \\
    \leq & \frac{1}{4}\tau \eta_b\lambda_e^{(b)} \Vert\bdelta^{(b)}\Vert_1 + \frac{1}{2}\tau F_u M_3 r_0^2 \Vert\bdelta^{(b)}\Vert_{\bSigma}
\end{align}
holds with probability at least $1-4p^{-1}$. Combining \eqref{D1bound}, \eqref{D2bound}, \eqref{D3bound} with \eqref{mclbleftside} and \eqref{mclbrightside}, we conclude that 
\begin{align}
    \label{mclbupperbound}
    2\tau^2\bdelta^{(b)\top} \tilde{\bSigma}_b \bdelta^{(b)} \leq \tau\lambda_e^{(b)} \big( 3\Vert\bdelta^{(b)}_{S}\Vert_1-\Vert\bdelta^{(b)}_{S^c}\Vert_1 \big) + 8\tau \rho^{-1} s_e^{1/2} \lambda_e^{(b)}\Vert\bdelta^{(b)}\Vert_{\bSigma} + \tau F_uM_3 r_0^2 \Vert\bdelta^{(b)}\Vert_{\bSigma}.
\end{align}
Since the left side of \eqref{mclbupperbound} is non-negative, by Assumption \ref{assumpxboundRME} and  H{\"o}lder's inequality, we have
\begin{align}
    \label{deltabconerestr}
    \Vert\bdelta^{(b)}\Vert_1 \leq 4 s_e^{1/2}\Vert\bdelta^{(b)}\Vert_2 + 9\rho^{-1}s_e^{1/2}\Vert\bdelta^{(b)}\Vert_{\bSigma} \leq 13 \rho^{-1} s_e^{1/2} \Vert\bdelta^{(b)}\Vert_{\bSigma}.
\end{align}
provided that $\lambda_e^{(s)} \geq \rho F_uM_3s_e^{-1/2}r_0^2$.
Therefore, $\bdelta^{(b)}$ lies in the cone set $\mbC(l)$ with $l=13\rho^{-1}s_e^{1/2}$. Again, by Lemma A.4 in \cite{zhang2023hdesreg}, if $N_b \geq \max\{CM_4\rho^{-2}\sigma_{\bX}^2(s_e+1)\log(2p), B_{\bX}^2\sigma_{\bX}^{-2}\log(2p)\}$ for some large enough constant $C$, we have
\begin{align}
    \label{mclblowerbound}
    2\bdelta^{(b)\top} \tilde{\bSigma}_b \bdelta^{(b)} \geq \Vert\bdelta^{(b)}\Vert_{\bSigma}^2 
\end{align}
holds with probability no lower than $1-(2p)^{-1}$. We then conclude from \eqref{mclbupperbound}, \eqref{deltabconerestr} and \eqref{mclblowerbound} that 
\begin{align*}
    \tau^2 \Vert\bdelta^{(b)}\Vert_{\bSigma}^2/2 \leq 3\tau\lambda_e^{(b)}\Vert\bdelta^{(b)}_{S}\Vert_1/2 + \frac{1}{2}\tau F_uM_3r_0^2\Vert\bdelta^{(b)}\Vert_{\bSigma} \leq \Big(\frac{3}{2\rho}s_e^{1/2}\lambda_e^{(b)} + \frac{1}{2\rho}s_e^{1/2}\lambda_e^{(b)} \Big) \tau \Vert\bdelta^{(b)}\Vert_{\bSigma}.
\end{align*}
By deleting a factor of $\tau \Vert\bdelta^{(2)}\Vert_{\bSigma}$ from both sides, we then demonstrate the $\bSigma$-norm error bounds \eqref{deltabsignorm} for $\hat{\btheta}^{(b)}$. Moreover, through the cone-type restriction \eqref{deltabconerestr}, the $\ell_1$-norm error bounds \eqref{deltabl1norm} is confirmed.

\end{proof}

\begin{proof}[Proof of Corollary \ref{crlryesterr}] 
   By Proposition \ref{propbetaesterr}, we know that $\hat{\bbeta}^{(s)} \in \bbeta^* + \mbB_{\bSigma}(r_0) \cap \mbB_1(r_1)$ with $r_0=C_{0,s}\{s_0\log(p)/N_s\}^{1/2}$, $r_1=C_{1,s}s_0\{\log(p)/N_s\}^{1/2}$ for some constants $C_{1,2}$ and $C_{2,2}$ at a large probability $1-4/p$. Substituting $r_0,r_1$ in Theorem \ref{thmesterrbound}, with $N_s\geq C'^2s_0^2\log(p)$ for some constant $C'$, the penalization parameter $\lambda_e^{(s)}$ then satisfies 
   $\lambda_e^{(s)} \geq C_3\sqrt{\log(p)/N_s}$ with $C_3=\max\{4\sigma_{\bX}(2q_{\epsilon}+5B_{\bX}C_{1,s}/C),F_uM_3\rho C_{0,s}^2/C\}$. Further note that $\lambda_e^{(s)}\leq C_4\sqrt{\log(p)/N_s}$ for some large enough constant $C_4$, it then follows that
   \[
   \tau\Vert\hat{\btheta}^{(s)}-\btheta^*\Vert_{\bSigma} \leq 4C_4\rho^{-1}s_e^{1/2}\sqrt{\log(p)/N_s}, \quad
    \tau\Vert\hat{\btheta}^{(b)}-\btheta^*\Vert_1 \leq 52C_4\rho^{-2}s_e\sqrt{\log(p)/N_s}.
   \]
   with probability at least $1-5/p$. Thus the statement holds.
   
\end{proof}

\begin{proof}[Proof of Theorem \ref{theoremasympnorm}]
For any $\theta_j\in\mbR$, note that 
\begin{align*}
    \hat{S}_{n_s}^{(s)}(\theta_j) - \hat{S}_{n_s}^{(s)}(\theta_j^*) = -\frac{\tau}{n_s}\sum_{i=1}^{n_s}X_{si,j}\big(X_{si,j}-\bX_{si,-j}^\top\hat{\bgamma}^{(s)}\big)(\theta_j-\theta_j^*).
\end{align*}
Recall that $\hat{H}_{n_s}^{(s)}=-\tau/n_s\sum_{i=1}^{n_s}X_{si,j}\big(X_{si,j}-\bX_{si,-j}^\top\hat{\bgamma}^{(s)}\big)$, thus we know that 
\begin{align*}
    \hat{S}_{n_s}^{(s)}(\theta_j) - \hat{S}_{n_s}^{(s)}(\theta_j^*) = \hat{H}_{n_s}^{(s)}(\theta_j-\theta_j^*).
\end{align*}
It then follows that 
\begin{align}
    \label{debiasesterrdecomp}
    \tilde{\theta}_{j}^{(b)}-\theta_j^* =& \hat{\theta}_{j}^{(b)} - \theta_j^* - \tilde{S}_{N_b}/\tilde{H}_{N_b} - N_b^{-1}\sum_{s=1}^b n_s\hat{H}_{n_s}^{(s)}\big( \hat{\theta}_j^{(s)}-\hat{\theta}_j^{(b)} \big) / \tilde{H}_{N_b} \notag \\
    =& \hat{\theta}_{j}^{(b)}-\theta_j^* - \Big\{\tilde{S}_{N_b}-\frac{1}{N_b}\sum_{s=1}^{b}n_s\hat{S}_{n_s}^{(s)}(\theta_j^*)\Big\}/\tilde{H}_{N_b} - \frac{1}{N_b}\sum_{s=1}^{b} n_s \hat{S}_{n_s}^{(s)}(\theta_j^*)/{\tilde{H}_{N_b}}  \notag \\
    &-\sum_{s=1}^b n_s \hat{H}_{n_s}^{(s)}\big(\hat{\theta}_j^{(s)}-\hat{\theta}_j^{(b)}\big)/\tilde{H}_{N_b} \notag \\ 
        =& \hat{\theta}_{j}^{(b)}-\theta_j^* - \frac{N_b^{-1} \sum_{s=1}^bn_s\big\{\hat{S}_{n_s}^{(s)}(\hat{\theta}_j^{(s)})-\hat{S}_{n_s}^{(s)}(\theta_j^*)\big\}}{\tilde{H}_{N_b}} - \frac{N_b^{-1}\sum_{s=1}^bn_s\hat{H}_{n_s}^{(s)}\big(\hat{\theta}_j^{(b)}-\theta_j^{(s)}\big)}{\tilde{H}_{N_b}} \notag \\
    & - \frac{1}{N_b}\sum_{s=1}^{b}n_s\hat{S}_{n_s}^{(s)}(\theta_j^*)/\tilde{H}_{N_b} \notag \\
    =& \Bigg\{\hat{\theta}_{j}^{(b)}-\theta_j^* - \frac{N_b^{-1} \sum_{s=1}^bn_s\hat{H}_{n_s}^{(s)}\big(\hat{\theta}_j^{(s)}-\theta_j^*\big)+ N_b^{-1}\sum_{s=1}^bn_s\hat{H}_{n_s}^{(s)}\big(\hat{\theta}_j^{(b)}-\theta_j^{(s)}\big)}{N_b^{-1}\sum_{s=1}^bn_s\hat{H}_{n_s}^{(s)}}  \Bigg\} \notag \\
    & - \frac{1}{N_b}\sum_{s=1}^{b}n_s\hat{S}_{n_s}^{(s)}(\theta_j^*)/\tilde{H}_{N_b} \notag \\
    =& -\frac{1}{N_b} \sum_{s=1}^{b}n_sS_{n_s}^*/\tilde{H}_{N_b} - \frac{1}{N_b}\sum_{s=1}^{b}n_s\big\{\hat{S}_{n_s}^{(s)}(\theta_j^*)-S_{n_s}^*\big\}/\tilde{H}_{N_b} := E_0 + E_1.  
\end{align}
We first investigate the leading term $E_0$ in \eqref{debiasesterrdecomp}. Denote the "oracle" score as $S_{n_s}^*=S_{n_s}^{(s)}(\theta_j^*,\btheta_{-j}^*,\bbeta^*,\bgamma^*)$. Applying the central limit theorem, we see that 
\begin{align*}
    n_s^{1/2}S_{n_s}^* = n_s^{-1/2}\sum_{i=1}^{n_s}e_{si}(\bbeta^*)\omega_{si} \stackrel{d}{\rightarrow} N\big(0,\sigma_{I}^2\big).
\end{align*}
Recall that $\omega_{si}=X_{si,j}-\bX_{si,-j}^\top{\bgamma^*}$ and 
\begin{align}
    \label{E0vardecomp}
    \tilde{H}_{N_b}=-&\frac{\tau}{N_b}\sum_{s=1}^{b}\sum_{i=1}^{n_s}X_{si,j}\big(X_{si,j}-\bX_{si,-j}^\top\hat{\bgamma}^{(s)}\big) \notag \\
    =&-\frac{\tau}{N_b}\sum_{s=1}^{b}\sum_{i=1}^{n_s}X_{si,j}\omega_{si} + \frac{\tau}{N_b} \sum_{s=1}^{b}\sum_{i=1}^{n_s}X_{si,j}\bX_{si,-j}^\top(\hat{\bgamma}^{(s)}-\bgamma^*).
\end{align}
Firstly note that 
\begin{align*}
    \mbE(X_{si,j}\omega_{si})=\mbE\{(X_{si,-j}^\top\bgamma^*+\omega_{si})\omega_{si}\}=\mbE(\omega_{si}^2)=\sigma_{\omega}^2.
\end{align*}
Therefore, by the law of large numbers, 
\begin{align}
\label{E0varconverge}
N_b^{-1}\sum_{s=1}^{b}\sum_{i=1}^{n_s}X_{si,j}\omega_{si} \stackrel{p}{\rightarrow} \sigma_{\omega}^2.
\end{align}
Moreover, we have
\begin{align}
    \label{E0varerr}
    &\bigg|\frac{1}{n_s}\sum_{i=1}^{n_s}X_{si,j}\bX_{si,-j}^\top(\hat{\bgamma}^{(s)}-\bgamma^*)\bigg| \notag \\
    =& \bigg|\frac{1}{n_s}\sum_{i=1}^{n_s}(\bgamma^{*\top}\bX_{si,-j}\bX_{si,-j}^\top+\omega_{si}\bX_{si,-j}^\top)(\hat{\bgamma}^{(s)}-\bgamma^*)\bigg| \notag \\
    =& \bigg|\sum_{i=1}^{n_s}\omega_{si}\bX_{si,-j}^\top(\hat{\bgamma}^{(s)}-\bgamma^*) + \frac{1}{n_s}\bgamma^{*\top}\mbE\Big\{\frac{1}{n_s}\sum_{i=1}^{n_s}\bX_{si,-j}\bX_{si,-j}^\top\Big\}(\hat{\bgamma}^{(s)}-\bgamma^*) \notag \\
    &+\bgamma^{*\top}(1-\mbE)\Big\{\frac{1}{n_s}\sum_{i=1}^{n_s}\bX_{si,-j}\bX_{si,-j}^\top\Big\}(\hat{\bgamma}^{(s)}-\bgamma^*) \bigg| \notag \\
    \leq&  \Vert\hat{\bSigma}_{-j}^{(s)}-\bSigma_{-j}\Vert_{\max}\Vert\bgamma^*\Vert_1\Vert\hat{\bgamma}^{(s)}-\bgamma^*\Vert_1 + \Vert\bgamma^*\Vert_{\bSigma_{-j}}\Vert\hat{\bgamma}^{(s)}-\bgamma^*\Vert_{\bSigma_{-j}} + \bigg\Vert\frac{1}{n_s}\sum_{i=1}^{n_s}\omega_{si}\bX_{si,-j}\bigg\Vert_{\infty}\Vert\hat{\bgamma}^{(s)}-\bgamma^*\Vert_1.
\end{align}
Recall that we already showed that
\begin{align*}
    &\Vert\hat{\bSigma}_{-j}^{(s)}-\bSigma_{-j}\Vert_{\max} \lesssim \big\{\log(p)/n_s\big\}^{1/2}.
\end{align*} 
By Proposition \ref{propgammaesterr}, we know that
\begin{align*}
    \Vert\hat{\bgamma}^{(s)}-\bgamma^*\Vert_{\bSigma_{-j}} \lesssim \big\{s_r\log(p)/N_s\big\}^{1/2}, \quad \Vert\hat{\bgamma}^{(s)}-\bgamma^*\Vert_1 \lesssim s_r\big\{\log(p)/N_s\big\}^{1/2}.
\end{align*}
Moreover, taking $t=2\log(p-1)$ in Lemma S 1.3 of \cite{zhang2023hdesreg}, we have
\begin{align*}
    \bigg\Vert\frac{1}{n_s}\sum_{i=1}^{n_s}\omega_{si}\bX_{si,-j}\bigg\Vert_{\infty} \lesssim \big\{\log(p)/n_s\big\}^{1/2}.
\end{align*}
In addition, note that 
\begin{align*}
    \Vert\bgamma^*\Vert_{\bSigma_{-j}}^2 =& \bgamma^{*\top}\mbE(\bX_{si,-j}\bX_{si,-j}^\top)\bgamma^* = \bgamma^{*\top}\mbE\big\{\bX_{si,-j}(X_{si,j}-\omega_{si})\big\} \\
    =& \bgamma^{*\top}\mbE\big(\bX_{si,-j}X_{si,j}\big) = \mbE(X_{si,j}^2)-\sigma_{\omega}^2=O(1).
\end{align*}
Moreover, by Cauchy-Schwarz inequality and Assumption \ref{assumpxboundRME} we see that 
\begin{align*}
    \Vert\bgamma^*\Vert_1 \leq s_m^{1/2}\Vert\bgamma^*\Vert_2 \leq \Lambda_p^{-1}s_m^{1/2}\Vert\bgamma^*\Vert_{\bSigma_{-j}}.
\end{align*}
Putting these bounds into \eqref{E0varerr} yields 
\begin{align*}
    \bigg|\frac{1}{n_s}\sum_{i=1}^{n_s}X_{si,j}\bX_{si,-j}^\top(\hat{\bgamma}^{(s)}-\bgamma^*)\bigg| \lesssim \sqrt{\frac{s_r\log(p)}{N_s}} + \sqrt{\frac{s_ms_r^2\log^2(p)}{n_sN_s}}.
\end{align*}
Summing over $s=1,\dots,b$ and using Lemma \ref{lemmabatchsize} yields that
\begin{align}
    \label{E0varerrbound}
     &\bigg|\frac{1}{N_b} \sum_{s=1}^{b}\sum_{i=1}^{n_s}X_{si,j}\bX_{si,-j}^\top(\hat{\bgamma}^{(s)}-\bgamma^*)\bigg| \notag \\ \leq &\frac{1}{N_b}\sum_{s=1}^{b}n_s\bigg|\frac{1}{n_s}\sum_{i=1}^{n_s}X_{si,j}\bX_{si,-j}^\top(\hat{\bgamma}^{(s)}-\bgamma^*)\bigg| \notag \\
     \leq & \frac{1}{N_b}\big\{s_r\log(p)\big\}^{1/2}\left[\sum_{s=1}^{b}\frac{n_s}{\sqrt{N_s}}+\big\{s_ms_r\log(p)\big\}^{1/2}\sum_{s=1}^{b}\sqrt{\frac{n_s}{N_s}}\right] \notag \\
     \leq & \big\{s_r\log(p)/N_b
     \big\}^{1/2} + s_rs_m^{1/2}\log(p)\big[b\{1+\log(N_b/n_1)\}\big]^{1/2}/N_b.
\end{align}
Combing \eqref{E0vardecomp}, \eqref{E0varconverge} and \eqref{E0varerrbound}, as long as $s_r^2s_m\log^2(p)b\{1+\log(N_b/n_1)\}=o(N_b)$, we conclude that $\tilde{H}_{N_b}\stackrel{p}{\rightarrow}-\tau\sigma_{\omega}^2$. Finally, by Slutsky's theorem, 
\begin{align*}
    E_0= \frac{-N_b^{-1/2}\sum_{s=1}^{b}n_s S_{n_s}^*}{\tilde{H}_{N_b}}\stackrel{d}{\rightarrow}N(0,\sigma_{I}^2/\sigma_{\omega}^4).
\end{align*}

Turn to the remainder $E_1$ in \eqref{debiasesterrdecomp}. By the decomposition
\begin{align}
    \label{scorespproxdecomp}
    &S_{n_s}^{(s)}(\theta_j^*,\hat{\btheta}_{-j}^{(s)},\hat{\bbeta}^{(s)},\hat{\bgamma}^{(s)}) - S_{n_s}^{(s)}(\theta_j^*,\btheta_{-j}^*,\bbeta^*,\bgamma^*) \notag \\
    =& S_{n_s}^{(s)}(\theta_j^*,\hat{\btheta}_{-j}^{(s)},\hat{\bbeta}^{(s)},\hat{\bgamma}^{(s)}) - S_{n_s}^{(s)}(\theta_j^*,\hat{\btheta}_{-j}^{(s)},\hat{\bbeta}^{(s)},\bgamma^*) + S_{n_s}^{(s)}(\theta_j^*,\hat{\btheta}_{-j}^{(s)},\hat{\bbeta}^{(s)},\bgamma^*) - S_{n_s}^{(s)}(\theta_j^*,\btheta_{-j}^*,\hat{\bbeta}^{(s)},\bgamma^*) \notag \\
    &+ S_{n_s}^{(s)}(\theta_j^*,\btheta_{-j}^*,\hat{\bbeta}^{(s)},\bgamma^*) - S_{n_s}^{(s)}(\theta_j^*,\btheta_{-j}^*,\bbeta^*,\bgamma^*) \notag \\
    =& -\frac{1}{n_s}\sum_{i=1}^{n_s}\big\{Z_{si}(\hat{\bbeta}^{(s)})-\tau{X}_{si,j}\theta_j^*-\tau\bX_{si,-j}^\top\hat{\btheta}_{-j}^{(s)}\big\}\bX_{si,-j}^\top(\hat{\bgamma}^{(s)}-\bgamma^*) \notag \\
    & -\frac{\tau}{n_s}\sum_{i=1}^{n_s}\omega_{si}\bX_{si,-j}^{\top} \big(\hat{\btheta}_{-j}^{(s)}-\btheta_{-j}^*\big) + \frac{1}{n_s}\sum_{i=1}^{n_s}\big\{Z_{si}(\hat{\bbeta}^{(s)})-Z_{si}(\bbeta^*)\big\}\omega_{si} \notag \\
    =& -\frac{1}{n_s}\sum_{i=1}^{n_s}e_{si}(\hat{\bbeta}^{(s)})\bX_{si,-j}^\top(\hat{\bgamma}^{(s)}-\bgamma^*) + \frac{\tau}{n_s}\sum_{i=1}^{n_s}(\hat{\btheta}_{-j}^{(s)}-\btheta_{-j}^*)^\top\bX_{si,-j}\bX_{si,-j}^\top(\hat{\bgamma}^{(s)}-\bgamma^*) \notag \\
    & -\frac{\tau}{n_s}\sum_{i=1}^{n_s}\omega_{si}\bX_{si,-j}^{\top} \big(\hat{\btheta}_{-j}^{(s)}-\btheta_{-j}^*\big) + \frac{1}{n_s}\sum_{i=1}^{n_s}\big\{e_{si}(\hat{\bbeta}^{(s)})-e_{si}(\bbeta^*)\big\}\omega_{si}  \notag \\
    =& -\frac{1}{n_s}\sum_{i=1}^{n_s}(1-\mbE)\Big[\{e_{si}(\hat{\bbeta}^{(s)})-e_{si}(\bbeta^*)\}\bX_{si,-j}^{\top}\Big](\hat{\bgamma}^{(s)}-\bgamma^*) + \frac{1}{n_s}\sum_{i=1}^{n_s}e_{si}(\bbeta^*)\bX_{si,-j}^{\top}(\hat{\bgamma}^{(s)}-\bgamma^*) \notag \\
    &+ \mbE\{e_{si}(\hat{\bbeta}^{(s)})\bSigma_{-j}^{-1/2}\bX_{si,-j}\}^{\top} \bSigma_{-j}^{1/2}(\hat{\bgamma}^{(s)}-\bgamma^*) + \tau\big(\hat{\btheta}_{-j}^{(s)}-\btheta_{-j}^*\big)^{\top}(\hat{\bSigma}_{-j}^{(s)}-\bSigma_{-j})(\hat{\bgamma}^{(s)}-\bgamma^*) \notag \\
    &+ \tau\big(\hat{\btheta}_{-j}^{(s)}-\btheta_{-j}^*\big)^{\top}\bSigma_{-j}(\hat{\bgamma}^{(s)}-\bgamma^*) -\frac{\tau}{n_s}\sum_{i=1}^{n_s}\omega_{si}\bX_{si,-j}^{\top} \big(\hat{\btheta}_{-j}^{(s)}-\btheta_{-j}^*\big) \notag \\
    &+ \frac{1}{n_s}\sum_{i=1}^{n_s}(1-\mbE)\Big[\{e_{si}(\hat{\bbeta}^{(s)})-e_{si}(\bbeta^*)\}\omega_{si}\Big] + \mbE\{e_{si}(\hat{\bbeta}^{(s)})\omega_{si}\}.
\end{align}
By the fact $\mbE(e_{si}\bX_{si})=\mbE\{\bX_{si}\mbE(e_{si}\mid\bX_{si})\}=0$ and similarly $\mbE(e_{si}\omega_{si})=0$, under the restriction $\hat{\bbeta}^{(s)}\in\bbeta^*+\mbB_{\bSigma}(r_0)\cap\mbB_1(r_1)$, it follows that
\begin{align}
    \label{scoreapproxerror}
    & \big|S_{n_s}^{(s)}(\theta_j^*,\hat{\btheta}_{-j}^{(s)},\hat{\bbeta}^{(s)},\hat{\bgamma}^{(s)}) - S_{n_s}^{(s)}(\theta_j^*,\btheta_{-j}^*,\bbeta^*,\bgamma^*)\big| \notag \\ 
    \leq& \sup_{\bbeta\in\bbeta^*+\mbB_{1}(r_1)}\bigg\Vert\frac{1}{n_s}\sum_{i=1}^{n_s}(1-\mbE)\Big[\{e_{si}(\bbeta)-e_{si}(\bbeta^*)\}\bX_{si}\Big]\bigg\Vert_{\infty}\Vert\hat{\bgamma}^{(s)}-\bgamma^*\Vert_1 \notag \\ 
    &+ \bigg\Vert\frac{1}{n_s}\sum_{i=1}^{n_s}e_{si}(\bbeta^*)\bX_{si}\bigg\Vert_{\infty}\Vert\hat{\bgamma}^{(s)}-\bgamma^*\Vert_1 + \sup_{\bbeta\in\bbeta^*+\mbB_{\bSigma}(r_0)}\Vert\mbE\{e_{si}(\bbeta)\bSigma^{-1/2}\bX_{si}\}\Vert_2\Vert\hat{\bgamma}^{(s)}-\bgamma^*\Vert_{\bSigma_{-j}} \notag \\
    &+ \tau\Vert\hat{\btheta}_{-j}^{(s)}-\btheta_{-j}^*\Vert_1\Vert\hat{\bSigma}_{-j}^{(s)}-\bSigma_{-j}\Vert_{\max}\Vert\hat{\bgamma}^{(s)}-\bgamma^*\Vert_1 + \tau\Vert\hat{\btheta}_{-j}^{(s)}-\btheta_{-j}^*\Vert_{\bSigma_{-j}}\Vert\hat{\bgamma}^{(s)}-\bgamma^*\Vert_{\bSigma_{-j}} \notag \\
    &+ \bigg\Vert\frac{1}{n_s}\sum_{i=1}^{n_s}\omega_{si}\bX_{si} \bigg\Vert_{\infty}\tau\Vert\hat{\btheta}_{-j}^{(s)}-\btheta_{-j}^*\Vert_1 + \sup_{\bbeta\in\bbeta^*+\mbB_1(r_1)}\bigg\vert\frac{1}{n_s}\sum_{i=1}^{n_s}(1-\mbE)\Big[\{e_{si}(\bbeta)-e_{si}(\bbeta^*)\}\omega_{si}\Big]\bigg\vert \notag \\
    &+ \sup_{\bbeta\in\bbeta^*+\mbB_{\bSigma}(r_0)}\big\vert\mbE\{e_{si}(\bbeta)\omega_{si}\}\big\vert.
\end{align}
We then analyze each term in \eqref{scoreapproxerror}. Firstly, by Lemma A.2 in \cite{zhang2023hdesreg}, we know that
\begin{align*}
    \sup_{\bbeta\in\bbeta^*+\mbB_{\bSigma}(r_0)}\Vert\mbE\{e_{si}(\bbeta)\bSigma^{-1/2}\bX_{si}\}\Vert_2 \leq \frac{1}{2}F_uM_3r_0^2, \\ \sup_{\bbeta\in\bbeta^*+\mbB_{\bSigma}(r_0)}\big\vert\mbE\{e_{si}(\bbeta)\omega_{si}\}\big\vert \leq \frac{1}{2}F_uB_{\bX}r_0^2. 
\end{align*}
Taking $t=\log(p)$ in Lemma A.1 in \cite{zhang2023hdesreg}, we have with probability at least $1-2/p$ 
\begin{align*}
    \sup_{\bbeta\in\bbeta^*+\mbB_{1}(r_1)}\bigg\Vert\frac{1}{n_s}\sum_{i=1}^{n_s}(1-\mbE)\Big[\{e_{si}(\bbeta)-e_{si}(\bbeta^*)\}\bX_{si}\Big]\bigg\Vert_{\infty} &\leq C_1B_{\bX}\sigma_{\bX}\{\log(p)/n_s\}^{1/2}r_1, \\
    \sup_{\bbeta\in\bbeta^*+\mbB_1(r_1)}\bigg\vert\frac{1}{n_s}\sum_{i=1}^{n_s}(1-\mbE)\Big[\{e_{si}(\bbeta)-e_{si}(\bbeta^*)\}\omega_{si}\Big]\bigg\vert &\leq C_1B_{\bX}\sigma_{\bX}\{\log(p)/n_s\}^{1/2}r_1,
\end{align*}
provided that $n_s\geq(q_{\epsilon}/\sigma_{\epsilon})^2(B_{\bX}/\sigma_{\bX})^2\log(p)$. Similarly, by Lemma A.3 in \cite{zhang2023hdesreg}, we have with probability at least $1-2/p$
\begin{align*}
    \bigg\Vert\frac{1}{n_s}\sum_{i=1}^{n_s}e_{si}(\bbeta^*)\bX_{si}\bigg\Vert_{\infty} &\leq C_2\sigma_{\epsilon}\sigma_{\bX}\{\log(p)/n_s\}^{1/2}, \\
    \bigg\Vert\frac{1}{n_s}\sum_{i=1}^{n_s}\omega_{si}\bX_{si} \bigg\Vert_{\infty} &\leq C_2B_{\bX}^2\{\log(p)/n_s\}^{1/2}.
\end{align*}

Further note that $\Vert\hat{\bSigma}_{-j}^{(s)}-\bSigma_{-j}\Vert_{\max} \leq \Vert\hat{\bSigma}_{s}-\bSigma\Vert_{\max}$, it then follows from Lemma \ref{lemmacovest} that $\Vert\hat{\bSigma}_{s}-\bSigma\Vert_{\max}\leq 
M_4^{1/2}\sigma_{\bX}^2\sqrt{6\log(p)/n_s} + B_{\bX}^2\log(p)/n_s$ holds with probability no lower than $1-2/p$. Combining all the pieces yields that given $\hat{\bbeta}^{(s)}\in\bbeta^*+\mbB_{\bSigma}(r_0)\cap\mbB_1(r_1)$, with probability at least $1-8/p$ 
\begin{align}
    \label{scoreapproxmed}
    &\big|S_{n_s}^{(s)}(\theta_j^*,\hat{\btheta}_{-j}^{(s)},\hat{\bbeta}^{(s)},\hat{\bgamma}^{(s)}) - S_{n_s}^{(s)}(\theta_j^*,\btheta_{-j}^*,\bbeta^*,\bgamma^*)\big| \notag \\
    \leq& C_1B_{\bX}\sigma_{\bX}\Big\{\frac{\log(p)}{n_s}\Big\}^{1/2}r_1 + \frac{1}{2}F_uB_{\bX}r_0^2 + (C_1B_{\bX}r_1+C_2\sigma_{\epsilon})\sigma_{\bX}\Big\{\frac{\log(p)}{n_s}\Big\}^{1/2}\Vert\hat{\bgamma}^{(s)}-\bgamma^*\Vert_1 \notag \\
    &+ \left(\frac{1}{2}F_uM_3r_0^2+\tau\Vert\hat{\btheta}_{-j}^{(s)}-\btheta_{-j}^*\Vert_{\bSigma}\right)\Vert\hat{\bgamma}^{(s)}-\bgamma^*\Vert_{\bSigma_{-j}} \notag \\
    &+ \left[ M_4^{1/2}\sigma_{\bX}^2\Big\{\frac{\log(p)}{n_s}\Big\}^{1/2} + B_{\bX}^2\frac{\log(p)}{n_s}\right]\tau\Vert\hat{\btheta}_{-j}^{(s)}-\btheta_{-j}^*\Vert_1\Vert\hat{\bgamma}^{(s)}-\bgamma^*\Vert_1 \notag \\
    &+ C_2B_{\bX}^2\Big\{\frac{\log(p)}{n_s}\Big\}^{1/2}\tau\Vert\hat{\btheta}_{-j}^{(s)}-\btheta_{-j}^*\Vert_1.
\end{align}
Next we focus on the error bounds of estimators $\hat{\bbeta}^{(s)}$, $\hat{\btheta}^{(s)}$ and $\hat{\bgamma}^{(s)}$. By Theorem \ref{thmesterrbound}, Proposition \ref{propbetaesterr} and \ref{propgammaesterr}, given that the penalization parameters $\lambda_q^{(s)}\asymp\lambda_e^{(s)}\asymp\lambda_m^{(s)}\asymp\sqrt{\log(p)/N_s}$, the following error bounds 
\begin{align*}
    &\Vert\hat{\bbeta}^{(s)}-\bbeta^*\Vert_{\bSigma} \lesssim \{s_0\log(p)/N_s\}^{1/2}, \quad \Vert\hat{\bbeta}^{(s)}-\bbeta^*\Vert_1 \lesssim s_0\{\log(p)/N_s\}^{1/2}; \\
    &\tau\Vert\hat{\btheta}^{(s)}-\btheta^*\Vert_{\bSigma} \lesssim \{s_0\log(p)/N_s\}^{1/2}, \quad \tau\Vert\hat{\btheta}^{(s)}-\btheta^*\Vert_1 \lesssim s_0\{\log(p)/N_s\}^{1/2}; \\
    &\Vert\hat{\bgamma}^{(s)}-\bgamma^*\Vert_{\bSigma_{-j}} \lesssim \{s_r\log(p)/N_s\}^{1/2}, \quad \Vert\hat{\bgamma}^{(s)}-\bgamma^*\Vert_1 \lesssim s_r\{\log(p)/N_s\}^{1/2} 
\end{align*}
hold with probability no lower than $1-9/p$, provided that $n_1 \gtrsim \max(s_0^2,s_r)\log(p)$. Substituting these bounds into \eqref{scoreapproxmed}, we obtain that 
\begin{align*}
    \big|S_{n_s}^{(s)}(\theta_j^*,\hat{\btheta}_{-j}^{(s)},\hat{\bbeta}^{(s)},\hat{\bgamma}^{(s)}) - S_{n_s}^{(s)}(\theta_j^*,\btheta_{-j}^*,\bbeta^*,\bgamma^*)\big| \lesssim \max(s_0,s_r)\log(p)/(n_sN_s)^{1/2}
\end{align*}
holds with probability at least $1-15/p$. It then follows that
\begin{align}
    \label{E1bound}
    E_1 \leq& \frac{1}{N_b}\sum_{s=1}^{b}n_s|\hat{S}_{n_s}^{(s)}(\theta_j^*)-S_{n_s}^{*}| \lesssim \frac{\max(s_0,s_r)\log(p)}{N_b}\sum_{s=1}^{b}\sqrt{\frac{n_s}{N_s}} \notag \\ 
    \leq& \frac{\max(s_0,s_r)\log(p)\sqrt{b\{1+\log(N_b/n_1)\}}}{N_b}.   
\end{align}
Therefore, we conclude from \eqref{E1bound} that as long as  $\max(s_0,s_r)\log(p)\sqrt{b\{1+\log(N_b/n_1)\}}=o(N_b^{1/2})$, we have $E_1=o_p(N_b^{-1/2})$, and it follows that 
\begin{align*}
   \tau N_b^{1/2}\big(\tilde{\theta}_j^{(b)}-\theta_j^{*}\big)  \stackrel{p} {\rightarrow} N(0,\sigma_{I}^2/\sigma_{\omega}^4).
\end{align*}
\end{proof}

\begin{proof}[Proof of Theorem \ref{theoremconsvarest}.]
    
To begin with, recall from Proposition \ref{propgammaesterr} that $\Vert\hat{\bgamma}^{(s)}-\bgamma^*\Vert_1=o_p(1)$ under condition $s_r\log(p)=o(n_1)$. By the fact that $\Vert\bgamma^*\Vert_1$ is bounded and $\Vert\bX\Vert_{\infty}\leq B_{\bX}$, using Lemma \ref{lemmaquadraformbound} we have
\begin{align*}
    \big|\hat{\omega}_{si}^2-\omega_{si}^2| =& |(\hat{\bgamma}^{(s)}-\bgamma^*)^\top\bX_{si,-j}\bX_{si,-j}^\top(\hat{\bgamma}^{(s)}-\bgamma^*) + 2\bgamma^*\bX_{si,-j}\bX_{si,-j}^\top(\hat{\bgamma}^{(s)}-\bgamma^*) \big| \\
    \leq& \Vert\bX_{si,-j}\bX_{si,-j}^\top\Vert_{\max}\Vert\hat{\bgamma}^{(s)}-\bgamma^*\Vert_1^2 + 2\Vert\bX_{si,-j}\bX_{si,-j}^\top\Vert_{\max}\Vert\bgamma^*\Vert_1\Vert\hat{\bgamma}^{(s)}-\bgamma^*\Vert_1 \\
    =& o_p(1).
\end{align*}
It then follows that 
\[|\hat{\sigma}_{\omega,(b)}^2-\hat{\sigma}_{\omega}^2|\leq (N_b-\hat{s}_m^{(b)})^{-1}\sum_{s=1}^{b}\sum_{i=1}^{n_s}|\hat{\omega}_{si}^2-\omega_{si}^2| \stackrel{p}{\rightarrow} 0 \  \ \text{as} \ \ N_b\to\infty. \] 
Turn to the consistency of $\hat{\sigma}_{I,(b)}^2$. Denote $\hat{M}_b=N_b-\hat{s}_e^{(b)}-\hat{s}_q^{(b)}-\hat{s}_m^{(b)}$, we consider the decomposition
\begin{align}
    \label{sigIesterr}
    \big|\hat{\sigma}_{I,(b)}^2-\sigma_{I}^2\big| =& \bigg|\frac{1}{M_b}\sum_{s=1}^{b}\sum_{i=1}^{n_s}\hat{\omega}_{si}^2 e_{si}(\hat{\bbeta}^{(s)})^2-\mbE\{\omega_{si}^2 e_{si}(\bbeta^*)^2\}\bigg| \notag \\
    \leq& \bigg|\frac{1}{M_b}\sum_{s=1}^{b}\sum_{i=1}^{n_s}(\hat{\omega}_{si}^2-\omega_{si}^2)e_{si}(\hat{\bbeta}^{(s)})^2\bigg| + \bigg|\frac{1}{M_b}\sum_{s=1}^{b}\sum_{i=1}^{n_s}\omega_{si}^2\{e_{si}(\hat{\bbeta}^{(s)})^2-e_{si}(\bbeta^*)^2\}\bigg| \notag \\
    &+ \bigg|\frac{1}{M_b}\sum_{s=1}^{b}\sum_{i=1}^{n_s}(1-\mbE)e_{si}(\bbeta^*)^2\omega_{si}^2\bigg|.
\end{align}
We now analyze the three terms in \eqref{sigIesterr}. For the first term, by the same argument in the proof of Theorem 4.4 in \cite{zhang2023hdesreg}, we have $\max_{i=1,\dots,n_s}|\hat{\omega}_{si}^2-\omega_{si}^2|=o_p(1)$ and ${n_s}^{-1}\sum_{i=1}^{n_s}e_{si}(\hat{\bbeta}^{(s)})^2=O_p(1)$, for $s=1,\dots,b$ under the condition $\max(s_0^2,s_r)\log(p)=o(n_1)$. Combining all these results from $s=1$ to $s=b$, we know that  $\max_{s=1,\dots,b}\max_{i=1,\dots,n_s}|\hat{\omega}_{si}^2-\omega_{si}^2|=o_p(1)$ and $M_b^{-1}\sum_{s=1}^{b}\sum_{i=1}^{n_s}e_{si}(\hat{\bbeta}^{(s)})^2=O_p(1)$. It then follows that the first term $|M_b^{-1}\sum_{s=1}^{b}\sum_{i=1}^{n_s}(\hat{\omega}_{si}^2-\omega_{si}^2)e_{si}(\hat{\bbeta}^{(s)})^2|=o_p(1)$. 

Similarly, from the proof of Theorem 4.4 in \cite{zhang2023hdesreg} we know that $|n_s^{-1}\sum_{i=1}^{n_s}\{e_{si}(\hat{\bbeta}^{(s)})^2-e_{si}(\bbeta^{*})^2\}|=o_p(1)$ for $s=1,\dots,b$ under the condition $s_r\log(p)=o(n_1)$. Then we have $|M_b^{-1}\sum_{s=1}^{b}\sum_{i=1}^{n_s}\{e_{si}(\hat{\bbeta}^{(s)})^2-e_{si}(\bbeta^{*})^2\}|=o_p(1)$. Moreover, $\max_{s=1,\dots,b}\max_{i=1,\dots,n_s}|\omega_{si}|$ is bounded by Assumption \ref{assumpomega}. Thus the second term is $o_p(1)$. The third term is $o_p(1)$ by the weak law of large number. Combining all the above  results with \eqref{sigIesterr} yields that $|\hat{\sigma}_{I,(b)}^2-\sigma_{I}^2|=o_p(1)$.

Further note that the proposed estimator $\hat{\sigma}_{\omega,(b)}^2$ shares the same form with the lasso variance estimator $\sigma_{L}^2$ in \cite{fan2012varest}. Thus we demonstrate the asymptotic normality of $\hat{\sigma}_{\omega,(b)}^2$ by applying Theorem 3 of \cite{fan2012varest}. It is necessary to confirm the conditions they imposed. Firstly, by making a slight modification to their proof, Assumption 1 in \cite{fan2012varest} could be reduced to the condition $\mbE(\omega_{si}\bX_{si,-j})=0$, which is stated in our model \eqref{modelprojreg}. Next, according to the proof of Theorem 4.4 in \cite{zhang2023hdesreg},  Assumption 2 in \cite{fan2012varest} holds under our Assumption \ref{assumpxboundRME}. Moreover, Assumption 3 in \cite{fan2012varest} is the same as $\Vert\bX\Vert_{\infty}\leq B_{\bX}$ in our Assumption \ref{assumpxboundRME} and Assumption 4 in \cite{fan2012varest} holds due to the sub-exponential setting of $\varepsilon$ in our Assumption \ref{assumpomega}. It remains to verify Assumption 7 in \cite{fan2012varest}.  
Denote the design matrix up to the b-th batch as $\tilde{\mbX}_{b}$. Let $M\subset\{1,\dots,p\}$ be an index set with cardinality $|M|=m$ and let $\tilde{\mbX}_{b,M} \in \mbR^{N_b\times m}$ be the submatrix with columns in $M$. We aim to show that
there exist constants $0<\mu_{\min}\leq\mu_{\max}<\infty$ such that
\begin{align}
    \label{assump7eigenineq}
    &P\Big\{\min_{m\leq s_m\log(N_b)}\Lambda_{\min}(N_b^{-1}\tilde{\mbX}_{b,M}^{\top}\tilde{\mbX}_{b,M}) \geq \mu_{\min} \Big\} \to 1 \quad \text{and} \notag \\
    &P\Big\{\max_{m\leq s_m+\min\{N_b,p\}}\Lambda_{\max}(N_b^{-1}\tilde{\mbX}_{b,M}^{\top}\tilde{\mbX}_{b,M}) \leq \mu_{\max} \Big\} \to 1
\end{align}
as $N_b\to\infty$. By Assumption \ref{assumpxboundRME} and \ref{assumpepsconddense}, we derive that 
\begin{align*}
    \Lambda_{\min}(N_b^{-1}\tilde{\mbX}_{b,M}^{\top}\tilde{\mbX}_{b,M}) =& \min_{\Vert\bu\Vert_2=1}\bu^\top(N_b^{-1}\tilde{\mbX}_{b,M}^{\top}\tilde{\mbX}_{b,M})\bu \\
    =& \min_{\Vert\bu\Vert_2=1}\big[\bu^\top\{N_b^{-1}(1-\mbE)\tilde{\mbX}_{b,M}^{\top}\tilde{\mbX}_{b,M}\}\bu + \bu^\top\mbE(N_b^{-1}\tilde{\mbX}_{b,M}^{\top}\tilde{\mbX}_{b,M})\bu \big] \\
    \geq& -\Vert N_b^{-1}(1-\mbE)\tilde{\mbX}_{b,M}^{\top}\tilde{\mbX}_{b,M}\Vert + \Lambda_{\min}\big\{\mbE(N_b^{-1}\tilde{\mbX}_{b,M}^{\top}\tilde{\mbX}_{b,M})\big\} \\
    \geq& -\Vert N_b^{-1}(1-\mbE)\tilde{\mbX}_{b,M}^{\top}\tilde{\mbX}_{b,M}\Vert + \Lambda_p.
\end{align*}
Similarly, we have $\Lambda_{\max}(N_b^{-1}\tilde{\mbX}_{b,M}^{\top}\tilde{\mbX}_{b,M})\leq\Vert N_b^{-1}(1-\mbE)\tilde{\mbX}_{b,M}^{\top}\tilde{\mbX}_{b,M}\Vert + \Lambda_1$. For the term $\Vert N_b^{-1}(1-\mbE)\tilde{\mbX}_{b,M}^{\top}\tilde{\mbX}_{b,M}\Vert$,  
\cite{zhang2023hdesreg} has shown in the proof that for any $t>0$,
\begin{align*}
    P\bigg\{\Vert N_b^{-1}(1-\mbE)\tilde{\mbX}_{b,M}^{\top}\tilde{\mbX}_{b,M}\Vert>2\Lambda_1\sqrt{\frac{2\kappa_4 t}{N_b}}+2B_{\bX}^2\frac{mt}{N_b}\bigg\} < 9^me^{-t}.
\end{align*}
The above bound holds for any fixed $m$. Further denote $w_{N_b}=s_m\log(N_b)$ and $v_{N_b}=s_m+\min\{p,N_b\}$. Taking the union bound over all set $M$ satisfying $|M|\leq w_{N_b}$ yields
\begin{align*}
    &P\bigg\{\min_{m\leq w_{N_b}}\Lambda_{\min}(N_b^{-1}\tilde{\mbX}_{b,M}^{\top}\tilde{\mbX}_{b,M})\leq -2\Lambda_1\sqrt{\frac{2\kappa_4 t}{N_b}}-2B_{\bX}^2\frac{w_{N_b}t}{N_b}+\Lambda_p\bigg\} \\
    <& \sum_{m=0}^{\lfloor w_{N_b}\rfloor}\binom{p}{m} 9^me^{-t} \leq \Big(\frac{9ep}{w_{N_b}}\Big)^{w_{N_b}}e^{-t}.
\end{align*}
Then choosing $t=w_{N_b}\log(9epw_{N_b}^{-1})+\log(N_b)$ and under the condition $s_m^2\log^2(N_b)\log(p)=o(N_b)$ we have
\begin{align*}
    \min_{m\leq w_{N_b}}\Lambda_{\min}(N_b^{-1}\tilde{\mbX}_{b,M}^{\top}\tilde{\mbX}_{b,M}) \geq& -2\sqrt{2\kappa_4}\Lambda_1 \bigg\{\frac{\log(N_b)+w_{N_b}\log(9epw_{N_b}^{-1})}{N_b}\bigg\}^{1/2} \\
    & -2w_{N_b}B_{\bX}^2\frac{\log(N_b)+w_{N_b}\log(9epw_{N_b}^{-1})}{N_b} + \Lambda_p \\
    \geq& \Lambda_p/2,
\end{align*}
with probability at least $1-N_b^{-1}$. Employing similar arguments, we could derive with probability at least $1-N_b^{-1}$,
\begin{align*}
    \min_{m\leq v_{N_b}}\Lambda_{\min}(N_b^{-1}\tilde{\mbX}_{b,M}^{\top}\tilde{\mbX}_{b,M}) \leq& 2\sqrt{2\kappa_4}\Lambda_1 \bigg\{\frac{\log(N_b)+v_{N_b}\log(9epv_{N_b}^{-1})}{N_b}\bigg\}^{1/2} \\
    & +2v_{N_b}B_{\bX}^2\frac{\log(N_b)+v_{N_b}\log(9epv_{N_b}^{-1})}{N_b} + \Lambda_1 \\
    \leq& 2\Lambda_1,
\end{align*}
provided that $s_m^2p^2=o(N_b)$. Therefore, \eqref{assump7eigenineq} holds with $\mu_{\min}=\Lambda_p/2$ and $\mu_{\max}=2\Lambda_1$.
Finally, applying Theorem 3 in \cite{fan2012varest}, we have
$\sqrt{N_b}\big(\hat{\sigma}_{\omega,(b)}^2 - \sigma_{\omega}^2\big) \stackrel{d}{\rightarrow} N(0,\mbE\omega_{si}^4-\sigma_{\omega}^4)$.

\end{proof}

\section{Technical Lemmas}
\begin{lemma}\label{lemmacovest}
    Under Assumption \ref{assumpxboundRME}, we have
    \begin{align*}
        P\left[ \Vert\hat{\bSigma}_s-\bSigma\Vert_{\max} \geq M_4^{1/2}\sigma_{\bX}^2\Big\{\frac{6\log(p)}{n_s}\Big\}^{1/2} + B_{\bX}^2\frac{\log(p)}{n_s} \right] &\leq \frac{2}{p}, \\
        P\left[ \Vert\tilde{\bSigma}_b-\bSigma\Vert_{\max} \geq M_4^{1/2}\sigma_{\bX}^2\Big\{\frac{6\log(p)}{N_b}\Big\}^{1/2} + B_{\bX}^2\frac{\log(p)}{N_b} \right] &\leq \frac{2}{p}.
    \end{align*}
    \begin{proof}
        Note that $|X_{s,ij}X_{s,ik}|\leq B_{\bX}^2$ almost surly. Moreover, for every pair $1\leq j\leq k\leq p$, by Cauchy-schwartz inequality, 
        \begin{align*}
            \mbE\big\{(X_{s,ij}X_{s,ik})^2\big\} \leq \mbE(X_{s,ij}^4)^{1/2}\mbE(X_{s,ik}^4)^{1/2} \leq M_4^{1/2}\mbE(X_{s,ij}^2)M_4^{1/2}\mbE(X_{s,ik}^2) \leq M_4\sigma_{\bX}^4.
        \end{align*}
        Applying Bernstein's inequality (Theorem 2.10 in \cite{Boucheron2013concineqnonasym}), taking $v=n_sM_4\sigma_{\bX}^4, \ c=B_{\bX}^2/3$ and $t=3\log(p)$, it then follows that 
        \begin{align*}
            P\left[\Big|\frac{1}{n_s}\sum_{i=1}^{n_s}(X_{s,ij}X_{s,ik}-\sigma_{ij})\Big| \geq  M_4^{1/2}\sigma_{\bX}^2\Big\{\frac{6\log(p)}{n_s}\Big\}^{1/2} + B_{\bX}^2\frac{\log(p)}{n_s} \right] \leq \frac{2}{p^3}, \\
            P\left[\Big|\frac{1}{N_b}\sum_{s=1}^b\sum_{i=1}^{n_s}(X_{s,ij}X_{s,ik}-\sigma_{ij})\Big| \geq  M_4^{1/2}\sigma_{\bX}^2\Big\{\frac{6\log(p)}{N_b}\Big\}^{1/2} + B_{\bX}^2\frac{\log(p)}{N_b} \right] \leq \frac{2}{p^3}.
        \end{align*}
        Then, by the union bound over $1\leq j\leq k\leq p$, the two inequalities are demonstrated.
    \end{proof}
\end{lemma}

\begin{lemma}
	\label{lemmaquadraformbound}
	For two vectors $\bu=(u_1,\dots, u_p)^\top\in\mbR^p$, $\bv=(v_1, \dots, v_p)^\top\in\mbR^p$ and a matrix $\mbA=(a_{ij})_{i,j}^p\in \mbR^{p\times p}$, we have 
	\begin{align*}
		&\Vert\mbA\bu\Vert_{\infty} \leq \Vert\mbA\Vert_{\max}\Vert\bu\Vert_1, \\
		&\bu^\top\mbA\bv\leq \Vert\mbA\Vert_{\max}\Vert\bu\Vert_1\Vert\bv\Vert_1. 
	\end{align*}    
\end{lemma}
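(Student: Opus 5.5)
The statement just above is Lemma \ref{lemmaquadraformbound}. It is elementary, and I would prove both inequalities directly from the triangle inequality and Hölder's inequality, with no probabilistic input.

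\textbf{First inequality.} Fix a row index $i\in\{1,\dots,p\}$ and write $(\mbA\bu)_i=\sum_{k=1}^p a_{ik}u_k$. By the triangle inequality,
\begin{align*}
|(\mbA\bu)_i|\le \sum_{k=1}^p |a_{ik}||u_k| \le \max_{k}|a_{ik}|\sum_{k=1}^p|u_k| \le \Vert\mbA\Vert_{\max}\Vert\bu\Vert_1 .
\end{align*}
Taking the maximum over $i$ gives $\Vert\mbA\bu\Vert_{\infty}\le\Vert\mbA\Vert_{\max}\Vert\bu\Vert_1$.

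\textbf{Second inequality.} Write $\bu^\top\mbA\bv=\sum_{i}u_i(\mbA\bv)_i$. By Hölder's inequality with the dual pair $(\ell_1,\ell_\infty)$,
\begin{align*}
\bu^\top\mbA\bv\le \Vert\bu\Vert_1\Vert\mbA\bv\Vert_{\infty}.
\end{align*}
Applying the first inequality to $\mbA\bv$ gives $\Vert\mbA\bv\Vert_{\infty}\le\Vert\mbA\Vert_{\max}\Vert\bv\Vert_1$, and combining the two displays yields the claim.

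Alternatively, one can bound the double sum directly:
\begin{align*}
\Big|\sum_{i,k}u_ia_{ik}v_k\Big|\le \Vert\mbA\Vert_{\max}\sum_i|u_i|\sum_k|v_k| .
\end{align*}

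Neither step presents a real obstacle. The only point worth noting is that the second bound controls $|\bu^\top\mbA\bv|$, not merely the signed quantity. This matters because the lemma is used in bounds such as \eqref{a22bound} and \eqref{D22bound}, where $\mbA=\hat{\bSigma}_1-\bSigma$ or $\tilde{\bSigma}_{b-1}-\bSigma$ is symmetric but indefinite. There the two-sided version is what is needed, and the double-sum argument delivers it.
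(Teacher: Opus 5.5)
Your proof is correct and takes the same approach as the paper: a row-wise triangle inequality for $\Vert\mbA\bu\Vert_{\infty}$, then $\ell_1$--$\ell_\infty$ H{\"o}lder combined with the first bound for the bilinear form. Your pairing $\bu^\top(\mbA\bv)\le\Vert\bu\Vert_1\Vert\mbA\bv\Vert_\infty$ is slightly cleaner than the paper's $\Vert\mbA\bu\Vert_\infty\Vert\bv\Vert_1$, which strictly should read $\Vert\mbA^\top\bu\Vert_\infty\Vert\bv\Vert_1$ (harmless, since $\Vert\mbA^\top\Vert_{\max}=\Vert\mbA\Vert_{\max}$); the two-sided version you emphasize already follows from the signed bound applied to $-\bu$.
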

{\bf Proof.} Firstly, we have
\begin{align*}
	\Vert\mbA\bu\Vert_{\infty} = \max\limits_{1\leq i\leq p} \left|\sum_{j=1}^p a_{ij}u_j\right| 
	\leq \max\limits_{1\leq i\leq p} \max\limits_{1\leq j\leq p} |a_{ij}|\sum_{j=1}^p |u_j|  = \Vert\mbA\Vert_{\max}\Vert\bu\Vert_1.
\end{align*}
Then, by H{\"o}lder's inequality
\begin{align*}
	\bu^\top\mbA\bv \leq \Vert\mbA\bu\Vert_{\infty} \Vert\bv\Vert_1 \leq \Vert\mbA\Vert_{\max}\Vert\bu\Vert_1\Vert\bv\Vert_1.
\end{align*}

\begin{lemma}
    \label{lemmabatchsize}
    Let $n_s$ and $N_s$ be the batch size and the cumulative batch size, respectively when the $s$-th data batch arrives, $s=1,2,...,b$. Then we have
    \[
    \sum_{s=1}^b \frac{n_s}{N_s}\leq 1+\log\left(\frac{N_b}{n_1}\right), \qquad  \sum_{s=1}^b \frac{n_s}{\sqrt{N_s}} \leq 2\sqrt{N_b}, \qquad 
    \sum_{s=1}^b \sqrt{\frac{n_s}{N_s}}\leq \sqrt{b+b\log\left(\frac{N_b}{n_1}\right)}.
    \]
\end{lemma}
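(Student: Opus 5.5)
The three inequalities all follow from comparing sums with integrals, using $N_s=N_{s-1}+n_s$ with $N_0=0$ and $N_1=n_1$. Throughout, the $s=1$ term is handled separately, since $N_0=0$ makes the integral comparison degenerate there.

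For the first inequality, the $s=1$ term equals $n_1/N_1=1$. For $s\geq2$, the function $1/x$ is decreasing on $[N_{s-1},N_s]$, so
\[
\frac{n_s}{N_s}=\int_{N_{s-1}}^{N_s}\frac{{\rm d}x}{N_s}\leq\int_{N_{s-1}}^{N_s}\frac{{\rm d}x}{x}=\log\Big(\frac{N_s}{N_{s-1}}\Big).
\]
Summing over $s=2,\dots,b$ telescopes to $\log(N_b/N_1)=\log(N_b/n_1)$. Adding the first term gives the bound $1+\log(N_b/n_1)$.

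For the second inequality, the same monotonicity argument applies to $1/\sqrt{x}$ on $[N_{s-1},N_s]$:
\[
\frac{n_s}{\sqrt{N_s}}\leq\int_{N_{s-1}}^{N_s}\frac{{\rm d}x}{\sqrt{x}}=2\big(\sqrt{N_s}-\sqrt{N_{s-1}}\big).
\]
Here the $s=1$ case needs no special treatment, because the integral over $[0,N_1]$ converges and equals $2\sqrt{n_1}$. Summing over $s=1,\dots,b$ telescopes to $2\sqrt{N_b}$. An equivalent route, which avoids integrals, is to multiply the identity $\sqrt{N_s}-\sqrt{N_{s-1}}=n_s/(\sqrt{N_s}+\sqrt{N_{s-1}})$ through and use $\sqrt{N_s}+\sqrt{N_{s-1}}\leq2\sqrt{N_s}$.

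For the third inequality, apply Cauchy--Schwarz to the $b$ terms:
\[
\sum_{s=1}^b\sqrt{\frac{n_s}{N_s}}\leq\Big(b\sum_{s=1}^b\frac{n_s}{N_s}\Big)^{1/2}.
\]
Inserting the first inequality gives the bound $\sqrt{b+b\log(N_b/n_1)}$.

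None of these steps is difficult. The only point requiring care is the $s=1$ term in the first bound, where the comparison $\int_{N_0}^{N_1}{\rm d}x/x$ diverges because $N_0=0$. This is why that term is bounded directly by $1$, and it accounts for the additive constant in the statement.
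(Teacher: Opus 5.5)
Your proof is correct: the comparisons $n_s/N_s\le\log(N_s/N_{s-1})$ for $s\ge 2$ and $n_s/\sqrt{N_s}\le 2(\sqrt{N_s}-\sqrt{N_{s-1}})$ telescope to the first two bounds, and Cauchy--Schwarz combined with the first bound gives the third. The paper does not prove the lemma itself and only refers to \cite{han2021delasso, luo2021glm}; your sum--integral comparison is the standard argument for such bounds and makes the lemma self-contained.
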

This lemma is widely used in online learning and has been confirmed in the literature, see for example \citep{han2021delasso, luo2021glm}.

\section{Additional numerical results}
\label{app:numres}

\begin{figure}
    \centering
\includegraphics[width=0.9\linewidth]{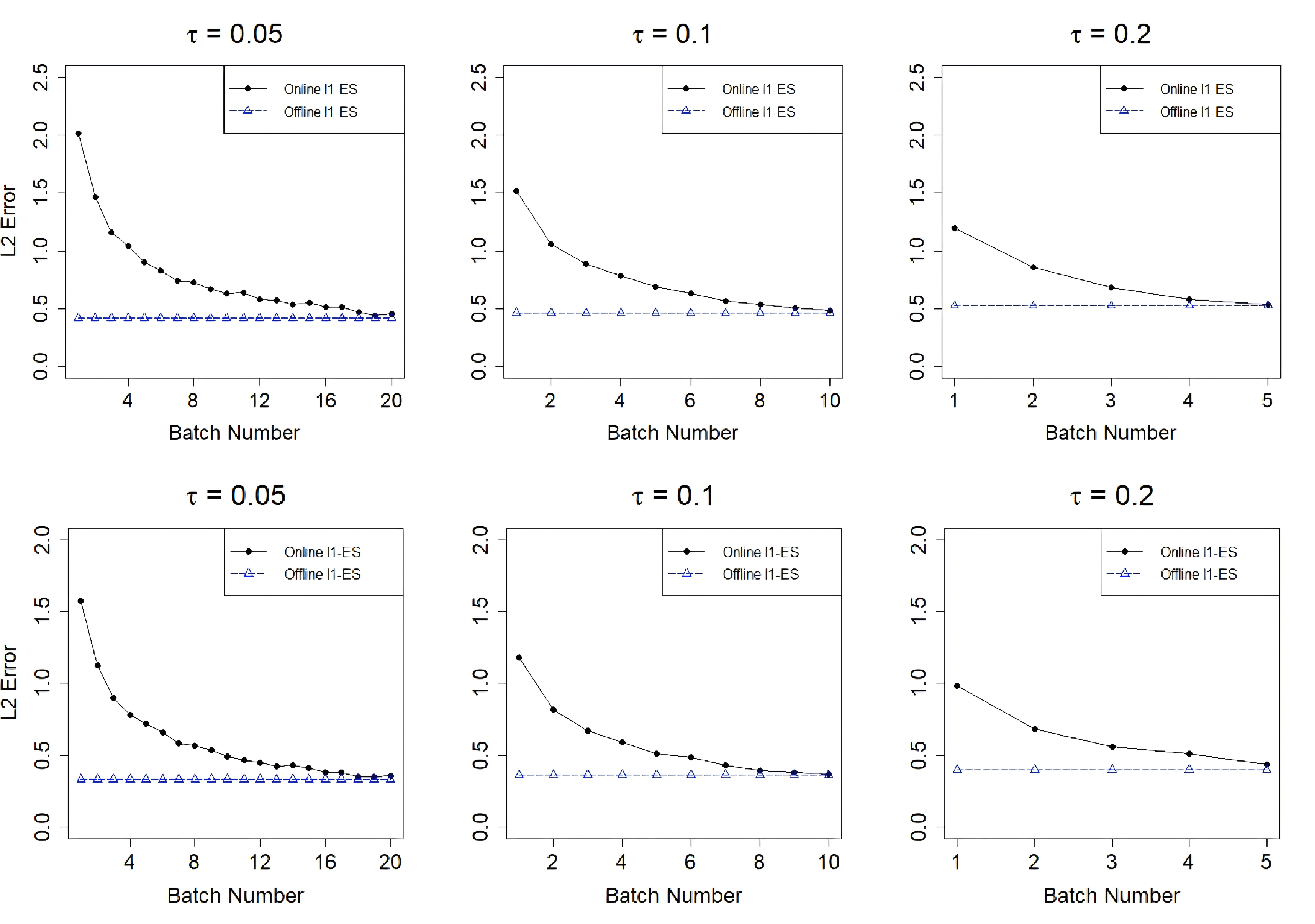}
\caption{Estimation $\ell_2$-errors on each batch under Covariate 2}
    \label{fig2}
\end{figure}

\begin{figure}
    \centering   \includegraphics[width=0.9\linewidth]{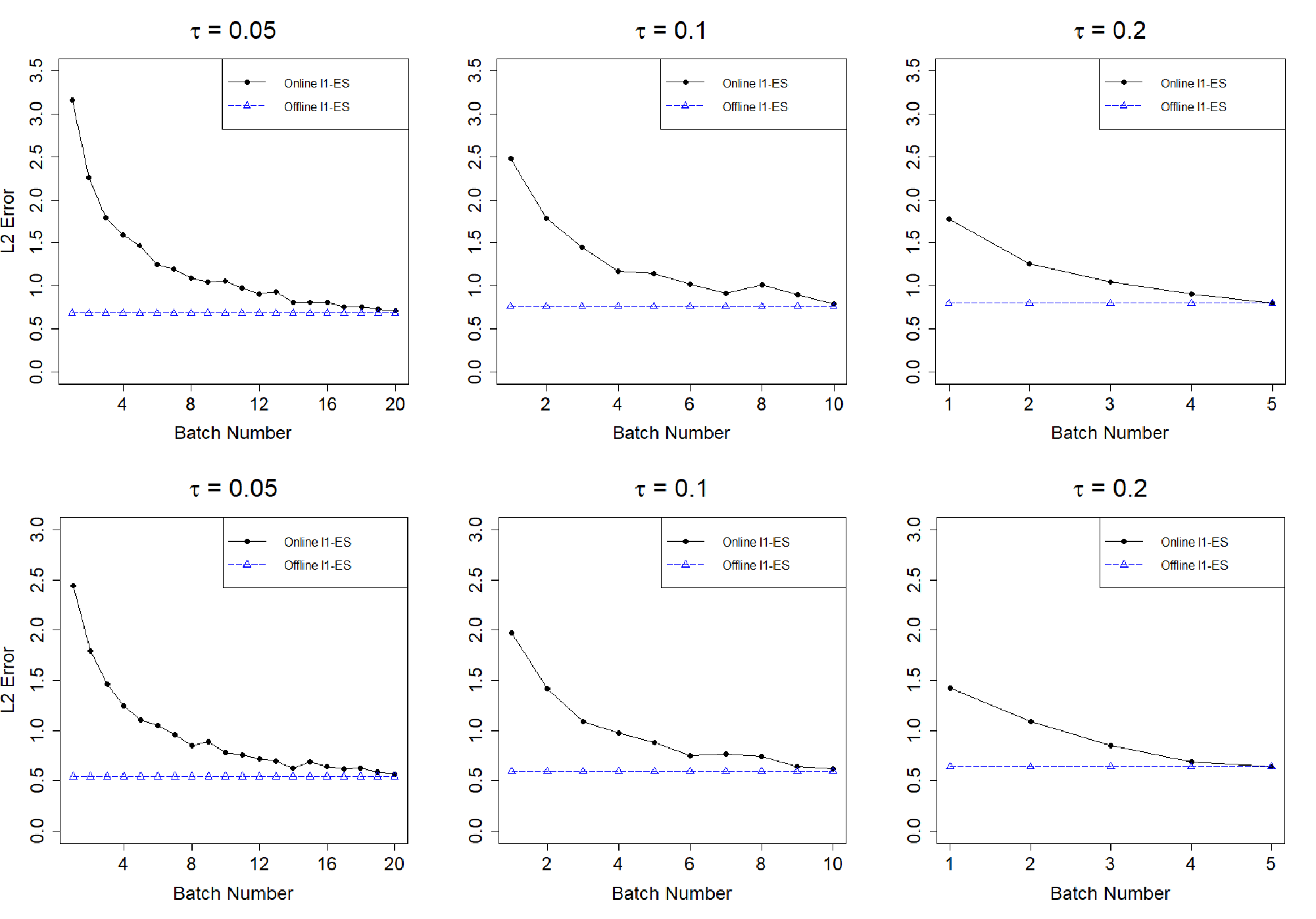}    \caption{Estimation $\ell_2$-errors on each batch under Covariate 3}
    \label{fig3}
\end{figure}

\end{appendices}

\newpage
\normalem
\bibliographystyle{apalike}
\bibliography{bib_OES}

\end{document}